\newcommand\contract{\mathord{\downarrow}}
\newcommand\expand{\mathord{\uparrow}}
\renewcommand{\section}{\@startsection
  {section}%
  {1}%
  {0mm}%
  {-1\baselineskip}%
  {0.5\baselineskip}%
  {\normalfont\large\bfseries}%
}
\renewcommand{\subsection}{\@startsection
  {subsection}%
  {2}%
  {0mm}%
  {-1\baselineskip}%
  {0.5\baselineskip}%
  {\normalfont\large\itshape}%
}
\renewcommand{\subsubsection}{\@startsection
  {subsubsection}%
  {3}%
  {0mm}%
  {-1\baselineskip}%
  {0.5\baselineskip}%
  {\normalfont\itshape}%
}
\newsavebox{\tempbox}
\renewcommand{\@makecaption}[2]{
  \vspace{10pt}
  \sbox{\tempbox}{\textbf{#1.} #2}
  \ifthenelse{\lengthtest{\wd\tempbox > \linewidth}}{
    \textbf{#1.} #2\par
  }{
    \begin{center}
      \textbf{#1.} #2
    \end{center}
  }
}
\numberwithin{equation}{section}
\newtheoremstyle{mythm}%
  {}%
  {}%
  {\itshape}%
  {}%
  {\bfseries}%
  {.}%
  {.5em}%
  {\thmname{#1}~\thmnumber{#2}\ifthenelse{\equal{\thmnote{#3}}{}}{}{~(\thmnote{#3})}}%
\newtheoremstyle{mydefn}%
  {}%
  {}%
  {\upshape}%
  {}%
  {\bfseries}%
  {.}%
  {.5em}%
  {\thmname{#1}~\thmnumber{#2}\ifthenelse{\equal{\thmnote{#3}}{}}{}{~(\thmnote{#3})}}%
\newtheoremstyle{myremark}%
  {}%
  {}%
  {\upshape}%
  {}%
  {\itshape}%
  {.}%
  {.5em}%
  {\thmname{#1}~\thmnumber{#2}\ifthenelse{\equal{\thmnote{#3}}{}}{}{~(\thmnote{#3})}}%
\theoremstyle{mythm}
\newtheorem{theorem}{Theorem}[section]
\newtheorem{lemma}[theorem]{Lemma}
\newtheorem{corollary}[theorem]{Corollary}
\newtheorem{fact}[theorem]{Fact}
\theoremstyle{mydefn}
\newtheorem{example}[theorem]{Example}
\theoremstyle{myremark}
\newtheorem{remark}[theorem]{Remark}
\theoremstyle{mythm}
\newcommand{\uend}{\hfill$\lrcorner$}
\newcommand{\uende}{\eqno\lrcorner}
\newcounter{claimcounter}
\newenvironment{claim}[1][]{
  \renewcommand{\proof}{\smallskip\par\noindent\textit{Proof. }}
  \medskip\par\noindent%
  \ifthenelse{\equal{#1}{}}{%
    \setcounter{claimcounter}{0}\refstepcounter{claimcounter}\textit{Claim~\arabic{claimcounter}.}
  }{%
    \ifthenelse{\equal{#1}{resume}}{%
      \refstepcounter{claimcounter}\textit{Claim~\arabic{claimcounter}.}
    }{%
      \textit{Claim~#1.}
    }
  }
}{
  \par\medskip
}
\newcommand{\case}[1]{\par\medskip\noindent\textit{Case #1: }}
\newenvironment{cs}{
  \begin{description}
    \renewcommand{\case}[1]{\item[\itshape\mdseries Case ##1:]}
  }{
  \end{description}
}
\newlist{caselist}{description}{10}
\setlist[caselist]{font=\itshape\mdseries}
\newlist{eroman}{enumerate}{2}
\setlist[eroman,1]{label=(\roman*)}
\setlist[eroman,2]{label=(\alph*)}
\newlist{ealph}{enumerate}{1}
\setlist[ealph]{label=(\Alph*)}
\newcounter{nlistcounter}
\newenvironment{nlist}[1]{
  \renewcommand{\thenlistcounter}{\upshape(#1.\arabic{nlistcounter})}
  \begin{list}{\bfseries\thenlistcounter}{%
      \usecounter{nlistcounter}
      \setlength{\labelwidth}{1.5em}%
      \setlength{\leftmargin}{\labelwidth}%
      \addtolength{\leftmargin}{\labelsep}%
      \setlength{\listparindent}{0em}%
      \setlength{\topsep}{5pt}%
      \setlength{\itemsep}{5pt}%
      \setlength{\parsep}{0pt}%
    }
  }{
  \end{list}
}
\renewcommand{\phi}{\varphi}
\newcommand{\bigmid}{\;\big|\;}
\renewcommand{\mathbf}[1]{\textit{\bfseries #1}}
\renewcommand{\tilde}{\widetilde}
\renewcommand{\hat}{\widehat}
\renewcommand{\bar}{\overline}
\newcommand{\NN}{{\mathbb N}}
\newcommand{\CB}{{\mathcal B}}
\newcommand{\CL}{{\mathcal L}}
\newcommand{\CN}{{\mathcal N}}
\newcommand{\CS}{{\mathcal S}}
\newcommand{\CT}{{\mathcal T}}
\newcommand{\CX}{{\mathcal X}}
\newcommand{\CY}{{\mathcal Y}}
\newcommand{\CZ}{{\mathcal Z}}
\DeclareMathOperator{\sep}{\textsc{Sep}}
\DeclareMathOperator{\order}{\textsc{Order}}
\DeclareMathOperator{\tangorder}{\textsc{TangOrder}}
\DeclareMathOperator{\size}{\textsc{Size}}
\DeclareMathOperator{\trunc}{\textsc{Truncation}}
\DeclareMathOperator{\find}{\textsc{Find}}
\newcommand{\dagle}{\trianglelefteq}
\newcommand{\dagsle}{\lhd}
\newcommand{\dagri}{\trianglerighteq}
\definecolor{gruen}{rgb}{0,0.6,0.2}
\newcounter{rbcounter}
\newcommand{\KT}{{\mathfrak T}}
\newcommand{\ord}{\operatorname{ord}}
\renewcommand{\vec}[1]{\overrightarrow{#1}}
\title{Computing with Tangles\thanks{A preliminary version of this
    paper appeared in the Proceedings of the 47th Annual Symposium on
    the Theory of Computing (STOC 2015).}}
\author{Martin Grohe and Pascal Schweitzer\\RWTH Aachen University\\\normalsize\{grohe,schweitzer\}@informatik.rwth-aachen.de}
\date{}
\begin{document}

\maketitle

\begin{abstract}
  Tangles of graphs have been introduced by Robertson and Seymour in
  the context of their graph minor theory. Tangles may be viewed as
  describing ``$k$-connected components'' of a graph (though in a
  twisted way). An interesting aspect of tangles is that they cannot only be
  defined for graphs, but more generally for arbitrary connectivity
  functions (that is, integer-valued submodular and symmetric set functions).

  However, tangles are difficult to deal with algorithmically. To
  start with, it is unclear how to represent them, because they are
  families of separations and as such may be exponentially large. Our
  first contribution is a data structure for representing and accessing
  all tangles of a graph up to some fixed order.

  Using this data structure, we can prove an algorithmic version of a
  very general structure theorem due to Carmesin, Diestel, Hamann and
  Hundertmark (for graphs) and Hundertmark (for arbitrary connectivity
  functions) that yields a canonical tree decomposition whose parts
  correspond to the maximal tangles. This may be viewed as a
  generalisation of the decomposition of a graph into its 3-connected
  components. 
\end{abstract}

\section{Introduction}
Tangles are strange objects---yet they are very
useful. Tangles of graphs have been introduced by Robertson and
Seymour \cite{gm10}, and they play an important role in their graph
minor theory (see, e.g. \cite{gm16}). Intuitively, tangles of order
$k$ may be viewed as descriptions of the ``$k$-connected components'' of a
graph. Recall that every graph has a nice and simple decomposition
into its 2-connected components, which are induced subgraphs of the
graph. A graph also has a well-defined decomposition into its
3-connected components. However, the 3-connected components
are not necessarily subgraphs of the graph, but only topological
subgraphs; they may contain so-called ``virtual edges'' not present in
the graph. It is not surprising that for 4-connected components the
situation becomes even more complicated. In fact, for $k\ge 4$ there
is no clear-cut notion of a $k$-connected component of a graph. Tangles
may be seen as one attempt towards such a notion. (Another one, which is
related to tangles, has recently been proposed by
Carmesin, Diestel, Hundertmark and Stein~\cite{cardiehunstei14}.)

The idea of tangles is to describe a ``region'' in a graph (maybe the
presumed $k$-connected component) by pointing to it. Formally,
this means that a tangle assigns to each low-order separation of the
graph (separation of order less than $k$ in the case of $k$-connected
components) a ``big side'', which is the side where the ``region''
described by the tangle is to be found. To turn this into a meaningful
definition, the assignment of ``big sides'' to the separations has to
satisfy certain consistency conditions. Note, however, that a
``region'' described in this way is elusive and does not necessarily
correspond to a subgraph or even just a subset of the vertex set or
edge set of the graph, because the intersection of the ``big sides''
of all separations may be empty. This way of describing a region
may be viewed as dual to a more direct description as (something resembling)
a subgraph. Indeed, tangles are dual to a form of graph decompositions
known as branch decompositions, which are closely related to the
better-known tree decompositions, in a precise technical sense
(see the Duality Lemma~\ref{lem:duality} due to \cite{gm10}).

Carmesin, Diestel, Hamann, and Hundertmark~\cite{cardiehar+13a},
extending
earlier work of Robertson and Seymour~\cite{gm10} (also see Reed~\cite{ree97}), proved that every
graph has a canonical decomposition into parts corresponding to its
tangles of order at most $k$, that is, a canonical decomposition into
its $k$-connected components. Here ``canonical'' means that an
isomorphism between two graphs can be extended to an isomorphism
between their decompositions.

What makes tangles even more interesting, and was actually our
motivation to start this work, is that they can be defined in a very
abstract setting that applies to various ``connectivity measures'' not
only for graphs but also for other structures such as hypergraphs and
matroids. Two examples are the
``cut-rank'' measure for graphs, which leads to the notion of ``rank
width'' \cite{oum05,oumsey06}, and the connectivity function of a
matroid (see, for example, \cite{oxl11}). Such ``connectivity measures'' can be
specified by a symmetric and submodular function defined on  the
subsets of some ground set. Tangles give us an abstract notion of
``k-connected components'' with respect to these connectivity
measures. Hundertmark~\cite{hun11} generalised the decomposition theorem
from graphs to this abstract setting, giving us canonical
decompositions of structures into parts corresponding to the tangles
with respect to arbitrary connectivity measures. (Earlier, Geelen, Gerards, and
Whittle~\cite{geegerwhi09} had already shown the existence of such
decompositions, but not canonical ones.)

However, these decomposition theorems are pure existence theorems;
they are not algorithmic.\footnote{The ``algorithms'' in the title of
  \cite{cardiehar+13a} refer to general strategies for choosing
  the set of separations in a decomposition, they are not
  implementable algorithms.} In fact, it is not clear how to
efficiently compute with tangles at all. Tangles are defined as
families of separations of a ground set (such as the edge set of a graph)
and as such may be exponentially large; this means that a priori there
is only a doubly-exponential upper bound on the number of
tangles. Remarkably, the decomposition theorems mentioned above imply
that for each $k$ there is only a linear number of tangles of order
$k$. But this still does not tell us how to identify them. Our first main
contribution is a data structure that represents all tangles of some
order $k$ and provides us with a ``membership oracle'' for these tangles, that is,
for a given separation we can ask which side of the separation is the
``big side'' with respect to a given tangle. The data structure as
well as the membership oracle can be implemented in polynomial time
for every fixed order $k$. 

Using this data structure, we can then
prove that a canonical decomposition of a structure into parts
corresponding to the tangles of order at most $k$ can be computed in
polynomial time (again for fixed $k$). This is our second main
result. Proving these results, we devise a number of 
algorithmic subroutines that may be useful in other contexts. For
example, we show how to find canonical ``leftmost minimum separations''
between a set and a tangle. All our results apply in the most general
setting where we are only given oracle access to an arbitrary integer-valued
symmetric and submodular connectivity function. Our algorithms rely on
the minimisation of submodular functions
\cite{iwaflefuj01,schrijver00}. We build on algorithmic ideas for computing branch decompositions of connectivity
functions due to Oum and
Seymour~\cite{oumsey07,oumsey06,oumsey06a}. Furthermore, the duality
between tangles and branch decompositions plays an important role in
our proofs.

Our main motivation for this work is isomorphism testing and
canonisation. It is almost self-evident that a canonical decomposition
of a structure into highly connected parts may be useful in this
context, in a similar way as the decomposition of a graph into its
3-connected components is essential for planar graph isomorphism
testing. Tangles and the decompositions of graphs they induce have
already played an important role in the recent polynomial-time
isomorphism test for graph classes with excluded topological subgraphs by
Marx and the first author of this paper~\cite{gromar12}. However,
there it was sufficient to only work with a specific type of tangles
that can be represented by so called ``well-linked'' or
``unbreakable'' sets of bounded size; this way the computational
problems addressed in the present paper could be
circumvented. In other settings, this is impossible. In
\cite{grosch15b}, we use a ``directed version''
of our canonical
decomposition (Theorem~\ref{theo:dcandec}) to design a
polynomial isomorphism test for graph classes of bounded rank width.

A different, more speculative, application of the techniques developed
here may be the logical definability of decompositions, which is
related to well-known open problems such as Seese's conjecture
\cite{see91,cououm07} and the definability of tree decompositions in
monadic second-order logic (see \cite{coueng12}). Definable
decompositions are necessarily canonical, because logics define
isomorphism-invariant objects. Even though our algorithms may
not translate into logical definitions directly, our constructive
arguments may help to come up with such definitions.

While these applications rely on the \emph{canonicity} of the
decompositions we obtain, our data structure for tangles (which is not
canonical) may have other algorithmic applications. Tangles have
already played a role in algorithmic structural graph theory (for
example in \cite{demhajkaw05,kawwol11,gromar12,grokawree13}), but due
to the non-constructive nature of tangles, this role is usually
indirect or implicit. Having direct access to all tangles may
facilitate new applications.

\section{Connectivity Functions, Tangles, and Branch Decompositions}
\label{sec:tangles}
In this section we introduce connectivity functions, branch
decompositions, and tangles and prove some basic results about
them. All lemmas in this section, except the Exactness Lemma and the
Duality Lemma from \cite{gm10}, are simple and can be proved by
standard techniques.

A \emph{connectivity function} on a finite set $U$ is a symmetric and
submodular function
$\kappa\colon 2^U\to\NN$ with $\kappa(\emptyset)=0$. \emph{Symmetric} means that $\kappa(X)=\kappa(\bar
X)$ for all $X\subseteq U$; here and whenever the ground set $U$ is clear from the context we
write $\bar X$ to denote $U\setminus
X$, the complement of~$X$. \emph{Submodular} means that $\kappa(X)+\kappa(Y)\ge\kappa(X\cap
Y)+\kappa(X\cup Y)$ for all $X,Y\subseteq U$. Observe that a symmetric
and submodular set function is also \emph{posimodular}, that is, it
satisfies $\kappa(X)+\kappa(Y)\ge\kappa(X\setminus
Y)+\kappa(Y\setminus X)$ (apply submodularity to $X$ and $\bar
Y$). 

Note that it is no real restriction to assume that the range of
$\kappa$ is the set $\NN$ of nonnegative integers instead of the set
$\mathbb Z$ of all integers and that $\kappa(\emptyset)=0$. For an arbitrary symmetric  and
submodular function
$\kappa'\colon 2^U\to\mathbb Z$, we can work with the function
$\kappa$ defined by
$\kappa(X):=\kappa'(X)-\kappa'(\emptyset)$ instead. It clearly satisfies
$\kappa(\emptyset)=0$. To see that it is nonnegative, observe that for
all $X\subseteq U$,
\[
\kappa'(X)=\frac{1}{2}\big(\kappa'(X)+\kappa'(\bar X)\big)\ge
\frac{1}{2}\big(\kappa'(\emptyset)+\kappa'(U)\big)=\kappa'(\emptyset).
\]
Here the first an third inequalities hold by symmetry, and the second
inequality holds by submodularity.

\begin{example}
  Maybe the most obvious example of a connectivity function is the
  ``cut function'' in a graph $G$. For all subsets $X,Y\subseteq V(G)$, we let
  $E_{X,Y}$ be the set of all edges of $G$ with one endvertex in $X$
  and one endvertex in $Y$. We define a connectivity
  function $\gamma_G$ on $V(G)$ by $\gamma_G(X)=|E_{X,\bar X}|$.
  \uend
\end{example}

\begin{example}[\mbox{Robertson and Seymour~\cite{gm10}}]\label{exa:1}
  Let $G$ be a graph. The \emph{boundary}~$\partial X$ of an edge set $X\subseteq
  E(G)$ is the set of all vertices of $G$ incident with an edge in $X$
  and an edge in $\bar X=E(G)\setminus X$. We define a connectivity
  function $\kappa_G$ on $E(G)$ by $\kappa_G(X)=|\partial X|$. This is
  the connectivity function Robertson and Seymour~\cite{gm10} originally developed
  their theory of tangles for.
  \uend
\end{example}

\begin{example}[Oum and Seymour~\cite{oumsey06}]\label{exa:2}
    Let $G$ be a graph. For all subsets $X,Y\subseteq V(G)$, we let
  $M=M_G(X,Y)$ be the $X\times Y$-matrix over the 2-element field
  $\mathbb F_2$ with entries $M_{xy}=1\iff xy\in E(G)$.  Now we
  define a connectivity function $\rho_G$ on $V(G)$ by letting
  $
  \rho_G(X)
  $, known as the \emph{cut rank} of $X$, 
   be the row rank of the matrix $M_G(X,\bar X)$. This connectivity
   function was introduced by Oum and Seymour to define the \emph{rank
     width} of graphs, which approximates the \emph{clique width}, but
   has better algorithmic properties.
   \uend
\end{example}

\begin{example}\label{exa:3}
  Let $M$ be a matroid with ground set $E$ and rank function $r$. (The rank of a set
  $X\subseteq E$ is defined to be the maximum size of an independent set
  contained in $X$.) The connectivity function of $M$ is the set
  function $\kappa_M:2^E\to\NN$ defined by $\kappa_M(X)=r(X)+r(\bar
  X)-r(E)$ (see, for example, \cite{oxl11}).
  \uend
\end{example}

For the rest of this section, let $\kappa$ be a connectivity function
on a finite set $U$.
We often think of a subset $Z\subseteq U$ as a \emph{separation} of $U$
into $Z$ and $\bar Z$ and of $\kappa(Z)$ as the \emph{order} of this
separation; consequently, we also refer to $\kappa(Z)$ as the
\emph{order of $Z$}. For disjoint sets $X,Y\subseteq U$, an
\emph{$(X,Y)$-separation} is a set $Z\subseteq U$ such that
$X\subseteq Z\subseteq \bar Y$. Such a separation $Z$ is minimum if
its order is minimum, that is, if $\kappa(Z)\le\kappa(Z')$ for all
$(X,Y)$-separations $Z'$. It is an easy consequence of the
submodularity of $\kappa$ that there is a unique minimum
$(X,Y)$-separation $Z$ such that $Z\subseteq Z'$ for all other minimum
$(X,Y)$-separations $Z'$. We call $Z$ the \emph{leftmost minimum
  $(X,Y)$-separation}. There is also a unique \emph{rightmost minimum
  $(X,Y)$-separation}, which is easily seen to be the complement of
the leftmost minimum $(Y,X)$-separation.

\subsection{Tangles}

A \emph{$\kappa$-tangle} of order $k\ge0$ is a set $\CT\subseteq 2^U$
satisfying the following conditions.\footnote{Our definition of tangle
  is ``dual'' to the one mostly found in the literature, e.g.~\cite{geegerwhi09,hun11}. In our
  definition, the ``big side'' of a separation belongs to the tangle,
  which seems natural if one thinks of a tangle as ``pointing to a
  region'' (as described in the introduction). But of course the
  definitions yield equivalent theories.}
  \begin{nlist}{T}
  \setcounter{nlistcounter}{-1}
  \item\label{li:t0}
    $\kappa(X)<k$ for all $X\in\CT$, 
  \item\label{li:t1}
    For all $X\subseteq U$ with $\kappa(X)<k$, either $X\in\CT$ or
    $\bar X\in\CT$.
  \item\label{li:t2}
    $X_1\cap X_2\cap X_3\neq\emptyset$ for all $X_1,X_2,X_3\in\CT$.
  \item\label{li:t3}
    $\CT$ does not contain any singletons, that is, $\{a\}\not\in\CT$ for all $a\in U$.
\end{nlist}
We denote the order of a $\kappa$-tangle $\CT$ by $\ord(\CT)$.

\begin{example}
  Let $G$ be a graph and $H$ a $3$-connected subgraph of $G$. For a set
  $X\subseteq E(G)$ we let $V(X)$ be the set of all the endvertices of
  edges in $X$. So $\partial(X)=V(X)\cap V(\bar X)$. Let $\CT$ be the
  set of all subsets $X\subseteq E(G)$ such that
  $\kappa_G(X)=|\partial(X)|<3$ (see Example~\ref{exa:1})  and $V(H)\subseteq V(X)$. It is not
  difficult to prove that $\CT$ is a $\kappa_G$-tangle of order $3$ in $G$.

  Actually, essentially the same argument works if $H$ is a
  subdivision of a 3-connected graph and $\CT$ the set of all
  $X\subseteq E(G)$ such that $\kappa_G(X)=|\partial(X)|<3$ and $X$
  contains all vertices of $H$ of degree at least $3$.
  \uend
\end{example}

\begin{example}[\mbox{Robertson and Seymour~\cite[Section 7]{gm10}}]\label{exa:4}
  Let $G$ be a graph and $H\subseteq G$ a $(k\times k)$-grid. Let
  $\CT$ be the set of all $X\subseteq E(G)$ such that $\kappa_G(X)<k$
  and $X$ contains all edges of some row of
  the grid. Then $\CT$ is a $\kappa_G$-tangle of order $k$.
  \uend
\end{example}

Let
$\CT,\CT'$ be $\kappa$-tangles. If $\CT'\subseteq\CT$, we say that
$\CT$ is an \emph{extension} of $\CT'$. The tangles $\CT$ and $\CT'$
are \emph{incomparable} (we write $\CT\bot\CT'$) if neither is an
extension of the other. 
The \emph{truncation} 
of $\CT$ to order $k\le\ord(\CT)$ is the set
$
\{X\in\CT\mid\kappa(X)<k\},
$
which is obviously a tangle of order $k$. Observe that if $\CT$ is
an extension of $\CT'$, then $\ord(\CT')\le\ord(\CT)$, and $\CT'$ is
the truncation of $\CT$ to order $\ord(\CT')$. 

\begin{remark}\label{rem:tangle-order1}
There is a small technical issue that one needs to be aware of, but
that never causes any real problems: if we view tangles as families of
sets, then their order is not always well-defined. Indeed, if there is
no set $X$ of order $\kappa(X)=k-1$, then a tangle of order $k$
contains exactly the same sets as its truncation to order $k-1$. In such a situation, we
have to explicitly annotate a tangle with its order, formally viewing a
tangle as a pair $(\CT,k)$ where $\CT\subseteq 2^U$ and $k\ge 0$. We always view a
tangle of order $k$ and its truncation to order $k-1$ as distinct
tangles, even if they contain exactly the same sets.%
\end{remark}

A \emph{$(\CT,\CT')$-separation} is a set $Z\subseteq U$ such that
$Z\in\CT$ and $\bar Z\in\CT'$. Obviously, if $Z$ is a
$(\CT,\CT')$-separation then $\bar Z$ is a
$(\CT',\CT)$-separation. Observe that there is a
$(\CT,\CT')$-separation if and only if $\CT$ and $\CT'$ are
incomparable. The \emph{order} of a $(\CT,\CT')$-separation $Z$ is
$\kappa(Z)$. A $(\CT,\CT')$-separation $Z$ is \emph{minimum} if its
order is minimum.

\begin{lemma}\label{lem:lm-tansep}
  Let $\kappa$ be a connectivity function on a set $U$, and let
  $\CT,\CT'$ be incomparable tangles. Then there is a (unique) minimum
  $(\CT,\CT')$-separation $Z(\CT,\CT')$ such that
  $Z(\CT,\CT')\subseteq Z'$ for all minimum $(\CT,\CT')$-separations
  $Z'$.

  We call $Z(\CT,\CT')$ the \emph{leftmost minimum $(\CT,\CT')$-separation}.
\end{lemma}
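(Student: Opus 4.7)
The approach is to establish a closure property: the intersection of any two minimum $(\CT,\CT')$-separations is again a minimum $(\CT,\CT')$-separation. Once this is in place, $Z(\CT,\CT')$ is obtained as the intersection of the (finite, since $U$ is finite) family of all minimum $(\CT,\CT')$-separations, and uniqueness is immediate from the defining property $Z(\CT,\CT')\subseteq Z'$.

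For the closure step, take two minimum $(\CT,\CT')$-separations $Z_1,Z_2$ of common order $k$. Axiom (T0) forces $k<\min(\ord(\CT),\ord(\CT'))$, since $Z_i\in\CT$ and $\bar Z_i\in\CT'$, and submodularity yields
\[
\kappa(Z_1\cap Z_2)+\kappa(Z_1\cup Z_2)\le 2k.
\]
The target is to show $Z_1\cap Z_2\in\CT$ and $\overline{Z_1\cap Z_2}\in\CT'$, so that $Z_1\cap Z_2$ is itself a $(\CT,\CT')$-separation; minimality of $k$ will then force $\kappa(Z_1\cap Z_2)=k$.

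The identification of the correct side of $Z_1\cap Z_2$ in each tangle will use (T1) together with (T2). Assuming first that $\kappa(Z_1\cap Z_2)\le k<\ord(\CT)$, axiom (T1) splits into two options, and placing $\overline{Z_1\cap Z_2}=\bar Z_1\cup\bar Z_2$ into $\CT$ would produce three tangle members $Z_1,Z_2,\bar Z_1\cup\bar Z_2$ whose common intersection is empty, violating (T2); hence $Z_1\cap Z_2\in\CT$. The symmetric argument applied to $\bar Z_1,\bar Z_2\in\CT'$ puts $\overline{Z_1\cap Z_2}$ in $\CT'$. To rule out the a priori possibility $\kappa(Z_1\cap Z_2)>k$, which by the submodularity bound would force $\kappa(Z_1\cup Z_2)<k$, I apply the same (T1)/(T2) reasoning to $Z_1\cup Z_2$: one obtains $Z_1\cup Z_2\in\CT$ and $\overline{Z_1\cup Z_2}\in\CT'$, exhibiting a $(\CT,\CT')$-separation of order strictly less than~$k$ and contradicting the minimality of~$k$.

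The main obstacle is the bookkeeping around the tangle axioms: each appeal to (T1) first requires the order bound $\kappa<\min(\ord(\CT),\ord(\CT'))$, and each appeal to (T2) depends on recognising the correct triple of tangle members whose common intersection is empty. With the closure step established, intersecting the finite family of all minimum $(\CT,\CT')$-separations yields $Z(\CT,\CT')$ and completes the proof.
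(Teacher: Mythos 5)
Your proof is correct and takes essentially the same route as the paper: the heart of both arguments is the submodularity inequality $\kappa(Z_1\cap Z_2)+\kappa(Z_1\cup Z_2)\le 2k$ combined with the tangle axioms (T1), (T2) to show that $Z_1\cup Z_2$ cannot be a cheaper separation and that $Z_1\cap Z_2$ is again a minimum $(\CT,\CT')$-separation. The only organizational difference is in how the candidate leftmost separation is produced: the paper picks a minimum $(\CT,\CT')$-separation $Z$ of least cardinality $|Z|$ and uses the same closure computation to conclude $Z=Z\cap Z'$ for every other minimum separation $Z'$, whereas you prove closure of the family of minimum separations under intersection and then take the intersection of the whole (finite) family. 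Both yield the same object and rely on identical estimates; your version arguably separates the closure lemma more cleanly, while the paper's minimal-cardinality witness avoids explicitly quantifying over the whole family.
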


\begin{proof}%
    Let $Z=Z(\CT,\CT')$ be a minimum $(\CT,\CT')$-separation of minimum size $|Z|$,
  and let $Z'$ be another minimum $(\CT,\CT')$-separation. We shall
  prove that $Z\subseteq Z'$. 

  Let $k:=\kappa(Z)=\kappa(Z')<\min\{\ord(\CT),\ord(\CT')\}$. We claim
  that
  \begin{equation}
    \label{eq:cantan1}
    \kappa(Z\cup Z')\ge k.
  \end{equation}
  Suppose for contradiction that $\kappa(Z\cup Z')<k$. Then $Z\cup
  Z'\in\CT$, because $\kappa(Z\cup Z')<\ord(\CT)$ and $\bar{Z\cup
    Z'}\cap Z=\emptyset$. Furthermore, $\bar{Z\cup Z'}\in\CT'$,
  because $\kappa(\bar{Z\cup Z'})=\kappa(Z\cup Z')<\ord(\CT')$ and
  $\bar Z\cap\bar Z'\cap (Z\cup Z')=\emptyset$. Thus $Z\cup Z'$ is a
  $(\CT,\CT')$-separation of smaller order than $Z$. This
  contradicts the minimality of $Z$.

  By submodularity, 
  \begin{equation}
    \label{eq:cantan2}
    \kappa(Z\cap Z')\le k.
  \end{equation}
  A similar argument as above shows that $Z\cap Z'$ is a
  $(\CT,\CT')$-separation. Thus, by the minimality of $Z$, we have
  $\kappa(Z\cap Z')=k$, and by the minimality of $|Z|$ we have
  $|Z|\le|Z\cap Z'|$. This implies $Z=Z\cap Z'$ and thus $Z\subseteq Z'$.
\end{proof}

Now that we have defined $(X,Y)$-separations for sets $X,Y$ and
$(\CT,\CT')$-separations for tangles $\CT,\CT'$, we also need to
define combinations of both. For a $\kappa$-tangle $\CT$ and a set
$X\subseteq U$ a \emph{$(\CT,X)$-separation} is a set $Z\in\CT$ such
that $Z\subseteq\bar X$. Note that even if $X\not\in\CT$, such a
separation does not always exist.  A $(\CT,X)$-separation $Z$ is
\emph{minimum} if its order is minimum. $Z$ is \emph{leftmost minimum}
(\emph{rightmost minimum}) if $Z\subseteq Z'$ for all minimum
$(\CT,X)$-separations $Z'$.  By essentially the same submodularity
arguments as those in the proof of Lemma~\ref{lem:lm-tansep}, it is
easy to prove that if there is a $(\CT,X)$-separation then there is a
unique leftmost minimum $(\CT,X)$-separation and a unique rightmost
minimum $(\CT,X)$-separation.

\subsection{Branch Decompositions and Branch Width}
A \emph{cubic tree} is a
tree where every node that is not a leaf has degree~$3$. An
\emph{oriented edge} of a tree $T$ is a pair $(s,t)$, where $st\in
E(T)$. We denote the set of all oriented edges of $T$ by $\vec E(T)$
and the set of leaves of $T$ by $L(T)$.  A \emph{partial
  decomposition} of $\kappa$ is a pair $(T,\tilde\xi)$, where $T$ is a cubic
tree and $\tilde\xi\colon\vec E(T)\to 2^U$ such that $\tilde\xi(s,t)=\bar{\tilde\xi(t,s)}$
for all $st\in E(T)$ and $\tilde\xi(s,t_1)\cup\tilde\xi(s,t_2)\cup\tilde\xi(s,t_3)=U$
for all non-leaf nodes $s\in V(T)$ with neighbours $t_1,t_2,t_3$. The
partial decomposition is \emph{exact} if for all non-leaf nodes $s\in
V(T)$ with neighbours $t_1,t_2,t_3$ the sets $\tilde\xi(s,t_1),
\tilde\xi(s,t_2),\tilde\xi(s,t_3)$ are mutually disjoint. To simplify the
notation, for all leaves $u\in L(T)$ we let $\xi(u):=\tilde\xi(t,u)$, where
$t$ is the neighbour of $u$ in~$T$. Note that for an exact partial
decomposition, the values $\xi(u)$ at the leaves $u$ determine all other
values $\tilde\xi(s,t)$, and the sets $\xi(u)$ for the leaves $u$ form
a partition of $U$. We say that a partial
decomposition $(T,\tilde\xi)$ is \emph{over} a set $\CX\subseteq 2^U$ if
$\xi(u)\in\CX$ for all leaves $u\in L(T)$. Finally, a \emph{branch
  decomposition} of $\kappa$ is an exact partial decomposition over the set
$\CS_U=\{\{u\}\mid u\in U\}$ of all singletons.\footnote{Since a branch
  decomposition is exact, it suffices to specify the values $\xi(t)$
  at the leaves, and as these values are singletons, we can also
  define $\xi$ as a mapping from $L(T)$ to $U$, which must be
  bijective.}

The \emph{width} of a partial decomposition $(T,\tilde\xi)$ is the maximum of the
values $\kappa(\tilde\xi(s,t))$ for $(s,t)\in \vec E(T)$. The \emph{branch width} of
$\kappa$ is the minimum of the widths of all branch decompositions of
$\kappa$.

\begin{example}
  The branch width of $\kappa_G$ (Example~\ref{exa:1}) is known as the
  \emph{branch width} of the graph $G$. The branch width of $\rho_G$
  (Example~\ref{exa:2}) is known as the \emph{rank width} of 
  $G$. The branch width of $\kappa_M$ (Example~\ref{exa:3}) is known
  as the branch width of the matroid $M$.
  \uend
\end{example}

The following Exactness Lemma and Duality Lemma due to Robertson and
Seymour~\cite{gm10} are fundamental results relating decompositions
and tangles. For the reader's convenience, we include proofs of these
lemmas.

\begin{lemma}[\mbox{Exactness Lemma \cite{gm10}}]
  Let $(T,\tilde\xi)$ be a partial decomposition of $\kappa$. Then
  there is a function $\tilde\xi':\vec E(T)\to 2^U$ such that
  $(T,\tilde\xi')$ is an exact partial decomposition of $\kappa$
  satisfying $\kappa(\tilde\xi'(s,t))\le \kappa(\tilde\xi(s,t))$ for
  all $(s,t)\in\vec E(T)$ and $\xi'(t)\subseteq\xi(t)$ for all leaves
  $t\in L(T)$.
\end{lemma}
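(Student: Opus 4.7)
The plan is to turn $(T,\tilde\xi)$ into an exact partial decomposition by iteratively modifying $\tilde\xi$ at non-exact internal nodes, using the posimodularity of $\kappa$ (consequence of submodularity plus symmetry, as remarked in the paragraph preceding Example~\ref{exa:1}) to keep the $\kappa$-value at every edge non-increasing. The leaf-subset condition $\xi'(t)\subseteq\xi(t)$ will be automatic: each modification that touches an edge incident to a leaf $t$ only shrinks the corresponding outgoing set.

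Suppose at some internal node $s$ with neighbours $t_1,t_2,t_3$ the outgoing sets $A_i:=\tilde\xi(s,t_i)$ are not pairwise disjoint; say $A_1\cap A_2\neq\emptyset$. Posimodularity gives
\[
\kappa(A_1\setminus A_2)+\kappa(A_2\setminus A_1)\le\kappa(A_1)+\kappa(A_2),
\]
so at least one of $\kappa(A_1\setminus A_2)\le\kappa(A_1)$ and $\kappa(A_2\setminus A_1)\le\kappa(A_2)$ holds. In the first case I would redefine $\tilde\xi(s,t_1):=A_1\setminus A_2$ and $\tilde\xi(t_1,s):=\bar{A_1}\cup A_2$; the second case is symmetric. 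This preserves the partial-decomposition axioms: at $s$ the union of the three outgoing sets is unchanged, and at $t_1$ the only altered outgoing set $\tilde\xi(t_1,s)$ merely grew, so coverage still holds there; the other edges are untouched. By choice of case, no $\kappa(\tilde\xi(s',t'))$ increases.

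The main obstacle will be termination. Although the overlap $A_1\cap A_2$ at $s$ vanishes after the modification, enlarging $\tilde\xi(t_1,s)$ may create new overlaps with the other two outgoing sets at $t_1$, so overlaps can migrate across the tree rather than disappear. To handle this I plan to order modifications by a lexicographic potential: primarily the multiset $(\kappa(\tilde\xi(s,t)))_{(s,t)\in\vec{E}(T)}$ sorted in decreasing order, which is integer-valued, bounded below, and weakly decreased by every modification; and secondarily a suitable measure of non-exactness such as $\sum_{s}\sum_{i<j}|A^s_i\cap A^s_j|$. Once the primary tuple stabilises, all subsequent modifications preserve every $\kappa$-value, and I would argue that a careful choice of which overlap to resolve at each step then strictly decreases the secondary measure, forcing the process to terminate at an exact decomposition.

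A possible alternative I would also consider is a one-shot construction via a leaf-assignment $\ell\colon U\to L(T)$ satisfying $u\in\xi(\ell(u))$ for every $u$, setting $\xi'(v):=\ell^{-1}(v)$: existence of such $\ell$ follows from a tracing argument (start at any node; at each internal node follow some outgoing edge containing the element, using coverage; this never backtracks, so a leaf is eventually reached). Exactness and the leaf-subset condition are then immediate. The hard part in this route is selecting $\ell$ so that for every oriented edge $(s,t)$ the induced set $\{u:\ell(u)\text{ lies on the }t\text{-side}\}$ has $\kappa$-value at most $\kappa(\tilde\xi(s,t))$, which is essentially a global consistency condition and again the place where the real work lies.
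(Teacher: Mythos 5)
Your local move is right — at an internal node $s$, posimodularity lets you uncross an overlapping pair without increasing any $\kappa$-value — and your observation that leaf sets only shrink is also correct, since each modification shrinks an outgoing set of $s$. The gap is in termination: the secondary measure $\sum_s\sum_{i<j}|A^s_i\cap A^s_j|$ does not in general decrease, even after the primary multiset stabilises and even with a careful choice rule. When you replace $A_1$ by $A_1\setminus A_2$ at $s$, the set $\tilde\xi(t_1,s)$ grows by $A_1\cap A_2$; if $t_1$ is internal with other outgoing sets $B_1,B_2$, then $B_1\cup B_2\supseteq \bar{\tilde\xi(t_1,s)}=A_1\supseteq A_1\cap A_2$, so the overlap sum at $t_1$ goes up by $|A_1\cap A_2\cap B_1|+|A_1\cap A_2\cap B_2|\ge |A_1\cap A_2|$, and up to $2|A_1\cap A_2|$ when the shifted mass lands in $B_1\cap B_2$ — which can exceed the decrease at $s$. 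So the proposed potential can go up.

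The paper sidesteps this by passing to a rooted binary tree $T^b$ and pushing $\tilde\xi$ to the nodes, which introduces a parent/child asymmetry that your symmetric potential lacks. Their modifications either shrink a child's set with the parent fixed (their Cases~1a and~2), which strictly decreases the total size $\sum_t|\xi_i(t)|$, or enlarge the parent's set, but only when this strictly decreases its $\kappa$-value (Case~1b), so the total weight $\sum_t\kappa(\xi_i(t))$ strictly drops; lexicographically this terminates. You can port this directly to the cubic tree: fix a root, and at each node prefer to shrink the outgoing set toward a child. If at an overlap $A_i\cap A_j\neq\emptyset$ with $A_i$ toward a child and $A_j$ toward the parent you have $\kappa(A_i\setminus A_j)>\kappa(A_i)$, posimodularity forces $\kappa(A_j\setminus A_i)<\kappa(A_j)$, so shrinking $A_j$ strictly decreases the primary potential. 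The correct secondary potential is then the sum of $|\tilde\xi(s,t)|$ over oriented edges pointing away from the root (towards-child total size), not the overlap sum: a towards-child shrink decreases it strictly and does not touch any other towards-child edge, since the grown set at $t_1$ points toward the parent. Your alternative ``one-shot'' leaf-assignment route is genuinely different, but as you yourself note there is no clear mechanism to control the $\kappa$-values on the induced interior separations, and I would not expect that approach to close without substantial additional machinery.
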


\begin{proof}
    It is convenient in this proof to work with binary rooted trees
  instead of cubic trees. To turn $T$ into a binary tree $T^b$, we choose
  an arbitrary edge, say, $s^bt^b\in E(T)$, and subdivide it,
  inserting a new node $r^b$. We make $r^b$ the root of our new binary
  tree $T^b$. Now we can ``push'' the mapping $\tilde\xi$ from the
  edges of the tree to the nodes. That is, we define a mapping
  $\xi^b:V(T)\to 2^U$ by 
  \begin{itemize}
  \item $\xi^b(r^b):=U$;
  \item $\xi^b(s^b):=\tilde\xi(t^b,s^b)$ and
    $\xi^b(t^b):=\tilde\xi(s^b,t^b)$;
  \item $\xi^b(t):=\tilde\xi(s,t)$ for all nodes $t\in
    V(T^b)\setminus\{r^b,s^b,t^b\}$ with parent $s$.
  \end{itemize}
  Observe that for all nodes $s\in V(T^b)$ with children $t_1,t_2$ it
  holds that
  $
  \xi^b(s)\subseteq\xi^b(t_1)\cup\xi^b(t_2).
  $
  Furthermore, the trees $T$ and $T^b$ have the same leaves and
  for all leaves $t$ we have $\xi^b(t)=\xi(t)$.

  In general, if $(T',r')$ is rooted binary tree and $\xi':V(T')\to 2^U$
  such that $\xi'(r')=U$ and $\xi'$ satisfies 
    \begin{equation}
    \label{eq:3}
    \xi'(s)\subseteq\xi'(t_1)\cup\xi'(t_2),
  \end{equation}
  we call
  $(T',r',\xi')$ a \emph{binary partial decomposition}. The
  decomposition is \emph{exact} at a node $s$ with children $t_1,t_2$
  if \eqref{eq:3} can be strengthened to
   \begin{equation}
    \label{eq:4}
    \xi'(s)=\xi'(t_1)\uplus\xi'(t_2),
  \end{equation}
  where $\uplus$ denotes disjoint union.
  The decomposition $(T',r',\xi')$ is \emph{exact} if it is exact at
  all inner nodes.
  We have seen how we can turn a partial decomposition into a binary
  partial decomposition. It is also easy to reverse the construction
  and turn an exact binary
  partial decomposition $(T',r',\xi')$ into an exact partial
  decomposition $(T'',\tilde\xi'')$.

  We will
  iteratively construct a sequence $\xi_1,\ldots,\xi_m$ of mappings
  from $V(T^b)$ to $2^U$ such that
  $(T^b,r^b,\xi_1),\ldots,(T^b,r^b,\xi_m)$ are binary partial
  decompositions satisfying the following invariants for all $i\in[m-1]$ and nodes $t\in V(T^b)$:
  \begin{eroman}
  \item $\kappa(\xi_{i+1}(t))\le \kappa(\xi_{i}(t))$;
  \item either $\xi_{i+1}(t)\subseteq\xi_{i}(t)$ or
    $\kappa(\xi_{i+1}(t))< \kappa(\xi_{i}(t))$;
  \item if $t$ is a leaf of $T^b$ then
    $\xi_{i+1}(t)\subseteq\xi_{i}(t)$.
  \end{eroman}
  Furthermore, the decomposition $(T^b,r^b,\xi_m)$ will be
  exact.
  
  We let $\xi_1:=\xi^b$.
  In the inductive step, we assume that we have
  defined $\xi_i$. If $(T^b,r^b,\xi_i)$ is exact, we let
  $m:=i$ and stop the construction. Otherwise, we pick an arbitrary node $s\in V(T^b)$ with
  children $t_1,t_2$ such that 
  $(T^b,r^b,\xi_i)$ is not exact at $s$, that is, either
  $\xi_i(s)\subset\xi_i(t_1)\cup\xi_i(t_2)$ or
  $\xi_i(t_1)\cap\xi_i(t_2)\neq\emptyset$. We let $X:=\xi_i(s)$ and
  $Y_p:=\xi_i(t_p)$ for $p=1,2$.

  In each of
  the following cases, we only modify $\xi_i$ at the nodes
  $s,t_1,t_2$ and let $\xi_{i+1}(u):=\xi_i(u)$ for all
  $u\in V(T^b)\setminus\{s,t_1,t_2\}$.

  \smallskip\noindent
  \textit{Case 1:}
  $X\subset Y_1\cup Y_2$.
  
  \smallskip\noindent
  \textit{Case 1a:}
  $\kappa(X\cap Y_p)\le\kappa(Y_p)$ for $p=1,2$.\\
  We let
  $\xi_{i+1}(s):=\xi_i(s)$ and
  $\xi_{i+1}(t_p):=X\cap Y_p$ for $p=1,2$. 

  Note that in this case we have
  $\kappa(\xi_{i+1}(u))\le \kappa(\xi_{i}(u))$ and $\xi_{i+1}(u)\subseteq\xi_{i}(u)$ for all
  nodes $u$ and either $\xi_{i+1}(t_1)\subset\xi_{i}(t_1)$
  or $\xi_{i+1}(t_2)\subset\xi_{i}(t_2)$.

  \smallskip\noindent
  \textit{Case 1b:}
  $\kappa(X\cap Y_p)>\kappa(Y_p)$ for some $p\in\{1,2\}$.\\
  For
  $p=1,2$, we let $\xi_{i+1}(t_p):=Y_p$.

  By submodularity we
  have $\kappa(X\cup Y_p)<\kappa(X)$ for some $p\in\{1,2\}$. 
  If $\kappa(X\cup Y_1)<\kappa(X)$ we let $\xi_{i+1}(s):=X\cup
  Y_1$, and otherwise we let $\xi_{i+1}(s):=X\cup Y_2$.  

  Note that in this case we have
  $\kappa(\xi_{i+1}(u))\le \kappa(\xi_{i}(u))$ for all
  nodes $u$ and $\kappa( \xi_{i+1}(s))<\kappa(
  \xi_{i}(s))$ and $\xi_{i+1}(u)=\xi_{i}(u)$ for all
  nodes $u\neq s$.

  Also note that invariant (iii) is preserved, because $s$ is not a
  leaf of the tree.

  \smallskip\noindent
  \textit{Case 2:}
  $X= Y_1\cup Y_2$ and $Y_1\cap Y_2\neq\emptyset$.\\
  We let  $\xi_{i+1}(s):=\xi_i(s)$. 

  By posimodularity, either $\kappa(Y_1\setminus Y_2)\le\kappa(Y_1)$
  or $\kappa(Y_2\setminus Y_1)\le\kappa(Y_2)$. If $\kappa(Y_1\setminus
  Y_2)\le\kappa(Y_1)$, we let $\xi_{i+1}(t_1):=Y_1\setminus Y_2$
  and $\xi_{i+1}(t_2):=Y_2$. Otherwise, we let $\xi_{i+1}(t_1):=Y_1$
  and $\xi_{i+1}(t_2):=Y_2\setminus Y_1$. 
  
  Note that in this case we have
  $\kappa(\xi_{i+1}(u))\le \kappa(\xi_{i}(u))$ and $\xi_{i+1}(u)\subseteq\xi_{i}(u)$ for all
  nodes $u$ and either $\xi_{i+1}(t_1)\subset\xi_{i}(t_1)$
  or $\xi_{i+1}(t_2)\subset\xi_{i}(t_2)$.

  \smallskip This completes the description of the construction. To
  see that it terminates, we say that the \emph{total weight} of
  $\xi_i$ is $\sum_{t\in V(T^b)}\kappa(\xi_i(t))$ and the \emph{total
    size} of $\xi_i$ is $\sum_{t\in V(T^b)}|\xi_i(t)|$.  Now observe
  that in each step of the construction either the total weight
  decreases or the total weight stays the same and the total size
  decreases. This proves termination.  

  To see that for all $i$ the triple
  $(T^b,r^b,\xi_i)$ is indeed as partial decomposition, observe first
  that $\xi_i(r^b)=U$, because the root can only occur as the parent
  node $s$ in the construction above, and the set at the
  parent node either stays the same (in Cases~1a and 2) or increases
  (in Case~1b). Moreover, it is easy to check that for all nodes $s'$
  with children $t_1',t_2'$ 
  we have $\xi_i(s')\subseteq\xi_i(t_1')\cup\xi_i(t_2')$. This follows
  immediately from the construction if $s=s'$. If $s'$ is the parent
  of $s=t_i'$, it follows because the set at $s$ can only increase. If
  $s'=t_i$, it follows because the set at $t_i$ can only
  decrease. Otherwise, all the sets at $s',t_1',t_2'$ remain
  unchanged. Note that the invariant (iii) is preserved, because
  leaves can only occur as the child nodes $t_i$ in the construction above, and the set at the
  cild nodes either decrease (in Cases~1a and 2) or stay the same
  (in Case~1b).

  Now we turn the exact binary partial decomposition $(T^b,r^b,\xi_m)$
  into an exact partial decomposition $(T',\tilde\xi')$. Invariants
  (i) and (iii) guarantee that $(T',\tilde\xi')$ has the desired
  properties.
\end{proof}

Let $(T,\tilde\xi)$ be an exact partial decomposition of $\kappa$ over some set
$\CX\subseteq 2^U$.  Observe that we can easily eliminate leaves $u\in
L(T)$ with $\xi(u)=\emptyset$ (we call them \emph{empty leaves}) by
deleting $u$ and contracting the edge from the sibling of~$u$ to its
parent. 
Doing this repeatedly, we can turn the decomposition into a
decomposition of at most the same width over
$\CX\setminus\{\emptyset\}$. 

We say that a tangle $\CT$ \emph{avoids} a
set $\CX\subseteq 2^U$ if $\CT\cap\CX=\emptyset$. Note that, by
\ref{li:t3}, every tangle avoids the set $\CS_U$ of all  singletons.

\begin{lemma}[\mbox{Duality Lemma, \cite{gm10}}]\label{lem:duality}
  Let $\CX\subseteq 2^U$ be closed under taking subsets.  Then there
  is a partial decomposition of width less than $k$ over
  $\CX\cup\CS_U$ if and only if there is no $\kappa$-tangle of order
  $k$ that avoids $\CX$.
\end{lemma}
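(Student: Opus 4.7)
For the forward direction, I plan a direct orientation argument. Suppose for contradiction that both a partial decomposition $(T,\tilde\xi)$ of width less than $k$ over $\CX \cup \CS_U$ and a $\kappa$-tangle $\CT$ of order $k$ avoiding $\CX$ exist. By the Exactness Lemma I may assume $(T,\tilde\xi)$ is exact without leaving $\CX \cup \CS_U$: closure of $\CX$ under subsets handles the $\CX$-leaves, every subset of a singleton is in $\CS_U \cup \{\emptyset\}$, and the remark following the Exactness Lemma lets me remove empty leaves. On every oriented edge $(s,t)$ we have $\kappa(\tilde\xi(s,t)) < k$, so axiom \ref{li:t1} forces exactly one of $\tilde\xi(s,t)$ and $\tilde\xi(t,s)$ into $\CT$; I orient $st$ toward the corresponding endpoint. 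For any leaf $u$ with unique neighbor $v$, the set $\xi(u) = \tilde\xi(v,u) \in \CX \cup \CS_U$ lies outside $\CT$ by the avoidance hypothesis and \ref{li:t3}, so $\tilde\xi(u,v) \in \CT$ and the edge is oriented from $u$ outward. Following outgoing edges in the finite tree terminates at a sink, which cannot be a leaf and is therefore an interior node $s$ with three neighbors $t_1, t_2, t_3$ and $\tilde\xi(t_i,s) \in \CT$ for all $i$. Exactness makes $\tilde\xi(s,t_1), \tilde\xi(s,t_2), \tilde\xi(s,t_3)$ a partition of $U$, so their complements $\tilde\xi(t_i,s)$ have empty triple intersection, contradicting \ref{li:t2}.

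For the backward direction I plan to argue by contrapositive and build the required tangle. Call $Y \subseteq U$ with $\kappa(Y) < k$ \emph{decomposable} if there is a rooted exact binary decomposition rooted at $Y$ with every leaf label in $\CX \cup \CS_U$ and every label of $\kappa$-value less than $k$; such decompositions of $U$ correspond, via binarising at a chosen root edge, to partial decompositions in the sense of the statement. Define $\CT := \{X \subseteq U : \kappa(X) < k \text{ and } X \text{ is not decomposable}\}$. Axiom \ref{li:t0} holds by construction, and \ref{li:t3} together with the avoidance of $\CX$ follow because every singleton and every element of $\CX$ of order less than $k$ is decomposable by a one-leaf tree. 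For \ref{li:t1}, if $\kappa(X) < k$ and both $X$ and $\bar X$ were decomposable, gluing their two decompositions at a new root labeled $U$ (using $\kappa(U) = 0$) would produce a rooted decomposition of $U$ of width less than $k$ over $\CX \cup \CS_U$, contradicting the standing assumption.

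The main obstacle is \ref{li:t2}: given $X_1, X_2, X_3 \in \CT$, show $X_1 \cap X_2 \cap X_3 \neq \emptyset$. The plan is to turn a hypothetical violating triple into a decomposition of $U$, again contradicting the standing assumption. The engine is submodular uncrossing, as in the Exactness Lemma proof: $\kappa(X), \kappa(Y) < k$ imply both $\min\{\kappa(X \cap Y), \kappa(X \cup Y)\} < k$ and $\min\{\kappa(X \setminus Y), \kappa(Y \setminus X)\} < k$. The natural three-way partition $U = \bar X_1 \uplus (X_1 \cap \bar X_2) \uplus (X_1 \cap X_2)$, whose third piece is contained in $\bar X_3$ thanks to the empty triple intersection, need not have all three pieces of order less than $k$; the plan is to replace the $X_i$'s by suitably uncrossed variants (still in $\CT$) until such a partition does meet the width bound, and then recursively decompose each piece by an induction on a well-chosen complexity measure (for instance $|U|$ together with a simultaneous rooted version of the lemma, or the number of sets of order less than $k$ outside $\CX \cup \CS_U$, enlarging $\CX$ by any obstruction encountered and reinvoking the hypothesis). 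The most delicate technical step will be orchestrating the uncrossing so that the variants remain non-decomposable, the width bound is preserved throughout the assembly, and the inductive measure strictly decreases.
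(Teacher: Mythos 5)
Your forward direction is correct and is essentially the paper's own orientation argument, with one superfluous step: you first invoke the Exactness Lemma and then argue that the complements at a sink have empty triple intersection because the forward labels partition $U$. The paper's definition of a partial decomposition already requires $\tilde\xi(s,t_1)\cup\tilde\xi(s,t_2)\cup\tilde\xi(s,t_3)=U$ at every inner node, which gives the same conclusion directly; the exactness detour buys you nothing and adds bookkeeping about removing empty leaves.

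Your backward direction takes a genuinely different route from the paper. The paper inducts on the number of nonempty $X$ with $\kappa(X)<k$ for which neither $X$ nor $\bar X$ lies in $\CX$: the base case builds the candidate tangle $\CY=\{\bar X:X\in\CX,\ \kappa(X)<k\}$ explicitly and reads off a three-leaf or two-leaf decomposition from a violation of \ref{li:t2} or \ref{li:t3}, and the inductive step enlarges $\CX$ on either side of a minimal undecided set $X$ and splices the two resulting decompositions together. You instead define $\CT$ directly as the family of non-decomposable sides, which is closer to the classical Robertson--Seymour argument and is a perfectly legitimate alternative. Your verifications of \ref{li:t0}, \ref{li:t1}, \ref{li:t3}, and avoidance of $\CX$ are correct (modulo the cosmetic issue of passing between rooted binary decompositions and cubic-tree partial decompositions, which you note yourself).

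However, there is a genuine gap at \ref{li:t2}, which you yourself flag as ``the most delicate technical step.'' What you have written is a plan (uncross the triple, induct on a complexity measure, perhaps enlarge $\CX$) rather than a proof, and the plan as stated is also more roundabout than it needs to be. The clean way to finish your approach is to first establish a gluing lemma: \emph{if $X$ and $Y$ are decomposable and $Z\subseteq X\cup Y$ with $\kappa(Z)<k$, then $Z$ is decomposable.} (This is precisely Claim~1 in the paper's proof of Lemma~\ref{lem:given:Xs:compute:tangle:avoiding}; one identifies the $\bar X$-leaf of the $X$-decomposition with the $\bar Y$-leaf of the $Y$-decomposition and hangs a fresh $\bar Z$-leaf off the merged node, using $X\cup Y\cup\bar Z=U$.) With that lemma in hand, \ref{li:t2} is immediate: if $X_1,X_2,X_3\in\CT$ with $X_1\cap X_2\cap X_3=\emptyset$, then $X_1\subseteq\bar X_2\cup\bar X_3$ and both $\bar X_2,\bar X_3$ are decomposable (each $X_i$ is not, so by \ref{li:t1} its complement is), whence $X_1$ would be decomposable, a contradiction. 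No three-way partitioning or uncrossing of the $X_i$'s is needed. Note also that the gluing lemma produces a decomposition satisfying only the weaker ``partial decomposition'' condition at the merged node, not exactness; since you work with exact rooted decompositions throughout, you would need to reapply the Exactness Lemma after each glue, which is another reason to drop the exactness normalisation and work with the weaker condition as the paper does. As it stands, your proof of the backward direction is not complete.
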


\begin{proof}
  For the forward direction, let $(T,\tilde\xi)$ a partial decomposition
of $\kappa$ of width less than $k$ over $\CX\cup\CS_U$. Suppose for
contradiction that $\CT$ is a $\kappa$-tangle of order $k$ that avoids
$\CX$. For every edge $st\in E(T)$, we orient $st$ towards $t$ if
$\tilde\xi(s,t)\in\CT$ and towards $s$ if
$\bar{\tilde\xi(s,t)}=\tilde\xi(t,s)\in\CT$. As $\CT$ is a tangle of
order $k$ and $\kappa(\tilde\xi(s,t))<k$ for all $(s,t)\in\vec E(T)$,
every edge gets an orientation. As $T$ is a tree, there is a node
$s\in V(T)$ such that all edges incident with $s$ are oriented towards
$s$. If $s$ is a leaf, then $\xi(s)\in\CT$ and thus
$\xi(s)\not\in\CX\cup\CS_U$ by \ref{li:t3} and because $\CT$ avoids
$\CX$. This contradicts $(T,\tilde\xi)$ being a decomposition over
$\CX\cup\CS_U$. Thus $s$ is an inner node, say, with neighbours
$t_1,t_2,t_3$. Then $\tilde\xi(t_i,s)\in\CT$ and thus
$\tilde\xi(t_1,s)\cap \tilde\xi(t_2,s)\cap
\tilde\xi(t_3,s)\neq\emptyset$. This implies  $\tilde\xi(s,t_1)\cup \tilde\xi(s,t_2)\cup
\tilde\xi(s,t_3)\neq U$, which contradicts $(T,\tilde\xi)$ being a
partial decomposition.

\medskip
The proof of the backward direction is by induction on the number of
nonempty sets $X\subset U$ with $\kappa(X)<k$ such that neither $X$
nor $\bar X$ are in $\CX$.

  For the base case, let us assume that for all $X\subset U$ with
  $\kappa(X)<k$ either $X\in\CX$ or $\bar X\in\CX$. Let
  \[
  \CY=\{\bar X\mid X\in\CX\text{ with }\kappa(X)<k\}.
  \]
  Then
  $\CY$ trivially satisfies the tangle axiom \ref{li:t0}. It satisfies
  \ref{li:t1} by our assumption that either $X\in\CX$ or $\bar
  X\in\CX$ for all $X\subset U$ with
  $\kappa(X)<k$.

  If
  $\CY$ violates \ref{li:t2}, then there are sets $Y_1,Y_2,Y_3\in\CY$
  with $Y_1\cap Y_2\cap Y_3=\emptyset$. We let $T$ be the tree
  with vertex set $V(T)=\{s,t_1,t_2,t_3\}$ and edge set
  $\{st_1,st_2,st_3\}$, and we define $\tilde\xi(t_i,s):=Y_i$ and
  $\tilde\xi(s,t_i):=\bar Y_i\in\CX$. Then $(T,\tilde\xi)$ is a
  partial branch decomposition over $\CX$ of width less than $k$.
  
  So let us assume that $\CY$ satisfies \ref{li:t2}. As there is no
  tangle of order $k$, the set $\CY$ must violate \ref{li:t3}. Thus
  for some $x\in U$ we have $\{x\}\in\CY$ and thus $\bar{\{x\}}\in\CX$ and $\kappa(\bar{\{x\}})=\kappa(\{x\})<k$. 
  We let $T\cong K_2$ be a tree consisting of just one edge, say, $st$, and
  define $\tilde\xi$ by $\tilde\xi(s,t)=\{x\}$,
  $\tilde\xi(s,t)=\bar{\{x\}}$. This yields  a
  partial branch decomposition over $\CX\cup\CS_U$ of width less than $k$.

  \medskip For the inductive step, suppose that there is no partial
  decomposition of $\kappa$ of width less than $k$ over $\CX$. Let $X\subset U$ such that
  $\kappa(X)<k$ and neither $X\in\CX$ nor $\bar X\in\CX$ and such that
  $|X|$ is minimum subject to these conditions. Let
  $\CX^1:=\CX\cup 2^X$ and
  $\CX^2:=\CX\cup 2^{\bar X}$. Then by
  the inductive hypothesis, for $i=1,2$ there is a partial decomposition
  $(T^i,\tilde\xi^i)$ of width less than $k$ over $\CX^i$. 
As there is
  no partial decomposition of width less than $k$ over $\CX$, there
  is a leaf $t^i$ of $T^i$ with $\xi(t^i)\not\in\CX$. 

  Consider $(T^1,\tilde\xi^1)$. By the Exactness Lemma and since
  $\CX^1$ is closed under taking subgraphs, we
  may assume that $(T^1,\tilde\xi^1)$ is exact. This implies that the
  sets $\xi^1(t)$ for the leaves $t\in L(T^1)$ are mutually
  disjoint. Let $t^1$ be a leaf of $T^1$ with
  $X':=\xi^1(t^1)\not\in\CX$. Then $X'\subseteq X$ and $\bar
  X\subseteq\bar X'$, and as $\bar X\not\in\CX$ and $\CX$ is closed
  under taking subsets, it follows that $\bar
  X'\not\in\CX$. By the minimality of $X$, this implies
  $X'=X$. Furthermore, as the decomposition $(T^1,\tilde\xi^1)$ is
  exact, there is exactly one leaf $t^1$ of $T^1$ with
  $\xi^1(t^1)=X$, and for all other leaves $t$ we have
  $\xi^1(t^1)\in\CX$. Let $s^1$ be the neighbour of $t^1\in T^1$.

  Now consider $(T^2,\tilde\xi^2)$. Let $t^2_1,\ldots,t^2_m$ be an
  enumeration of all
  leaves $t^2\in L(T^2)$ with $\xi^2(t^2)\not\in\CX$. Then
  $\xi(t^2_i)\subseteq \bar X$ for all $i\in[m]$. Without loss of
  generality we may assume that $\xi(t^2_i)=\bar X$, because
  increasing a set $\xi^2(t)$ preserves the property of being a
  partial decomposition. (We do not assume $(T^2,\tilde\xi^2)$ to be
  exact.) For every $i\in[m]$, let $s^2_i$ be the
  neighbour of $t^2_i$ in $T^2$.

  To construct a partial decomposition $(T,\tilde\xi)$ of $\kappa$
  over $\CX$, we take $m$ disjoint copies 
  \[
  (T^1_1,\tilde\xi^1_1),\ldots,(T^1_m,\tilde\xi^1_m) 
  \]
  of
  $(T^1,\tilde\xi^1)$. For each node $t\in V(T^1)$, we denote its copy
  in $T^1_i$ by $t_i$. Then for every edge $st\in E(T^1)$ we have
  $\tilde\xi^1_i(s_i,t_i)=\tilde\xi^1(s,t)$. In particular,
  $\tilde\xi^1_i(s^1_i,t^1_i)=\tilde\xi^1(s^1,t^1)=X$. We let $T$ be
  the tree obtained from the disjoint union of
  $T^1_1,\ldots,T^1_m,T^2$ by deleting the nodes $t^1_i,t^2_i$ and
  adding edges $s^1_is^2_i$ for all $i\in[m]$. We define
  $\tilde\xi:V(T)\to 2^U$ by
  \[
  \tilde\xi(s,t):=
  \begin{cases}
    X&\text{if }(s,t)=(s^1_i,s^2_i)\text{ for some }i\in[m],\\
    \bar X&\text{if }(s,t)=(s^2_i,s^1_i)\text{ for some }i\in[m],\\
    \tilde\xi^1_i(s,t)&\text{if }st\in E(T^1_i),\\
    \tilde\xi^2(s,t)&\text{if }st\in E(T^2).
  \end{cases}
  \]
  It is easy to see that $(T,\tilde\xi)$ is a partial decomposition of
  width less than $k$ over $\CX$. This is a contradiction. 
\end{proof}

Since we may assume decompositions to be exact, the Duality Lemma
implies that $\kappa$ has a branch decomposition of width less than
$k$ if and only if there is no $\kappa$-tangle of order $k$, in other
words: the branch width of $\kappa$ is exactly the maximum order of a
$\kappa$-tangle.

The following somewhat surprising fact follows from the existence
claim of Theorem~\ref{theo:candec} (see
Remark~\ref{rem:linear}). However, the fact is needed to construct our
tangle data structure (Theorem~\ref{theo:ds}) and hence to prove the
algorithmic claim of Theorem~\ref{theo:candec}.\footnote{Thus a reader
  interested in a self-contained treatment may skip
  Section~\ref{sec:ds} at first reading and immediately jump to
  Section~\ref{sec:tree}, but ignore all algorithmic parts there, then
  go back to Section~\ref{sec:ds}, and finally to the algorithmic
  parts of Section~\ref{sec:tree}. See Remark~\ref{rem:linear} for
  further details.}

\begin{fact}[\mbox{\cite{gm10}}]\label{fact:lin}
  Let $k\ge 0$, and let $\kappa$ be a a connectivity function on a set
  $U$ of size $|U|=n$. Then there are at most $n$ $\kappa$-tangles of
  order $k$.
\end{fact}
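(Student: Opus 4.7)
The plan is to prove this by induction on $n = |U|$, with the base case $n \leq 1$ following directly from the tangle axioms. For the inductive step, fix any element $u \in U$ and define the contracted connectivity function $\kappa^* \colon 2^{U \setminus \{u\}} \to \NN$ by
\[
\kappa^*(X) := \min\bigl(\kappa(X),\, \kappa(X \cup \{u\})\bigr).
\]
A routine verification (symmetry follows because $\kappa(Y) = \kappa(U \setminus Y)$ applied with $Y = X$ and $Y = X \cup \{u\}$; submodularity reduces to that of $\kappa$ by case analysis on which of the two arguments achieves the minimum) shows that $\kappa^*$ is a connectivity function on the $(n-1)$-element set $U' := U \setminus \{u\}$, so by the inductive hypothesis $\kappa^*$ has at most $n - 1$ tangles of order $k$.

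The heart of the proof is to define a map $\Phi$ sending each order-$k$ $\kappa$-tangle $\CT$ to an order-$k$ $\kappa^*$-tangle $\Phi(\CT)$, and to control its fibers. Given $\CT$ and $X \subseteq U'$ with $\kappa^*(X) < k$, at least one of $X$ and $X \cup \{u\}$ has $\kappa$-order less than $k$; by axiom T1, exactly one of this set or its $U$-complement lies in $\CT$. With a canonical tie-breaking rule (for instance, preferring the $\kappa$-smaller of $X$ and $X \cup \{u\}$ and breaking further ties by a fixed ordering on $2^U$), declare $X \in \Phi(\CT)$ iff the chosen extension lies in $\CT$. Axioms T0, T1, T3 for $\Phi(\CT)$ are routine; axiom T2 is the delicate one, since three sets in $\CT$ whose triple intersection equals $\{u\}$ would project to three sets in $\Phi(\CT)$ with empty triple intersection, and the tie-breaking rule must be designed so that this collapse does not occur.

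The main obstacle is bounding the fibers of $\Phi$. Two distinct $\kappa$-tangles $\CT_1, \CT_2$ can map to the same $\kappa^*$-tangle only if they disagree on some separation of order less than $k$ involving $u$. I expect to show that across all order-$k$ $\kappa$-tangles, at most one $\kappa^*$-tangle has more than one preimage; morally this ``extra'' tangle is the one supported on a region containing $u$, and the argument should follow from a careful use of axioms T2 and T3 applied to separations of the form $\{u\} \subseteq X \subseteq U \setminus \{v\}$ for various $v \in U'$, together with the submodularity-based uniqueness statements underlying Lemma \ref{lem:lm-tansep}. Combined with $|\{\kappa^*\text{-tangles of order }k\}| \leq n - 1$, this yields the desired bound of $n$.
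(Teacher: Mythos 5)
Your approach is genuinely different from the paper's: the paper simply cites Corollary~(10.4) of Robertson and Seymour's Graph Minors~X for the bound of $n$ on the number of \emph{maximal} tangles (a bound that ultimately rests on the tangle--tree decomposition machinery, as carried out in Remark~\ref{rem:linear}), and adds the one-line observation that every maximal tangle contains at most one tangle of order $k$. You instead propose a direct induction on $|U|$ via a single-element contraction $\kappa^*(X)=\min\bigl(\kappa(X),\kappa(X\cup\{u\})\bigr)$. The verification that $\kappa^*$ is a connectivity function on $U'=U\setminus\{u\}$ is indeed routine, and the idea is appealing, but the two claims carrying all the weight of the argument are not established, and both appear to be more than ``delicate.''

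First, $\Phi(\CT)$ is not shown to be a $\kappa^*$-tangle. The tie-breaking rule must respect complementation just to get \ref{li:t1}: the chosen lift of $X\subseteq U'$ and the chosen lift of its $U'$-complement must be $U$-complements of each other, and ``a fixed ordering on $2^U$'' does not guarantee this. More seriously, for \ref{li:t2} the tie-break has no leverage whenever $\kappa(X)\ge k>\kappa(X\cup\{u\})$, since the lift of $X$ is then forced to be $X\cup\{u\}$. If there exist $X_1,X_2,X_3\subseteq U'$ with $X_1\cap X_2\cap X_3=\emptyset$, each with $\kappa(X_i)\ge k>\kappa(X_i\cup\{u\})$ and $X_i\cup\{u\}\in\CT$ (so that $\bigcap_i(X_i\cup\{u\})=\{u\}\neq\emptyset$, consistent with $\CT$ being a tangle), then $\Phi(\CT)$ violates \ref{li:t2} no matter how ties are broken; you give no argument ruling this configuration out. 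Second, even granting that $\Phi$ is well defined, the fiber bound is merely asserted (``I expect to show\dots''), and what you actually need is stronger than what you state: to get from at most $n-1$ $\kappa^*$-tangles to at most $n$ $\kappa$-tangles you need the \emph{total excess} $\sum_{\CS}\bigl(|\Phi^{-1}(\CS)|-1\bigr)$ to be at most $1$, i.e.\ a single exceptional $\kappa^*$-tangle with a fiber of size exactly $2$, and no argument is given for either the uniqueness of the exceptional tangle or the bound on its fiber. As written this is a plausible research plan, not a proof; the hard part of Fact~\ref{fact:lin} is precisely the content you defer.
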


Actually, Corollary~(10.4) of \cite{gm10} only states that there are at most $n$
maximal tangles. But as every maximal tangles contains at most one
tangle of order $k$, the fact follows.

\subsection{Bases}

For disjoint sets~$X,Y\subseteq U$ we define~$
\kappa_{\min}(X,Y):=\min\{\kappa(Z)\mid X\subseteq Z\subseteq
\overline{Y}\}$. Note that for all $X,Y$ the two functions
$X'\mapsto\kappa_{\min}(X',Y)$ and $Y'\mapsto\kappa_{\min}(X,Y')$ are
monotone and submodular.

For sets $Y\subseteq X$, we say that a set $Y$ is \emph{free} in $X$ if
$\kappa_{\min}(Y,\bar X) = \kappa(X)$ and~$|Y| \leq  \kappa(X)$. Let
us remark that if $\kappa(\emptyset)=0$ and
$\kappa(\{x\})=1$ for all $x\in X$, then the function
$Y\mapsto\kappa_{\min}(Y,\bar X)$ is the rank function of a matroid on
the set $X$, and a set $Y$ is free in $X$ if and only if it is a base
for this matroid. 

The following simple and well-known
lemma (see, for example, \cite{oumsey07}) thus generalises the fact
that every matroid, defined by its rank function, has a base.

\begin{lemma}\label{lem:existence:of:free:sets}
  For every~$X\subseteq U$ there is a set~$Y$ that is free in $X$.
\end{lemma}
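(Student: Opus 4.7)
The plan is to start with $Y := X$, which trivially satisfies $\kappa_{\min}(Y,\bar X) = \kappa(X)$ (the only set $Z$ with $X \subseteq Z \subseteq X$ being $X$ itself), and then iteratively delete elements while $|Y| > \kappa(X)$, maintaining the invariant $\kappa_{\min}(Y,\bar X) = \kappa(X)$. Such a process terminates after at most $|X|$ steps and at termination delivers a free set.

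The substance is the following claim: whenever $Y \subseteq X$ satisfies $\kappa_{\min}(Y,\bar X) = \kappa(X)$ and $|Y| > \kappa(X)$, some $y \in Y$ can be deleted without breaking the invariant. I would prove the contrapositive. Suppose that for every $y \in Y$ there is a witness set $Z_y$ with $Y \setminus \{y\} \subseteq Z_y \subseteq X$ and $\kappa(Z_y) \le \kappa(X) - 1$. A preliminary observation is that necessarily $y \notin Z_y$; otherwise $Y \subseteq Z_y$ would give $\kappa_{\min}(Y,\bar X) \le \kappa(Z_y) < \kappa(X)$, contrary to the invariant.

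Enumerate $Y = \{y_1,\ldots,y_m\}$ and set $W_i := Z_{y_1} \cap \cdots \cap Z_{y_i}$. The core computation is to prove by induction on $i$ that $\kappa(W_i) \le \kappa(X) - i$; taking $i = m$ and using $\kappa \ge 0$ then yields $m \le \kappa(X)$, contradicting $|Y| > \kappa(X)$. The base case is immediate from $\kappa(Z_{y_1}) \le \kappa(X) - 1$. For the inductive step, I would check that $W_i \cup Z_{y_{i+1}} \supseteq Y$: the element $y_{i+1}$ lies in every $Z_{y_j}$ for $j \le i$ by the preliminary observation, hence in $W_i$, while each $y_j$ with $j \ne i+1$ lies in $Z_{y_{i+1}}$. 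Combined with $W_i \cup Z_{y_{i+1}} \subseteq X$, the invariant forces $\kappa(W_i \cup Z_{y_{i+1}}) \ge \kappa(X)$, and submodularity then gives
\[
\kappa(W_{i+1}) \;\le\; \kappa(W_i) + \kappa(Z_{y_{i+1}}) - \kappa(W_i \cup Z_{y_{i+1}}) \;\le\; (\kappa(X) - i) + (\kappa(X) - 1) - \kappa(X),
\]
which equals $\kappa(X) - i - 1$, as required.

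The only genuine obstacle is making the lower bound $\kappa(W_i \cup Z_{y_{i+1}}) \ge \kappa(X)$ available in the inductive step, and this is precisely where the observation $y \notin Z_y$ is used. The rest is bookkeeping: the standard trick of folding witnesses together one at a time while tracking how much submodularity lets the common refinement drop.
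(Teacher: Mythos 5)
Your proof is correct. It is also genuinely different from the paper's argument, and it's worth spelling out the distinction. The paper takes an inclusion-wise minimal $Y$ with $\kappa_{\min}(Y,\bar X)=\kappa(X)$ and bounds $|Y|$ by a pigeonhole argument on the chain $Y_1\subseteq Y_2\subseteq\cdots\subseteq Y$ together with a \emph{single} application of submodularity --- but applied to the derived function $Y'\mapsto\kappa_{\min}(Y',\bar X)$, whose submodularity in the first argument the paper records as an unproved remark just above the lemma. Your argument reaches the same minimal $Y$ constructively (greedy deletion from $X$), but the bound $|Y|\le\kappa(X)$ is obtained by a Helly-style folding of witness sets $Z_{y_1},\ldots,Z_{y_m}$, with $m$ applications of the submodularity of $\kappa$ itself; the one non-routine observation you need, $y\notin Z_y$, is exactly what makes $W_i\cup Z_{y_{i+1}}\supseteq Y$ hold and hence what makes the invariant available as a lower bound. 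What you buy is self-containment: you never invoke submodularity of $\kappa_{\min}$, only of $\kappa$, so the proof stands alone. What the paper's version buys is brevity once the remark about $\kappa_{\min}$ is granted. Both are fine; your version is arguably more robust since it does not lean on an auxiliary fact whose proof is omitted.
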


\begin{proof}
  Let~$Y\subseteq X$ be an inclusion-wise minimal set
with~$\kappa_{\min}(Y,\bar X) = \kappa(X)$. Such a set exists because $\kappa_{\min}(X,\bar X) = \kappa(X)$. We claim
that~$|Y| \leq \kappa(X)$. Suppose otherwise and let~$Y=
\{y_1,\ldots,y_\ell\}$ with~$\ell> \kappa(X)$. Define~$Y_i =
\{y_1,\ldots,y_i\}$ for all~$i\in
[\ell]$. Then~$\kappa_{\min}(Y_i,\bar X)\leq
\kappa_{\min}(Y_{i+1},\bar X)$ since~$\kappa_{\min}$ is monotone in the first
argument. This implies that there exists a~$j\in [\ell-1]$
such that~$\kappa_{\min}(Y_j,\bar X)= \kappa_{\min}(Y_{j+1},\bar X)$.
By the submodularity of $\kappa_{\min}$ in the first argument,
\[
\kappa_{\min}(Y\setminus \{y_{j+1}\},\bar X) +
\kappa_{\min}(Y_{j+1},\bar X) \geq \kappa_{\min}(Y_{j},\bar X)+
\kappa_{\min}(Y,\bar X).
\]
This implies~$\kappa_{\min}(Y\setminus
\{y_{j+1}\},\bar X)= \kappa_{\min}(Y,\bar X) = \kappa(X)$ contradicting the minimality of~$Y$.
\end{proof}

For disjoint sets $B_1,B_2$, we let $\CL(B_1,B_2)$ be the set of all
$X\subseteq U$ such that $B_1$ is free in $X$ and $B_2$ is free in
$\bar X$. If $X\in\CL(B_1,B_2)$, then we say that $(B_1,B_2)$ is a
\emph{base} for $X$. 

\begin{corollary}\label{cor:basis}
  Every set $X\subseteq U$ has a base.
\end{corollary}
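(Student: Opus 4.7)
The plan is to apply Lemma~\ref{lem:existence:of:free:sets} twice, once to $X$ and once to $\bar X$, and then observe that the two resulting free sets are automatically disjoint and form a base.

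In more detail, I would first invoke Lemma~\ref{lem:existence:of:free:sets} to obtain a set $B_1 \subseteq X$ that is free in $X$, so that $\kappa_{\min}(B_1,\bar X)=\kappa(X)$ and $|B_1|\le \kappa(X)$. Then I would apply the same lemma to $\bar X$ in place of $X$, yielding a set $B_2 \subseteq \bar X$ that is free in $\bar X$, so that $\kappa_{\min}(B_2, X) = \kappa(\bar X)$ and $|B_2|\le \kappa(\bar X)$. Because $\kappa$ is symmetric we have $\kappa(\bar X)=\kappa(X)$, so the numerical conditions match.

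Finally I would check that $B_1$ and $B_2$ satisfy the definition of a base for $X$: they are disjoint since $B_1\subseteq X$ and $B_2\subseteq \bar X$, and by construction $X\in\CL(B_1,B_2)$. There is no real obstacle here; the only mild point worth noting is the implicit use of the symmetry $\kappa(X)=\kappa(\bar X)$ to transfer the conclusion of Lemma~\ref{lem:existence:of:free:sets} from $\bar X$ back to a statement about the base of $X$.
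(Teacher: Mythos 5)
Your proof is correct and is exactly the intended argument: apply Lemma~\ref{lem:existence:of:free:sets} to $X$ and to $\bar X$ to obtain $B_1$ free in $X$ and $B_2$ free in $\bar X$, and observe that $(B_1,B_2)$ is then by definition a base for $X$. The paper states this as an immediate corollary without a separate proof, and your reasoning matches that; the remark about symmetry of $\kappa$ is harmless but not actually needed, since the definition of a base requires $B_2$ to be free in $\bar X$ and Lemma~\ref{lem:existence:of:free:sets} gives that directly.
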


For every $X\subseteq U$, we let $\CB(X)$ be
the set of all bases for $X$. Then $(B_1,B_2)\in\CB(X)\iff
X\in\CL(B_1,B_2)$. We let $\CB=\bigcup_{X\subseteq U}\CB(X)$ be the set of all
bases. Obviously, $\CB$ is the set of all pairs
$(B_1,B_2)$ such that $B_1,B_2\subseteq U$ are disjoint and
$\CL(B_1,B_2)\neq\emptyset$. 

\begin{lemma}\label{lem:f1}
   Let $B_1,B_2\subseteq U$ be disjoint and
  $X\in\CL(B_1,B_2)$. Then 
  $
  \kappa(X)=\kappa_{\min}(B_1,B_2).
  $
\end{lemma}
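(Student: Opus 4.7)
The plan is to prove the two inequalities $\kappa_{\min}(B_1,B_2)\le\kappa(X)$ and $\kappa(X)\le\kappa_{\min}(B_1,B_2)$ separately. For the first direction I just need to observe that $X$ itself is admissible in the minimisation defining $\kappa_{\min}(B_1,B_2)$: since the definition of ``free in $X$'' presupposes $B_1\subseteq X$, and similarly ``free in $\bar X$'' presupposes $B_2\subseteq \bar X$, we have $B_1\subseteq X\subseteq \bar{B_2}$, so the defining minimum is at most $\kappa(X)$.

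The substantive direction is to show $\kappa(Z)\ge \kappa(X)$ for every set $Z$ with $B_1\subseteq Z\subseteq \bar{B_2}$. I would apply submodularity to $Z$ and $X$ to get
\[
\kappa(Z)+\kappa(X)\ge \kappa(Z\cap X)+\kappa(Z\cup X),
\]
and then use the two freeness assumptions to bound each term on the right by $\kappa(X)$. For $Z\cap X$, note $B_1\subseteq Z\cap X\subseteq X$, so $Z\cap X$ is a candidate in the definition of $\kappa_{\min}(B_1,\bar X)$, and freeness of $B_1$ in $X$ gives $\kappa(Z\cap X)\ge \kappa_{\min}(B_1,\bar X)=\kappa(X)$. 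For $Z\cup X$ I pass to the complement: $\overline{Z\cup X}=\bar Z\cap\bar X$ contains $B_2$ (because $B_2\subseteq\bar X$ and, from $Z\subseteq\bar{B_2}$, also $B_2\subseteq \bar Z$) and is contained in $\bar X$, so by freeness of $B_2$ in $\bar X$ together with the symmetry of $\kappa$,
\[
\kappa(Z\cup X)=\kappa(\overline{Z\cup X})\ge \kappa_{\min}(B_2,X)=\kappa(\bar X)=\kappa(X).
\]
Plugging both lower bounds into the submodularity inequality yields $\kappa(Z)\ge \kappa(X)$, so taking the minimum over all such $Z$ gives $\kappa_{\min}(B_1,B_2)\ge\kappa(X)$.

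There is no real obstacle here: everything follows from submodularity combined with the definitions. The only bookkeeping point to watch is keeping the containments straight, in particular that ``$Y$ free in $X$'' tacitly requires $Y\subseteq X$, which is what makes $Z\cap X$ and $\overline{Z\cup X}$ admissible in the two $\kappa_{\min}$ expressions that provide the lower bounds.
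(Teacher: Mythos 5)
Your proof is correct and uses essentially the same ingredients as the paper's: submodularity applied to $X$ and a minimiser $Z$, with the two freeness conditions supplying the lower bounds $\kappa(Z\cap X)\ge\kappa(X)$ and $\kappa(Z\cup X)\ge\kappa(X)$. The paper phrases it as a proof by contradiction, whereas you derive $\kappa(Z)\ge\kappa(X)$ directly by summing the bounds, but this is a cosmetic difference.
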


\begin{proof}
  As $B_1\subseteq X\subseteq\bar B_2$, we have
  $\kappa(X)\ge\kappa_{\min}(B_1,B_2)$. Suppose for contradiction that
  $\kappa(X)>\kappa_{\min}(B_1,B_2)$, and let $Z\subseteq U$ such that
  $B_1\subseteq Z\subseteq\bar B_2$ and
  $\kappa(Z)=\kappa_{\min}(B_1,B_2)$. Then $\kappa(Z)<\kappa(X)$, and
  by submodularity this implies $\kappa(X\cap Z)<\kappa(X)$ or
  $\kappa(X\cup Z)<\kappa(X)$. However, we have $B_1\subseteq X\cap
  Z\subseteq X$ and thus $\kappa(X\cap Z)\ge\kappa_{\min}(B_1,\bar
  X)=\kappa(X)$. Similarly, $B_2\subseteq \bar{X}\cap
  \bar{Z}\subseteq \bar X$ and thus $\kappa(X\cup Z)=\kappa(\bar
  X\cap\bar Z)\ge\kappa_{\min}(B_2,X)=\kappa(X)$. This is a contradiction.
\end{proof}

We call $\kappa_{\min}(B_1,B_2)$ the \emph{order} of a base
$(B_1,B_2)\in\CB$. For every $k\ge0$, let $\CB_{\le k}$ denote the set
of all bases of order at most $k$.

\begin{lemma}\label{lem:f2}
  Let $B_1,B_2\subseteq U$ be disjoint. Then $(B_1,B_2)\in\CB$  if and only if
  $\kappa_{\min}(B_1,B_2)\ge \max\{|B_1|,|B_2|\}$. 
\end{lemma}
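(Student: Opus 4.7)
The plan is to handle the two directions separately, using Lemma~\ref{lem:f1} for the forward direction and constructing an explicit witness for the backward direction. Both directions are short once one spots the right candidate.

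For the forward direction, suppose $(B_1,B_2)\in\CB$, so there exists $X\in\CL(B_1,B_2)$. By definition of ``free'', we have $|B_1|\le\kappa(X)$ and $|B_2|\le\kappa(\bar X)=\kappa(X)$, and by Lemma~\ref{lem:f1}, $\kappa(X)=\kappa_{\min}(B_1,B_2)$. Combining these gives $\max\{|B_1|,|B_2|\}\le\kappa_{\min}(B_1,B_2)$.

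For the backward direction, assume $\kappa_{\min}(B_1,B_2)\ge\max\{|B_1|,|B_2|\}$. I would take $X$ to be \emph{any} minimum $(B_1,B_2)$-separation, i.e.\ a set with $B_1\subseteq X\subseteq\bar B_2$ and $\kappa(X)=\kappa_{\min}(B_1,B_2)$ (such an $X$ exists since $B_1$ itself lies between $B_1$ and $\bar B_2$). I claim $X\in\CL(B_1,B_2)$. Since $X$ itself is a witness, $\kappa_{\min}(B_1,\bar X)\le\kappa(X)$; conversely any $Z$ with $B_1\subseteq Z\subseteq X$ satisfies $B_1\subseteq Z\subseteq\bar B_2$, so $\kappa(Z)\ge\kappa_{\min}(B_1,B_2)=\kappa(X)$, giving equality. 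Dually, any $Z$ with $B_2\subseteq Z\subseteq\bar X$ yields $\bar Z$ with $B_1\subseteq X\subseteq\bar Z\subseteq\bar B_2$, so $\kappa(Z)=\kappa(\bar Z)\ge\kappa_{\min}(B_1,B_2)=\kappa(X)=\kappa(\bar X)$, proving $\kappa_{\min}(B_2,X)=\kappa(\bar X)$. Finally, the size conditions $|B_1|,|B_2|\le\kappa(X)=\kappa_{\min}(B_1,B_2)$ hold directly by the hypothesis.

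I do not anticipate any real obstacle: the argument is essentially a straightforward application of the definitions together with the observation from Lemma~\ref{lem:f1} that any set in $\CL(B_1,B_2)$ must realise the minimum $(B_1,B_2)$-order. The only very minor point requiring care is keeping the symmetric roles of $B_1$ and $B_2$ (and of $X$ and $\bar X$) straight when verifying that $B_2$ is free in $\bar X$.
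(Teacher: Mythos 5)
Your proof is correct and follows essentially the same approach as the paper's: the forward direction combines Lemma~\ref{lem:f1} with the size bounds from the definition of ``free'', and the backward direction takes any minimum $(B_1,B_2)$-separation $X$ and verifies directly that $B_1$ is free in $X$ and $B_2$ is free in $\bar X$. The paper derives the equalities $\kappa_{\min}(B_1,\bar X)=\kappa(X)$ and $\kappa_{\min}(B_2,X)=\kappa(X)$ a touch more tersely via monotonicity of $\kappa_{\min}$, but the content is identical to your unwinding of the definitions.
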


\begin{proof}
  For the forward direction, let $X\in \CL(B_1,B_2)$. Then
  $\kappa_{\min}(B_1,B_2)=\kappa(X)\ge \max\{|B_1|,|B_2|\}$, where the equality
  follows from Lemma~\ref{lem:f1} and the inequality because $B_1$ is free in
  $X$ and $B_2$ is free in $\bar X$.

  For the backward direction, suppose that
  $\kappa_{\min}(B_1,B_2)\ge \max\{|B_1|,|B_2|\}$, and let $X\subseteq U$ such
  that $B_1\subseteq X\subseteq \bar B_2$ and
  $\kappa(X)=\kappa_{\min}(B_1,B_2)$. Then $\kappa(X)\ge|B_1|$ and
  $\kappa(X)\ge\kappa_{\min}(B_1,\bar X)\ge
  \kappa_{\min}(B_1,B_2)=\kappa(X)$ and thus
  $\kappa(X)=\kappa_{\min}(B_1,\bar X)$. Similarly,
  $\kappa(X)\ge|B_2|$ and $\kappa(X)=\kappa_{\min}(B_2,X)$. Thus
  $(B_1,B_2)$ is a base for $X$.
\end{proof}

\begin{lemma}\label{lem:f3}
  Let $B_1,B_2\subseteq U$ be disjoint and
  $X_1,X_2\in\CL(B_1,B_2)$. Then 
  $X_1\cap X_2,X_1\cup X_2\in\CL(B_1,B_2)$.
\end{lemma}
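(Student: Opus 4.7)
The plan is to let $k := \kappa_{\min}(B_1,B_2)$ and first establish that $\kappa(X_1\cap X_2) = \kappa(X_1\cup X_2) = k$, and then verify the two freeness conditions directly for both the intersection and the union.

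By Lemma~\ref{lem:f1}, the hypothesis $X_1,X_2\in\CL(B_1,B_2)$ gives $\kappa(X_1)=\kappa(X_2)=k$. Submodularity yields $\kappa(X_1\cap X_2)+\kappa(X_1\cup X_2)\le 2k$. On the other hand, both $X_1\cap X_2$ and $X_1\cup X_2$ sit between $B_1$ and $\bar B_2$, so by the very definition of $\kappa_{\min}(B_1,B_2)$ each of these has $\kappa$-value at least $k$. Together this forces $\kappa(X_1\cap X_2)=\kappa(X_1\cup X_2)=k$.

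Next I would check that $B_1$ is free in $X_1\cap X_2$. The size bound $|B_1|\le k=\kappa(X_1\cap X_2)$ follows from $B_1$ being free in $X_1$. For the equation $\kappa_{\min}(B_1,\bar{X_1\cap X_2})=\kappa(X_1\cap X_2)=k$, the upper bound is witnessed by $X_1\cap X_2$ itself (since $B_1\subseteq X_1\cap X_2\subseteq\bar{X_1\cap X_2}^c$ and $\kappa(X_1\cap X_2)=k$), and the lower bound comes from monotonicity of $Y\mapsto\kappa_{\min}(B_1,Y)$ in its second argument: since $B_2\subseteq\bar{X_1\cap X_2}$, we have $\kappa_{\min}(B_1,\bar{X_1\cap X_2})\ge \kappa_{\min}(B_1,B_2)=k$. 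The check that $B_1$ is free in $X_1\cup X_2$ is identical, using that $B_2\subseteq\bar{X_1\cup X_2}$.

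The two freeness conditions for $B_2$ follow by symmetry: replacing $X_i$ with $\bar X_i$, the hypothesis that $B_2$ is free in $\bar X_i$ becomes the same kind of freeness statement with the roles of $B_1,B_2$ swapped, and $\overline{X_1\cap X_2}=\bar X_1\cup\bar X_2$ and $\overline{X_1\cup X_2}=\bar X_1\cap\bar X_2$ are precisely the union and intersection of $\bar X_1,\bar X_2\in\CL(B_2,B_1)$, so the argument already given applies verbatim. There is no real obstacle here: once one observes that $k=\kappa_{\min}(B_1,B_2)$ simultaneously provides a submodular upper bound (from $X_1,X_2$) and a definitional lower bound (because any set between $B_1$ and $\bar B_2$ has order at least $k$), both the order computations and the freeness certifications are immediate.
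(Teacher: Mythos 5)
Your proof is correct and follows essentially the same route as the paper: use Lemma~\ref{lem:f1} plus submodularity to pin down $\kappa(X_1\cap X_2)=\kappa(X_1\cup X_2)=\kappa_{\min}(B_1,B_2)$, then verify the freeness conditions via the monotonicity of $\kappa_{\min}$ in its second argument, invoking symmetry for the remaining cases. The only cosmetic difference is that the paper verifies both freeness conditions for $X_1\cap X_2$ first and then appeals to symmetry for $X_1\cup X_2$, whereas you verify the $B_1$ conditions for both sets first and then appeal to symmetry for $B_2$; these are the same argument with the case split transposed.
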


\begin{proof}
  As $B_1\subseteq X_1\cap X_2,X_1\cup X_2\subseteq \bar B_2$, by
  Lemma~\ref{lem:f1} we have 
  \[
  \kappa(X_1\cap
  X_2),\kappa(X_1\cup
  X_2)\ge\kappa_{\min}(B_1,B_2)=\kappa(X_1)=\kappa(X_2)
  \]
  By submodularity, this implies
  \[
  \kappa_{\min}(B_1,B_2)=\kappa(X_1)=\kappa(X_2)=\kappa(X_1\cap X_2)=\kappa(X_1\cup X_2).
  \]
  By the monotonicity of $\kappa_{\min}$ in the second argument, we have
  \[
  \kappa(X_1\cap X_2)\ge\kappa_{\min}(B_1,\bar{X_1\cap X_2})\ge
  \kappa_{\min}(B_1,B_2)=\kappa(X_1\cap X_2)
  \]
  and thus $ \kappa(X_1\cap X_2)=\kappa_{\min}(B_1,\bar{X_1\cap
    X_2})$. Moreover, $\kappa(X_1\cap X_2)=\kappa(X_1)\ge|B_1|$, and
  thus $B_1$ is free in $X_1\cap X_2$. 
  Similarly,
  \begin{align*}
  \kappa(\bar{X_1\cap X_2})
  \ge\kappa_{\min}(B_2,X_1\cap X_2)
  &\ge
  \kappa_{\min}(B_2,B_1)\\
  &=\kappa_{\min}(B_1,B_2)=\kappa(X_1\cap X_2)=\kappa(\bar{X_1\cap X_2})
  \end{align*}
  and thus $ \kappa(\bar{X_1\cap X_2})=\kappa_{\min}(B_2,X_1\cap
    X_2)$. Moreover, $\kappa(\bar{X_1\cap X_2})=\kappa(X_1\cap X_2)=\kappa(X_2)\ge|B_2|$, and
  thus $B_2$ is free in $\bar{X_1\cap X_2}$. 

  Hence $X_1\cap X_2\in\CL(B_1,B_2)$. By symmetry, $X_1\cup
  X_2\in\CL(B_1,B_2)$.
\end{proof}

The lemma shows that $\CL(B_1,B_2)$ has a lattice structure---hence the
letter $\CL$. In particular, the lemma has the following consequence.

\begin{corollary}
  For every base $(B_1,B_2)\in\CB$ there is a unique minimal
  $X_{\bot}(B_1,B_2)\in\CL(B_1,B_2)$  and a
  unique maximal $X_{\top}(B_1,B_2)\in\CL(B_1,B_2)$.
\end{corollary}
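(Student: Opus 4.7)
The corollary is an immediate consequence of Lemma~\ref{lem:f3}, so the plan is very short. The key observation is that $\CL(B_1,B_2)$ is a nonempty finite collection of subsets of $U$ (nonempty because $(B_1,B_2)\in\CB$, finite because $U$ is finite) that is closed under pairwise intersection and union.

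I would first note that, by Lemma~\ref{lem:f3}, closure under pairwise intersection and union extends by induction to closure under arbitrary nonempty finite intersections and unions. Therefore the sets
\[
X_{\bot}(B_1,B_2) := \bigcap_{X\in\CL(B_1,B_2)} X \qquad\text{and}\qquad X_{\top}(B_1,B_2) := \bigcup_{X\in\CL(B_1,B_2)} X
\]
both lie in $\CL(B_1,B_2)$. By construction $X_{\bot}(B_1,B_2)\subseteq X\subseteq X_{\top}(B_1,B_2)$ for every $X\in\CL(B_1,B_2)$, so $X_{\bot}(B_1,B_2)$ is a minimum (in particular minimal) element and $X_{\top}(B_1,B_2)$ is a maximum (in particular maximal) element. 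Uniqueness of a minimum/maximum element of a partially ordered set is automatic, so this takes care of the uniqueness claim as well.

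There is essentially no obstacle here; the work has already been done in Lemma~\ref{lem:f3}, and the remaining step is just the standard observation that a nonempty finite meet- and join-closed subfamily of $2^U$ has a unique least and a unique greatest element.
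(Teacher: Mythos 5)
Your proposal is correct and is essentially the argument the paper intends: the corollary is stated as an immediate consequence of Lemma~\ref{lem:f3}, which establishes the lattice structure of $\CL(B_1,B_2)$, and taking the intersection (resp. union) of the whole finite nonempty family yields its least (resp. greatest) element. Nothing is missing.
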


Note that in contrast to the leftmost separations defined earlier, the element~$X_{\bot}(B_1,B_2)$ is inclusion-wise minimal among all sets in~$\CL(B_1,B_2)$ and not just among those of minimum rank.

\subsection{Computing with tangles}

Algorithms expecting a set function $\kappa:2^U\to\NN$ as input are
given the ground set $U$ as actual input (say, as a list of objects),
and they are given an oracle that returns for~$X\subseteq U$ the value
of~$\kappa(X)$. The running time of such algorithms is measured in
terms of the size $|U|$ of the ground set. We assume this computation
model whenever we say that an algorithm is given \emph{oracle access} to a
set function $\kappa$.

An important fact underlying most of our algorithms is that, under this model of
computation, submodular functions can be efficiently minimised.

\begin{fact}[Iwata, Fleischer, Fujishige~\cite{iwaflefuj01}, Schrijver~\cite{schrijver00}]
  There is a polynomial time algorithm that, given oracle access to a
  submodular function $\lambda:2^U\to\NN$, returns a set $X\subseteq U$ that
  minimises $\lambda$.
\end{fact}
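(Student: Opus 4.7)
The plan is to reduce submodular minimisation to convex optimisation via the Lov\'asz extension and then invoke the ellipsoid method. The \emph{Lov\'asz extension} $\hat\lambda\colon[0,1]^U\to\RR$ of $\lambda$ is defined as follows: for $x\in[0,1]^U$, list the elements of $U$ as $u_1,\ldots,u_n$ so that $x(u_1)\ge\cdots\ge x(u_n)$, set $x(u_{n+1}):=0$, and put
\[
\hat\lambda(x):=\sum_{i=1}^{n}\bigl(x(u_i)-x(u_{i+1})\bigr)\cdot\lambda(\{u_1,\ldots,u_i\}).
\]
Equivalently, $\hat\lambda(x)=\int_0^1\lambda(\{u\mid x(u)>t\})\,dt$. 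Two key properties, both due to Lov\'asz, are: first, $\hat\lambda$ is convex if and only if $\lambda$ is submodular; second, $\min_{X\subseteq U}\lambda(X)=\min_{x\in[0,1]^U}\hat\lambda(x)$, and every optimal $x^\ast$ yields a minimiser $X^\ast=\{u\mid x^\ast(u)>t\}$ of $\lambda$ for almost every threshold $t\in[0,1]$.

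Next, I would minimise $\hat\lambda$ over the cube $[0,1]^U$ using the ellipsoid method. This requires a subgradient oracle, which the greedy algorithm supplies: at $x$, sorting as above, the vector $g\in\RR^U$ defined by $g(u_i):=\lambda(\{u_1,\ldots,u_i\})-\lambda(\{u_1,\ldots,u_{i-1}\})$ is a subgradient of $\hat\lambda$ at $x$ and is computed with $O(|U|)$ calls to the oracle for $\lambda$. The Gr\"otschel-Lov\'asz-Schrijver machinery then produces a point $x^\varepsilon$ within any prescribed $\varepsilon$ of the optimum in time polynomial in $|U|$ and $\log(1/\varepsilon)$. Since $\lambda$ is integer-valued, choosing $\varepsilon$ smaller than the gap between distinct values of $\hat\lambda$ at vertex-adjacent points of $[0,1]^U$ and extracting an appropriate threshold set from $x^\varepsilon$ rounds this to an actual minimiser $X^\ast\subseteq U$.

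The main obstacle, and the reason we invoke \cite{iwaflefuj01,schrijver00} rather than attempt a self-contained proof, is strong polynomiality: one would like a running time depending on $|U|$ alone, not on $\log\max|\lambda|$. The combinatorial algorithms of Iwata-Fleischer-Fujishige and Schrijver achieve this by maintaining a convex decomposition of a point in the base polyhedron $B(\lambda)$ as a combination of greedy extreme bases and iteratively reducing its norm through signed exchange operations driven by augmenting-path arguments reminiscent of matroid partition. For the applications in the present paper the weaker polynomial-time guarantee of the Lov\'asz-extension approach would already suffice, but we state the fact in its strongest known form and refer the reader to the cited works for the full proof.
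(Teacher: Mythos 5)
The paper does not prove this statement: it is stated as a \textbf{Fact} and attributed to the cited works of Iwata--Fleischer--Fujishige and Schrijver, so there is no internal proof to compare against. Your sketch is a reasonable account of the two standard proof strategies from the literature: the Lov\'asz-extension/ellipsoid route of Gr\"otschel--Lov\'asz--Schrijver, which yields a polynomial-time algorithm whose running time depends on $\log$ of the function values, and the combinatorial algorithms of the cited references, which achieve strong polynomiality via convex combinations of greedy extreme bases in the base polyhedron. You correctly note that the weaker guarantee already suffices for the applications in this paper and that the stronger form is stated only because it is the known state of the art.

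One small imprecision in the rounding step: it is not quite right to speak of ``the gap between distinct values of $\hat\lambda$ at vertex-adjacent points.'' The cleaner argument is that $\hat\lambda(x)$ is a convex combination (after including $\emptyset$ with the residual weight $1-x(u_1)$) of the values $\lambda(S_0),\ldots,\lambda(S_n)$ on the chain of level sets $S_i=\{u_1,\ldots,u_i\}$, so $\min_i\lambda(S_i)\le\hat\lambda(x)$; hence if $\hat\lambda(x)<\min_X\lambda(X)+1$ and $\lambda$ is integer-valued, some level set of $x$ is already a minimiser. So $\varepsilon<1$ suffices, with no need to estimate gaps. This does not affect the correctness of your overall plan, and since the paper defers entirely to the references here, your proposal is consistent with what the paper intends.
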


Observe that this implies that, given oracle access to a connectivity
function $\kappa$, the function~$\kappa_{\min}$  can also be evaluated in
polynomial time. In fact, for given arguments $X,Y$ we can compute a
$Z$ such that $X\subseteq Z\subseteq\bar Y$ and
$\kappa(Z)=\kappa_{\min}(X,Y)$ in polynomial time.

A \emph{membership oracle} for a family $\CS\subseteq 2^U$ of subsets
of a ground set $U$ is an oracle that when queried with a
set~$X\subseteq U$ determines whether~$X\in \mathcal{S}$. Most often,
we will use such membership oracles for tangles; in the following
lemma we use them for the union of a family of tangles.

\begin{figure}
  \centering
  \begin{tikzpicture}[thick, scale = 1.5]
  \draw (2,1.5) .. controls (2.5,1) and (2.5,-1.5) .. (2.25,-2);
  \node at (1.85,1.25) {$X'$};
  \draw (3,1.5) .. controls (3.5,1) and (3.5,-1.5) .. (3.5,-2);
  \node at (2.85,1.25) {$X$};
  \draw (1,1) .. controls (2,0.75) and (3.5,-1.25) .. (3.5,-2);
  \node at (1,0.75) {$X^{\ast}$};
  \draw (4.5,0.25) .. controls (4,0) and (4,-1) .. (4.5,-1.25);
  \node at (4.5,-0.5) {$\overline{Y_2}$};
  \draw (0.75,0.25) .. controls (1.25,0) and (1.25,-1) .. (0.75,-1.25);
  \node at (0.75,-0.5) {${Y_1}$};
  \draw (0.75,-1) .. controls (1.25,-1) and (1.75,-1.5) .. (1.75,-2);
  \node at (1.25,-1.5) {${Z}$};
  \node (v2) at (2.875,0.5) {$x$};
  \draw[fill]  (2.75,0.25) node (v1) {} ellipse (0.05 and 0.05);
  \end{tikzpicture}
  \caption{Proof of Lemma~\ref{lem:minimal:separators:are:computable}}
  \label{fig:comp-min-sep}
\end{figure}

\begin{lemma}\label{lem:minimal:separators:are:computable}
Let~$k\ge 1$.
There is a polynomial time algorithm that, given oracle access to a
connectivity function $\kappa:2^U\to\NN$, a membership oracle for set
$\CS\subseteq 2^U$ such that $\CS=\CT_1\cup\dots\cup\CT_n$ is a union
of $\kappa$-tangles of order $k$, and sets $Y_1,Y_2\subseteq U$,
\begin{itemize}
  \item
    either computes a set $X'$ satisfying
\begin{enumerate}
\item $X'\in \mathcal{T}_1 \cup\dots\cup \mathcal{T}_n$\label{item:in:tangle} and 
\item $Y_1\subseteq X' \subseteq Y_2$ \label{item:subset}
\end{enumerate}
such that $X'$ is inclusion-wise minimal among all sets with these two
properties 
\item
or determines that no such set exists.
\end{itemize}
\end{lemma}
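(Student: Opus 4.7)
I split the proof into two parts: a reduction from finding the inclusion-minimal $X'$ to polynomially many calls of an existence subroutine, and an implementation of that subroutine via a closure property of tangles.

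\emph{Reduction to an existence subroutine.} Let $\mathcal{E}(A,B)$ denote the decision problem ``is $\CS \cap \{X : A \subseteq X \subseteq B\} \ne \emptyset$?''. Assuming $\mathcal{E}$ is computable in polynomial time, I produce the inclusion-minimal $X'$ as follows. If $\mathcal{E}(Y_1,Y_2)$ is false, return ``none''. Otherwise initialise $B := Y_2$ and iterate through each $y \in Y_2 \setminus Y_1$, replacing $B$ by $B \setminus \{y\}$ whenever $\mathcal{E}(Y_1, B \setminus \{y\})$ remains true. Output the final $B$. At termination $\mathcal{E}(Y_1, B)$ holds but $\mathcal{E}(Y_1, B \setminus \{y\})$ fails for every $y \in B \setminus Y_1$ (using monotonicity of $\mathcal{E}$ in $B$ and the fact that $B$ only shrinks during the iteration). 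Any witness $X \in \CS \cap [Y_1, B]$ must therefore coincide with $B$: otherwise a missing $y \in B \setminus X \subseteq B \setminus Y_1$ would give $X \in \CS \cap [Y_1, B \setminus \{y\}]$ and refute the invariant. Hence $B \in \CS$, and a symmetric argument rules out any proper subset of $B$ being a witness, establishing inclusion-minimality. The reduction uses $O(|U|)$ oracle calls.

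\emph{Closure property and easy case.} The engine behind $\mathcal{E}$ is the following observation: if $X \in \CT$ for some $\kappa$-tangle $\CT$ of order $k$, $X \subseteq X'$, and $\kappa(X') < k$, then $X' \in \CT$. Indeed, axiom \ref{li:t1} gives $X' \in \CT$ or $\bar{X'} \in \CT$; in the latter case axiom \ref{li:t2} applied to $X, X, \bar{X'}$ yields $X \cap \bar{X'} \ne \emptyset$, contradicting $X \subseteq X'$. Consequently, if $\kappa(B) < k$ then $\mathcal{E}(A,B) \iff A \subseteq B \wedge B \in \CS$: any witness $X \subseteq B$ lying in some $\CT_i$ forces $B \in \CT_i$ by superset-closure. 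This reduces the easy case to a single membership query.

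\emph{General case.} When $\kappa(B) \ge k$, compute the leftmost minimum $(A, \bar B)$-separation $Z$ by submodular minimisation. If $\kappa(Z) \ge k$, no order-$<k$ set lies in $[A,B]$ and I return false. Otherwise $\kappa(Z) < k$, and for any witness $X$ the intersection $X \cap Z$ is still an $(A, \bar B)$-separation, so $\kappa(X \cap Z) \ge \kappa(Z)$; by submodularity $\kappa(X \cup Z) \le \kappa(X) < k$, and superset-closure puts $X \cup Z$ in the same tangle as $X$. Hence I may assume without loss of generality that every witness contains $Z$, and a membership query resolves the subcase $Z \in \CS$.

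\emph{Main obstacle.} The hard subcase is $Z \notin \CS$, where a larger witness $X \supsetneq Z$ may still exist; the challenge is to decide this without branching exponentially over $B \setminus Z$. My plan, mirroring Figure~\ref{fig:comp-min-sep}, is an iterative refinement. In each round I select some $x \in B \setminus Z$, compute the leftmost minimum $(A, \{x\} \cup \bar B)$-separation $Z'$ (again by submodular minimisation), and branch on the three possible outcomes: if $\kappa(Z') \ge k$, then every order-$<k$ set in $[A, B]$ contains $x$, so I absorb $x$ into $A$; if $\kappa(Z') < k$ and $Z' \in \CS$, I return true with witness $Z'$; if $\kappa(Z') < k$ but $Z' \notin \CS$, reapplying the submodularity and superset-closure argument lets me replace $Z$ by $Z \cup Z'$, strictly enlarging the forced core of every witness. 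Arguing that each round either terminates or strictly decreases $|B \setminus A|$, so that $O(|U|)$ rounds suffice with polynomial work per round, is the main technical hurdle of the proof.
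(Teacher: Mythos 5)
Your reduction to an existence oracle $\mathcal{E}(A,B)$ and the observation about superset-closure within a tangle are both correct, but the implementation of $\mathcal{E}$ in the ``main obstacle'' case is not just an unpolished technical hurdle---as written, the iteration does not make progress at all. If you pick $x\in B\setminus Z$, then $Z$ itself is already an $(A,\{x\}\cup\bar B)$-separation, so $\kappa_{\min}(A,\{x\}\cup\bar B)=\kappa(Z)$ and the leftmost minimum $(A,\{x\}\cup\bar B)$-separation $Z'$ satisfies $Z'\subseteq Z$. Consequently $\kappa(Z')<k$ always (your first branch never fires), $Z\cup Z'=Z$ (your third branch achieves nothing), and in the case $Z'\notin\CS$ the algorithm stalls. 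There is no way to make this style of refinement converge, because computing leftmost minimum separations alone cannot tell you which elements of $B$ a hypothetical witness $X^*$ might contain or exclude.

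The missing idea is Lemma~\ref{lem:existence:of:free:sets}: every set $X^*$ with $\kappa(X^*)<k$ contains a free set $Z$ with $|Z|\le\kappa(X^*)\le k-1$, and for a free set $\kappa_{\min}(Z,\overline{X^*})=\kappa(X^*)$. This bound is what makes a polynomial-time search possible. The paper's algorithm therefore guesses, by brute force over the $O(|U|^{k-1})$ choices, a free set $Z$ of the hypothetical smaller witness $X^*$, together with one element $x\in X\setminus X^*$; it then greedily computes an inclusion-maximal $X'$ with $\kappa(X')<k$ and $Z\cup Y_1\subseteq X'\subseteq X\setminus\{x\}$. The freeness of $Z$ in $X^*$ gives $\kappa(X'\cap X^*)\ge\kappa(X^*)$, so submodularity yields $\kappa(X'\cup X^*)\le\kappa(X')$, and maximality of $X'$ forces $X^*\subseteq X'$. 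Superset-closure (which you correctly identified) then puts $X'$ in the same tangle as $X^*$, so a single membership query certifies it. Your proposal never bounds the size of the object being guessed, and without the free-set bound the search space is exponential.
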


\begin{proof}
Starting by initially setting~$X= Y_2$, 
it suffices to find repeatedly a proper subset~$X'\subset X$ satisfying Properties~\ref{item:in:tangle} and~\ref{item:subset} or determine that no such set exists.

Assume there is some proper subset~$X^*\subset X$ satisfying the properties. Let~$x$ be an element of~$X\setminus X^*$ and let~$Z$ be a free subset of~$X^*$. In particular~$\kappa_{\min}(Z,\overline{X^*}) = \kappa(X^*)$ and~$|Z| \leq \kappa(X^*)$. Such a set exists by Lemma~\ref{lem:existence:of:free:sets}.

Since~$|Z|\leq \kappa(X^*)\leq k-1$, the number of choices for~$x$ and~$Z$ are polynomially bounded, and by iterating over all choices we can guess~$x$ and~$Z$.
We compute an inclusion-wise maximal set~$X'$ with~$\kappa(X')\leq
k-1$ such that~$Z\cup Y_1\subseteq X'\subseteq X\setminus \{x\}$ (see Figure~\ref{fig:comp-min-sep}). Such
a set $X'$ can be computed by starting with $X':=Z\cup Y_1$ and then
iteratively adding elements to $X'$ as long as
$\kappa_{\min}(X',\bar{X\setminus\{x\}})\leq k-1$.
Note that~$Z\subseteq X'\cap X^*\subseteq X^*$ and thus~$\kappa(X'\cap X^*)\geq \kappa_{\min}(Z,\overline{X^*}) = \kappa(X^*)$. By submodularity~$\kappa(X^*)+ \kappa(X') \geq  \kappa(X'\cap X^*) + \kappa(X'\cup X^*)$ which implies that~$\kappa(X'\cup X^*) \leq \kappa(X')$ and thus~$X^*\subseteq X'$, since~$X'$ is chosen to be maximal.
We conclude that~$X'\in \mathcal{T}_1 \cup\dots\cup \mathcal{T}_n$. For a correct choice of~$Z$ and~$x$, we can use the oracle for~$\mathcal{T}_1 \cup\dots\cup \mathcal{T}_n$ to verify that~$X'\in \mathcal{T}_1 \cup\dots\cup \mathcal{T}_n$.
\end{proof}

The lemma in particular applies when we are given only one tangle by a
membership oracle.

\begin{lemma}\label{lem:f4}
  Let $\CT$ be a tangle of order $k$ and $B=(B_1,B_2)$ a base such that
  $\CT\cap\CL(B)\neq\emptyset$. Then there is a unique inclusion-wise
  minimal element $X_{\bot}(\CT,B)$ in $\CT\cap\CL(B)$.

  Furthermore, for every fixed $k$ there is a polynomial time
  algorithm that, given oracle access to $\kappa$, a membership oracle
  for $\CT$, and a base
  $B$, decides if $\CT\cap\CL(B)\neq\emptyset$  and, in the affirmative case, computes
  $X_{\bot}(\CT,B)$.
\end{lemma}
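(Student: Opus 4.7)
I would show that $\CT\cap\CL(B)$ is closed under intersection, from which a unique inclusion-minimal element follows. If $X_1,X_2\in\CT\cap\CL(B)$, then $X_1\cap X_2\in\CL(B)$ by Lemma~\ref{lem:f3}, and by Lemma~\ref{lem:f1} its order equals $\kappa_{\min}(B_1,B_2)=\kappa(X_1)<k$. Axiom~\ref{li:t1} then places $X_1\cap X_2$ or $\overline{X_1\cap X_2}$ in $\CT$; the second possibility, combined with $X_1,X_2\in\CT$, violates \ref{li:t2} since the three sets intersect emptily. Hence the intersection of all members of $\CT\cap\CL(B)$ is itself in $\CT\cap\CL(B)$, and this intersection is the desired $X_{\bot}(\CT,B)$.

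\textbf{Existence test.} The algorithm first computes $k':=\kappa_{\min}(B_1,B_2)$ by submodular minimization. If $k'\geq k$, then every $X\in\CL(B)$ has order $\geq k$ by Lemma~\ref{lem:f1}, so $\CT\cap\CL(B)=\emptyset$. Otherwise, compute the maximal element $X_{\top}(B_1,B_2)$ of $\CL(B)$ as follows: start with any witness $X_0$ of the minimum $k'$, and while some $a\in\bar B_2\setminus X_0$ satisfies $\kappa_{\min}(X_0\cup\{a\},B_2)\leq k'$, compute a witness $X^*$ of this minimum and replace $X_0$ with $X_0\cup X^*$. By Lemma~\ref{lem:f3} the iterate stays in $\CL(B)$; at termination the result equals $X_\top$, because for any $X'\in\CL(B)$ not contained in $X_0$, picking $a\in X'\setminus X_0$ and noting $X_0\cup X'\in\CL(B)$ would give $\kappa_{\min}(X_0\cup\{a\},B_2)\leq k'$, contradicting termination. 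Now query the membership oracle on $X_\top$: if $X_\top\notin\CT$, return ``empty,'' justified because for any $X_0\in\CT\cap\CL(B)$ we would have $X_0\subseteq X_\top$ and $\kappa(X_\top)=k'<k$, and axioms \ref{li:t1} and \ref{li:t2} (applied to $X_0,X_0,\overline{X_\top}$) would force $X_\top\in\CT$.

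\textbf{Shrinking.} If $X_\top\in\CT$, initialize $X:=X_\top\in\CT\cap\CL(B)$ and repeatedly try to replace $X$ by a strictly smaller element of $\CT\cap\CL(B)$, following the template of Lemma~\ref{lem:minimal:separators:are:computable}. Enumerate all $x\in X\setminus B_1$ and all $Z\subseteq X\setminus\{x\}$ with $|Z|\leq k'$; for each pair, greedily compute the inclusion-maximal $X'$ with $B_1\cup Z\subseteq X'\subseteq\bar B_2\setminus\{x\}$ and $\kappa(X')\leq k'$ (by repeated $\kappa_{\min}$ evaluations), and accept $X'$ as the new $X$ if the oracle reports $X'\in\CT$. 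For any hypothetical smaller $X^*\in\CT\cap\CL(B)$, the correct choice of $x\in X\setminus X^*$ and of $Z$ free in $X^*$ (Lemma~\ref{lem:existence:of:free:sets}, giving $|Z|\leq\kappa(X^*)=k'$) makes the argument of Lemma~\ref{lem:minimal:separators:are:computable} produce $X^*\subseteq X'$ via submodularity and maximality, whence $X'\in\CT$ by the tangle axioms. The sandwich $B_1\subseteq X'\subseteq\bar B_2$ forces $\kappa(X')\geq k'$, so in fact $\kappa(X')=k'$ and $X'\in\CL(B)$. Since $|X|$ drops by at least one per successful iteration and for fixed $k$ the loop over $(x,Z)$ is polynomial in $|U|$, the whole procedure runs in polynomial time.

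\textbf{Main obstacle.} The subtle point is that Lemma~\ref{lem:minimal:separators:are:computable} only controls membership in $\CT$, whereas we need to maintain the extra constraint $\kappa(X)=k'$ characterising $\CL(B)$. The sandwich $B_1\subseteq X\subseteq\bar B_2$ automatically provides the lower bound $\kappa(X)\geq k'$, so imposing the upper bound $\kappa(X)\leq k'$ in each greedy maximization is sufficient to land exactly in $\CL(B)$; the initial existence check similarly collapses to a single oracle query at $X_\top$.
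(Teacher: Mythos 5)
Your uniqueness argument is exactly the paper's: both prove $\CT\cap\CL(B)$ is intersection-closed using Lemmas~\ref{lem:f3} and~\ref{lem:f1} and the tangle axioms.

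For the algorithm, the paper simply says ``apply Lemma~\ref{lem:minimal:separators:are:computable} with $n=1$, $\CT_1=\CT$, $Y_1=B_1$, $Y_2=B_2$'' (where $Y_2$ must be read as $\bar B_2$). Taken literally, that lemma returns an inclusion-minimal element of $\{X\in\CT : B_1\subseteq X\subseteq\bar B_2\}$ and bounds $\kappa$ only by $k-1$; such an element need not lie in $\CL(B)$, since $\CL(B)$ further demands $\kappa(X)=k'=\kappa_{\min}(B_1,B_2)$. The degenerate base $B=(\emptyset,\emptyset)$ makes this concrete: there the sandwich condition is vacuous, $\CL(B)=\{X:\kappa(X)=0\}$, and the minimal element of $\CT$ itself typically has positive order. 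You explicitly repair this by running the same greedy-shrinking scheme but imposing $\kappa(X')\le k'$ in place of $\kappa(X')\le k-1$; combined with the automatic lower bound $\kappa(X')\ge k'$ from $B_1\subseteq X'\subseteq\bar B_2$, this pins the iterate inside $\CL(B)$. Your ``main obstacle'' paragraph correctly identifies this as the crux. The existence test via a single oracle query at $X_\top(B_1,B_2)$ is also a clean and correct preliminary step. So: same overall route, but your version makes explicit a nontrivial adjustment that the paper's one-line invocation does not.

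One small slip: in the shrinking step you compute the inclusion-maximal $X'$ with $B_1\cup Z\subseteq X'\subseteq \bar B_2\setminus\{x\}$. For the termination argument (``$|X|$ drops by at least one'') you need $X'\subsetneq X$, which is only guaranteed if you instead constrain $X'\subseteq X\setminus\{x\}$; the submodularity argument giving $X^*\subseteq X'$ still goes through unchanged under that tighter constraint, since $X^*\subseteq X\setminus\{x\}$. As written, the accepted $X'$ could be incomparable with the current $X$.
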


\begin{proof}
  For the uniqueness, it is sufficient to prove that
  $\CT\cap\CL(B)$ is closed under taking intersections. So let
  $X_1,X_2\in\CT\cap\CL(B)$. Then by Lemma~\ref{lem:f3}, $X_1\cap
  X_2\in\CL(B)$ and thus by Lemma~\ref{lem:f1},
  $\kappa(X_1\cap X_2)=\kappa(X_1)=\kappa(X_2)$. As 
  $
  \bar{X_1\cap X_2}\cap X_1\cap X_2=\emptyset,
  $
  we have $X_1\cap X_2\in\CT$.

  The algorithmic claim follows by applying  Lemma~\ref{lem:minimal:separators:are:computable} with~$n= 1$,~$\mathcal{T}_1 = \mathcal{T}$,~$Y_1 = B_1$ and~$Y_2 = B_2$.
\end{proof}

\section{A Data Structure for Tangles}
\label{sec:ds}

The following crucial lemma will enable us to iteratively extend our
tangle data structure.

\begin{lemma}\label{lem:given:Xs:compute:tangle:avoiding}
  There is a polynomial time algorithm that, given oracle access to
  $\kappa$, a membership oracle for a $\kappa$-tangle $\CT_0$ of order at most $k$, and sets
  $X_1,\ldots,X_n\subseteq U$ with $\kappa(X_i)\le k$, decides if there
  is a $\kappa$-tangle $\CT\supseteq\CT_0$ of order $k+1$ that avoids
  $X_1,\ldots,X_n$.
\end{lemma}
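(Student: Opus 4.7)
The strategy is to reformulate the question via the Duality Lemma (Lemma~\ref{lem:duality}) and then decide the resulting decomposition problem via a dynamic program over bases.

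Let $\CX_0 := \{X_1, \dots, X_n\} \cup \{\bar Y \mid Y \in \CT_0\}$ and let $\hat\CX$ be its downward closure under subsets. I first claim that a $\kappa$-tangle $\CT$ of order $k+1$ with $\CT \supseteq \CT_0$ and $X_i \notin \CT$ for all $i$ exists if and only if a $\kappa$-tangle of order $k+1$ avoiding $\hat\CX$ exists. For the backward direction, avoiding $\hat\CX$ implies in particular that $X_i \notin \CT$ and $\bar Y \notin \CT$ for every $Y \in \CT_0$; since $\kappa(Y) < k+1$, axiom~\ref{li:t1} then forces $Y \in \CT$. For the forward direction, axiom~\ref{li:t1} forces $\bar X_i \in \CT$. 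For any $Z \in \hat\CX$ of order less than $k+1$, either $Z \subseteq X_i$ for some $i$, in which case $Z \cap \bar X_i = \emptyset$ and hence $Z \notin \CT$ by axiom~\ref{li:t2}, or $Z \subseteq \bar Y$ for some $Y \in \CT_0 \subseteq \CT$, in which case $Z \cap Y = \emptyset$ and again $Z \notin \CT$ by~\ref{li:t2}.

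Since $\hat\CX$ is subset-closed, the Duality Lemma gives the equivalence: the desired tangle exists if and only if there is no partial decomposition of $\kappa$ of width less than $k+1$ over $\hat\CX \cup \CS_U$. Membership in $\hat\CX$ is decidable in polynomial time: $Z \in \hat\CX$ iff $Z \subseteq X_i$ for some $i$ (directly checkable), or there exists $Y \in \CT_0$ with $Y \subseteq \bar Z$, which can be decided by invoking Lemma~\ref{lem:minimal:separators:are:computable} with $\CS := \CT_0$, $Y_1 := \emptyset$, $Y_2 := \bar Z$.

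It remains to decide in polynomial time whether a partial decomposition of $\kappa$ of width less than $k+1$ over $\hat\CX \cup \CS_U$ exists. I would approach this via a dynamic program indexed by bases of order at most $k$: by Lemma~\ref{lem:f2}, every set $W$ of order at most $k$ arising in a decomposition lies in some lattice $\CL(B_1, B_2)$ for a base $(B_1, B_2)$ with $|B_1|, |B_2| \le k$, of which there are only $O(|U|^{2k})$, and the extremal representatives $X_\bot(B_1, B_2)$ and $X_\top(B_1, B_2)$ are computable in polynomial time via submodular minimisation. For each base one decides inductively whether the corresponding region admits a decomposition into pieces from $\hat\CX \cup \CS_U$ by searching, via Lemma~\ref{lem:minimal:separators:are:computable}, for an admissible split into two smaller pieces whose bases lie in the table. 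The main obstacle is proving correctness of this base-indexed scheme: one has to show that whenever any admissible partial decomposition exists, the Exactness Lemma combined with replacing each internal set by an extremal representative of its base yields one whose states all lie in the polynomial-size table, so that the dynamic program is not only sound but also complete.
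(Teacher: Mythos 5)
Your first half is correct and matches the paper: you use the Duality Lemma to reduce to the question of whether there is a partial decomposition of width at most $k$ over the subset-closed family $\hat\CX\cup\CS_U$, and your verification of the equivalence (including the membership test for $\hat\CX$) is fine. But the second half has a genuine gap, which you yourself flag as the ``main obstacle'': you never actually give the algorithm for deciding the existence of that partial decomposition, and the dynamic-programming scheme you sketch does not work as stated.

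The problem with the ``base-indexed DP'' idea is twofold. First, the state is ill-specified: a base $B$ does not have a single ``corresponding region'' but an entire lattice $\CL(B)$, and a boolean state (``the region admits a decomposition'') is not enough information to propagate through a split. A set $Y$ arising as one side of a split inside a decomposition is an arbitrary element of $\CL(B')$ for its base $B'$, not necessarily $X_\top(B')$ or $X_\bot(B')$, so knowing only whether \emph{some} set with base $B'$ is decomposable does not tell you whether the particular $Y$ you need is. Second, the recursion has no obvious well-founded order: the natural recurrence (state$(B)$ holds if some $X\in\CL(B)$ splits as $X=Y_1\uplus Y_2$ with state$(B_1)$, state$(B_2)$ true for bases $B_1,B_2$ of $Y_1,Y_2$) is circular, since $\kappa(Y_i)$ need not be smaller than $\kappa(X)$, and the Exactness Lemma does not give you a size or order measure that strictly decreases at the children. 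So the scheme as described is neither obviously sound nor obviously terminating.

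What the paper does instead (following Oum and Seymour) is not a well-founded DP but a monotone saturation. For each base $B\in\CB_{\le k}$ it maintains a \emph{set} $\mu_i(B)\in\CL(B)\cup\{\emptyset\}$, intended to be a lower bound on the largest ``decomposable'' set in $\CL(B)$ (where $Y$ is decomposable means: there is a partial decomposition of width $\le k$ with one leaf labelled $\bar Y$ and all other leaves in $\CX$). Initially $\mu_1(B)$ is seeded from the elements of $\CX$ that lie in $\CL(B)$. Then one repeatedly looks for a violation of the closure condition $(\star)$: a $Y\in\CL(B)$ contained in $\mu_i(C)\cup\mu_i(D)$ for some bases $C,D$ but not contained in $\mu_i(B)$; if found, $\mu_{i+1}(B):=\mu_i(B)\cup Y$ (which is still in $\CL(B)$ by the lattice property and is still decomposable by the ``combine two decomposable sets'' claim). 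Since each step strictly enlarges some $\mu(B)$ and there are polynomially many bases, this terminates in polynomially many rounds. At the fixed point, one checks whether $\mu(B)\cup\mu(C)=U$ for some pair of bases; the heart of the correctness proof is that at the fixed point every separation $\tilde\xi(s,t)$ of \emph{any} exact partial decomposition of width $\le k$ over $\CX$ is contained in some $\mu(B)$, shown by induction along the tree using $(\star)$. This replaces your missing completeness argument: instead of trying to show a decomposition can be normalised so its internal labels lie in a fixed polynomial table, one shows that the saturated $\mu$ already dominates all labels of any decomposition. You would need to discover and prove something equivalent to make your DP proposal go through.
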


\begin{proof}
  Observe that there
  is a $\kappa$-tangle $\CT\supseteq\CT_0$ of order $k+1$ that avoids
  $X_1,\ldots,X_n$ if and only if there is a $\kappa$-tangle of order
  $k+1$ that avoids
  \[
  \CX:=\underbrace{\{X\mid X\subseteq \bar Y\text{ for some
    }Y\in\CT_0\}}_{=:\CX_0}
  \;\cup\;
  \underbrace{\{X\mid X\subseteq X_i\text{ for some
    }i\in[n]\}}_{=:\CX_1}
  \;\cup\;\CS_U.
  \]
  By the Duality Lemma, this is the case if and only if there is no
  partial decomposition of $\kappa$ of width less than $k+1$ over
  $\CX$.
  Our algorithm will test whether such a partial decomposition exists.

  For a set $Y\subseteq U$, we say that a \emph{$Y$-decomposition} is
  a partial decomposition $(T,\tilde\xi)$ of $\kappa$ of width less
  than $k+1$ that has a leaf $t_Y$, which we call the
  \emph{$\bar Y$}-leaf, such that $\xi(t_Y)=\bar Y$ and $\xi(t)\in\CX$
  for all $t\in L(T)\setminus\{t_Y\}$. We call a set $Y$
  \emph{decomposable} if there is a $Y$-decomposition.

  \begin{claim}
    Let $X,Y,Z\subseteq U$ such that $X$ and $Y$ are decomposable and
    $Z\subseteq X\cup Y$ and 
    $\kappa(Z)\le k$. Then $Z$ is decomposable.

    \proof
    Let $(T_X,\tilde\xi_X)$ be an $X$-decomposition and
    $(T_Y,\tilde\xi_Y)$ a $Y$-decomposition. Let $t_X$ be the $\bar X$-leaf of
    $T_X$, and let $t_Y$ be the $\bar Y$-leaf of
    $T_Y$. We form a new tree $T$ by taking
    the disjoint union of $T_X$ and $T_Y$, identifying the two leaves
    $t_X$ and $t_Y$, and adding a fresh node $t_Z$ and an edge between
    $t_Z$ and the
    node $s_Z$ corresponding to $t_X$ and $t_Y$. We define $\tilde\xi:\vec
    E(T)\to 2^U$ by 
    \[
    \tilde\xi(s,t):=
    \begin{cases}
      \tilde\xi_X(s,t)&\text{if }(s,t)\in\vec E(T_X),\\
      \tilde\xi_Y(s,t)&\text{if }(s,t)\in\vec E(T_Y),\\
      Z&\text{if }(s,t)=(t_Z,s_Z),\\
      \bar{Z}&\text{if }(s,t)=(s_Z,t_Z).
    \end{cases}
    \]
    Then $(T,\tilde\xi)$ is a partial decomposition of $\kappa$,
    because $X\cup Y\cup\bar Z=U$. Its width is at most $k$ because
    $\kappa(Z)\le k$ and the width of both $(T_X,\tilde\xi_X)$ and
    $(T_Y,\tilde\xi_Y)$ is at most $k$. We have $\xi(t_Z)=\bar Z$, and for all leaves
    $t\in L(T)\setminus\{t_Z\}$, either $t\in L(T_X)\setminus\{t_X\}$
    and $\xi(t)=\xi_X(t)\in\CX$ or $t\in L(T_Y)\setminus\{t_Y\}$ and
    $\xi(t)=\xi_Y(t)\in\CX$. Thus $(T,\tilde\xi)$ is a
    $Z$-decomposition.
    \uend
  \end{claim}

  \begin{claim}[2]
    Let $X,Y\subseteq U$ such that $X$ and $Y$ are decomposable and
    $X\cup Y=U$. Then there is a partial decomposition of $\kappa$ of
    width at most $k$ over $\CX$.

    \proof
    This follows from Claim~1 by observing that an $U$-decomposition
    is a partial decomposition of $\kappa$ over $\CX$ of width at most
    $k$. Here we use the facts that $\bar U=\emptyset\in \CX$ and that
    $\kappa(\emptyset)=0\le k$.
    \uend
  \end{claim}

  Our algorithm iteratively computes mappings $\mu_1,\ldots,\mu_m:\CB_{\le k}\to
  2^U$ such that for all $B=(B_1,B_2)\in\CB_{\le k}$, either
  $\mu_i(B)=\emptyset$ or $\mu_i(B)\in\CL(B)$ and 
  $\mu_i(B)$ is decomposable.

  To define $\mu_1$, for every base $B=(B_1,B_2)\in\CB_{\le k}$, we
  first check if $\CT_0\cap\CL(B_2,B_1)\neq\emptyset$ (note the
  reversed order in the base) and, if it
    is,  compute $X_\bot(\CT_0,(B_2,B_1))$ (see Lemma~\ref{lem:f4}). 
  \begin{itemize}
  \item If $\CT_0\cap \CL(B_2,B_1)\neq\emptyset$, we let
    $Y_0:=\bar{X_\bot(\CT_0,(B_2,B_1))}$. Observe that $Y_0$ is the unique
    inclusion-wise maximal element in $\CL(B)\cap \CX_0$.  Otherwise, we
    let $Y_0:=\emptyset$.
  \item If $\CL(B)\cap\CX_1\neq\emptyset$, then we let $Y_1$ be the
    union of all elements of $\CL(B)\cap\CX_1$. Otherwise, we let
    $Y_1:=\emptyset$.
  \item If $\CL(B)\cap\CS_U\neq\emptyset$, then we let $Y_2$ be the
    union of all elements of $\CL(B)\cap\CS_U$. Otherwise, we let
    $Y_2:=\emptyset$.
  \end{itemize}
  We let $\mu_1(B):=Y_0\cup Y_1\cup Y_2$. Observing that the elements
  of~$\CX$ are trivially decomposable, it follows from Claim~1 and the
  fact that~$\CL(B)$ is a lattice that if $\mu_1(B)\neq\emptyset$ then
  it is decomposable.

  Now suppose that $\mu_i$ is defined. Consider the following
  condition.
  \begin{itemize}
  \item[($\star$)] For all
    $B,C,D\in\CB_{\le k}$ and
    $Y\in\CL(B)$, if $Y\subseteq\mu_i(C)\cup\mu_i(D)$, then $Y\subseteq\mu_i(B)$.
  \end{itemize}
  If ($\star$) is satisfied, then we let $m:=i$, and the
  constructions stops.

  Otherwise, we find $B,C,D\in\CB_{\le k}$ and
  $Y\in\CL(B)$ such that $Y\subseteq\mu_i(C)\cup\mu_i(D)$ and $Y\not\subseteq\mu_i(B)$. (Oum and Seymour \cite{oumsey07}
  explain how to do this in polynomial time: for all $B,C,D\in\CB_{\le
    k}$ and for all $y\in\bar{\mu_i(B)}$ 
  we test if
  there is a $Y\in\CL(B)$ with $\{y\}\subseteq
  Y\subseteq\mu_i(C)\cup\mu_i(D)$.) 

  We let $\mu_{i+1}(B):=\mu_i(B)\cup
  Y$. It follows from Claim~1 that $\mu_{i+1}(B)$ is decomposable. For
  all $B'\in\CB_{\le k}\setminus\{B\}$, we let $\mu_{i+1}(B'):=\mu_i(B')$.
  This completes the description of the construction.
  
Since in each step of the construction we strictly increase the set
$\mu_i(B)$ for some base $B$ and since the number of bases is
polynomially bounded in $|U|$, the number $m$ of steps of the
construction is polynomially bounded as well.
  Let $\mu:=\mu_m$. Note that $\mu$ satisfies ($\star$).

  \begin{claim}[3]
    The following are equivalent.
    \begin{enumerate}
    \item There is a partial decomposition of $\kappa$ of
    width at most $k$ over $\CX$.
    \item There are $B,C\in\CB_{\le k}$ such that $\mu(B)\cup\mu(C)=U$.
    \end{enumerate}

    \proof
    The implication ``(2)$\implies$(1)'' follows immediately from
    Claim~2.

    To prove ``(1)$\implies$(2)'', let $(T,\tilde\xi)$ be an exact
    partial decomposition of $\kappa$ of width at most $k$ over
    $\CX$. Let
    \[
    \CY:=\{Y\mid \kappa(Y)\le k\text{ and }Y\subseteq\mu(B)\text{ for some }B\in\CB_{\le k}\}.
    \]
    By induction on $T$, we shall prove that $\tilde\xi(s,t)\in\CY$
    for all $(s,t)\in\vec E(T)$.

    In the base step, we consider $(s,t)\in\vec E(T)$ where $t$ is a
    leaf. Then $\xi(t)\in\CX$. We let $B$ be a base for
    $\xi(t)$. Then $\xi(t)\subseteq\mu_1(B)\subseteq\mu(B)$ and thus
    $\tilde\xi(s,t)=\xi(t)\in\CY$.

    For the inductive step, consider an edge $(s,t) \in\vec E(T)$ such
    that $t$ is an inner node and such that
    $\tilde\xi(t,u),\tilde\xi(t,v)\in\CY$ for the other two
    neighbours $u,v$ of $t$. Let $C,D\in\CB_{\le k}$ such that
    $\tilde\xi(t,u)\subseteq\mu(C)$ and $\tilde\xi(t,v)\subseteq\mu(D)$.
    Let $B=(B_1,B_2)$ be a base for $Y:=\tilde\xi(s,t)$, and note
    that
    $Y=\tilde\xi(t,u)\cup\tilde\xi(t,v)$, because $(T,\tilde\xi)$ is
    an exact partial decomposition. Hence $Y\subseteq\mu(C)\cup\mu(D)$. By ($\star$), we have $Y\subseteq\mu(B)$ and thus $Y\in\CY$.

    Thus $\tilde\xi(s,t)\in\CY$ for all $(s,t)\in\vec E(T)$. Consider
    an arbitrary edge $st\in E(T)$. Let $B,C\in\CB_{\le k}$ such
    that $\tilde\xi(s,t)\subseteq\mu(B)$ and
    $\bar{\tilde\xi(s,t)}=\tilde\xi(t,s)\subseteq\mu(C)$. Then
    $\mu(B)\cup\mu(C)=U$.  \uend
  \end{claim}

  Since Condition (2) of Claim~3 can be tested in polynomial time, this completes
  the proof.
\end{proof}

Observe that the following result due to Oum and Seymour \cite{oumsey07} follows by
applying the theorem with $\CT_0$ being the empty tangle (of order
$0$) and $n=0$. In fact, our proof of the lemma builds on Oum and
Seymour's proof.

\begin{corollary}[Oum and Seymour~\cite{oumsey07}]\label{cor:oumsey}
  For every $k$ there is a polynomial time algorithm deciding whether
  there is a $\kappa$-tangle of order $k$.
\end{corollary}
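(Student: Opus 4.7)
The plan is to reduce directly to Lemma~\ref{lem:given:Xs:compute:tangle:avoiding}, which is the crux of the preceding work. First I would handle the trivial edge case $k=0$: the empty family $\emptyset\subseteq 2^U$ vacuously satisfies axioms \ref{li:t0}--\ref{li:t3} for order $0$, so a $\kappa$-tangle of order $0$ always exists, and the algorithm simply returns ``yes''.

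For $k\ge 1$, I would instantiate Lemma~\ref{lem:given:Xs:compute:tangle:avoiding} at the parameter $k-1$ with the following inputs: the tangle $\CT_0:=\emptyset$ (viewed as the unique $\kappa$-tangle of order $0$, which is $\le k-1$), the list of forbidden sets is empty (so $n=0$), and the membership oracle for $\CT_0$ is the trivial oracle that always answers ``no'' (certainly polynomial time). The lemma then decides in polynomial time whether there exists a $\kappa$-tangle $\CT\supseteq\CT_0$ of order $k$ avoiding the empty list of sets. Since every $\kappa$-tangle trivially contains $\emptyset=\CT_0$ and the avoidance condition is vacuous, this is precisely the question of whether any $\kappa$-tangle of order $k$ exists.

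The bulk of the work has already been carried out in the proof of Lemma~\ref{lem:given:Xs:compute:tangle:avoiding}: the construction of the decomposable-set mappings $\mu_i$ on bases $\CB_{\le k-1}$, the fixed-point iteration using condition $(\star)$, the bound on the number of iterations via the polynomial size of $\CB_{\le k-1}$ and the strict growth of $\mu_i(B)$, and the final test ``$\mu(B)\cup\mu(C)=U$ for some $B,C\in\CB_{\le k-1}$''. Under the Duality Lemma (Lemma~\ref{lem:duality}), the nonexistence of a tangle of order $k$ is equivalent to the existence of a partial decomposition of width less than $k$ over $\CS_U$, so the algorithm from Lemma~\ref{lem:given:Xs:compute:tangle:avoiding} applied with the inputs above directly certifies one of the two possibilities. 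There is no genuine obstacle left; the only thing to double-check is the bookkeeping of the order shift (the lemma produces tangles of order ``$k+1$'' from inputs of order ``$\le k$'', so we invoke it at parameter $k-1$ to obtain tangles of order $k$), which I have already accounted for above.
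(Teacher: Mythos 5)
Your proposal is correct and matches the paper's own one-line observation: apply Lemma~\ref{lem:given:Xs:compute:tangle:avoiding} at parameter $k-1$ with $\CT_0$ the empty tangle of order $0$ and $n=0$, so that the lemma decides existence of a tangle of order $k$. The extra remarks about the Duality Lemma and the internal workings of the lemma's proof are accurate context but not needed for the corollary.
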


A \emph{comprehensive tangle data structure} of order~$k$ for a connectivity function~$\kappa$ over a set~$U$ is a data structure~$\mathcal{D}$ with procedures~$\order_\mathcal{D}, \size_\mathcal{D}, \mathcal{T}_\mathcal{D}, \tangorder_\mathcal{D}, \trunc_\mathcal{D}$, $\sep_\mathcal{D}$, and~$\find_\mathcal{D}$ that provide the following functionalities.

\begin{enumerate}
\item The function~$\order_\mathcal{D}()$ returns the fixed integer~$k$.
\item For~$\ell\in \{1,\ldots,k\}$ the function~$\size_\mathcal{D}(\ell)$ returns the number of~$\kappa$-tangles of
  order at most~$\ell$. We denote the number of~$\kappa$-tangles of
    order at most~$k$ by~$|\mathcal{D}|$.
\item For each~$i\in \{1,\ldots, |\mathcal{D}|\}$ the
  function~$\mathcal{T}_\mathcal{D}(i,\cdot)\colon 2^U \rightarrow
  \{0,1\}$ is a tangle~$\mathcal{T}_i$ of order
  at most~$k$, (i.e., the function call~$\mathcal{T}_\mathcal{D}(i,X)$
  determines whether~$X\in  \mathcal{T}_i$).

  We call $i$ the \emph{index} of the tangle $\CT_i$ within the data
  structure.
\item For~$i\in \{1,\ldots,|\mathcal{D}|\}$ the function~$\tangorder_\mathcal{D}(i)$ returns~$\ord(\mathcal{T}_i)$. %
\item For~$i\in\{1,\ldots,|\mathcal{D}|\}$ and~$\ell \leq \ord(\CT_i)$ the call~$\trunc_\mathcal{D}(i,\ell)$ returns an integer~$j$ such that~$\mathcal{T}_j$ is the truncation of~$\mathcal{T}_i$ to order~$\ell$. If~$\ell >\ord(\CT_i)$ the function returns~$i$.
\item For distinct~$i, j \in \{1,\ldots,|\mathcal{D}|\}$ the call~$\sep_\mathcal{D}(i,j)$ outputs a set~$X\subseteq U$
  such that~$X$ is the leftmost
  minimum~$(\mathcal{T}_i,\mathcal{T}_j)$-separation
(see Lemma~\ref{lem:lm-tansep})
or states that no such set exists (in which case one of the tangles is a truncation of the other).
\item Given an integer~$\ell \in\{0,\ldots,k\}$ and some tangle~$\mathcal{T}'$ of order~$\ell$ (via a
  membership oracle) the function~$\find_\mathcal{D}(\ell,\mathcal{T'})$,
  returns the index of $\CT'$, that is, the unique integer~$i \in \{1,\ldots,|\mathcal{D}|\}$ such that~$\ord(\mathcal{T}_i) = \ell$ and~$\mathcal{T}' = \mathcal{T}_i$.
\end{enumerate}

\begin{remark}
Note that, in accordance with Remark~\ref{rem:tangle-order1}, a tangle data structure considers tangles as distinct even if they only differ in their order and not as a function~$2^U \rightarrow \{0,1\}$. Recall that a tangle of order~$k$ agrees with its truncation to order~$k-1$ as a function~$2^U \rightarrow \{0,1\}$ if and only if there is no $X\subseteq U$ of order $\kappa(X)=k-1$. Due to the existence of free sets (Lemma~\ref{lem:existence:of:free:sets}) it is possible to check in polynomial time whether such an~$X$ exists and thus determine which tangles coincide.
\end{remark}

We say a comprehensive tangle data structure~$\mathcal{D}$ is \emph{efficient} if all functions~$\order_\mathcal{D}$, $\size_\mathcal{D}$, $\mathcal{T}_\mathcal{D}$, $\tangorder_\mathcal{D}$, $\trunc_\mathcal{D}$, $\sep_\mathcal{D}$, and~$\find_\mathcal{D}$ can be evaluated in polynomial time.

\begin{theorem}\label{theo:ds}
For every constant~$k$ there is a polynomial time algorithm that,
given oracle access to a connectivity function $\kappa$, computes an efficient
comprehensive tangle data  structure of order~$k$.
\end{theorem}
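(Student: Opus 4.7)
The plan is induction on the order $\ell = 0, 1, \ldots, k$, extending the data structure one order at a time. The base case~$\ell = 0$ is immediate since the empty family is the unique $\kappa$-tangle of order~$0$. In the inductive step, every $\kappa$-tangle of order~$\ell+1$ either coincides with its truncation to order~$\ell$ as a subset of~$2^U$ (and thereby appears as a distinct entry by the convention of Remark~\ref{rem:tangle-order1}) or strictly extends such a truncation; it therefore suffices, for each tangle $\CT_0$ of order~$\ell$ already in the data structure, to enumerate all tangles of order~$\ell+1$ that extend~$\CT_0$.

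For the compact representation of each tangle~$\CT_i$ I would store the function $B \mapsto X_\bot(\CT_i,B)$ on $B\in\CB_{\le k}$, together with a flag indicating when $\CT_i\cap\CL(B)=\emptyset$. The membership test $\CT_\CD(i,X)$ would then compute some base~$B$ of~$X$ via Lemma~\ref{lem:existence:of:free:sets} and return $X\in\CT_i$ iff $\CT_i\cap\CL(B)\neq\emptyset$ and $X\supseteq X_\bot(\CT_i,B)$; correctness follows from the lattice structure of $\CL(B)$ (Lemma~\ref{lem:f3}) combined with the triple-intersection axiom~\ref{li:t2}, which shows that $\bar X\in\CT_i$ together with $X_\bot(\CT_i,B)\subseteq X$ would force $\bar X\cap X_\bot(\CT_i,B)=\emptyset$, a contradiction. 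Since $|\CB_{\le k}|$ is polynomial for fixed~$k$, this representation is of polynomial size and admits a polynomial-time oracle.

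For the enumeration of extensions of a fixed $\CT_0$, the workhorse is Lemma~\ref{lem:given:Xs:compute:tangle:avoiding}. The key structural observation is a \emph{canonical witness}: any two distinct extensions $\CT$ and $\CT^i$ of $\CT_0$ to order $\ell+1$ must satisfy $\CT^i\setminus\CT\neq\emptyset$ (else one would be a truncation of the other, impossible at equal order), and for any $Y\in\CT^i\setminus\CT$ with base~$B$ we have $X_\bot(\CT^i,B)\subseteq Y$ and hence $X_\bot(\CT^i,B)\notin\CT$, again by axiom~\ref{li:t2}. Consequently, letting $\CE=\{\CT^1,\ldots,\CT^m\}$ denote the extensions found so far, any further extension must avoid, for each~$i$, at least one element of the polynomial-size family $\{X_\bot(\CT^i,B) : B\in\CB_{\le\ell},\; \CT^i\cap\CL(B)\neq\emptyset\}$. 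I would iterate Lemma~\ref{lem:given:Xs:compute:tangle:avoiding} over suitable such avoidance families, and for each newly discovered extension build its $X_\bot$-representation by applying Lemma~\ref{lem:f4} through Lemma~\ref{lem:minimal:separators:are:computable} (using the oracle inherited from $\CT_0$ at lower orders). By Fact~\ref{fact:lin}, $|\CE|$ stays bounded by~$O(n)$, so the loop terminates in polynomially many iterations.

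The remaining routines are straightforward: $\order_\CD$, $\size_\CD$, $\tangorder_\CD$, and $\trunc_\CD$ are list bookkeeping using the order annotations of Remark~\ref{rem:tangle-order1}; $\sep_\CD(i,j)$ applies Lemma~\ref{lem:minimal:separators:are:computable} to the union oracle for $\CT_i\cup\CT_j$ with $Y_1=\emptyset$, $Y_2=U$ to produce the leftmost minimum $(\CT_i,\CT_j)$-separation of Lemma~\ref{lem:lm-tansep}; and $\find_\CD(\ell,\CT')$ scans the order-$\ell$ tangles and uses the canonical $X_\bot$-witnesses to certify equality or exhibit a distinguishing set. The main obstacle is the enumeration step itself: Lemma~\ref{lem:given:Xs:compute:tangle:avoiding} is merely a decision procedure, and converting its yes/no answers into polynomially many explicit extensions, each equipped with its compact representation, requires both the canonical $X_\bot$-witnesses to guide the avoidance sets and the linear tangle bound of Fact~\ref{fact:lin} to prevent the search from branching super-polynomially.
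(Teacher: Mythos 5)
Your structural observations are sound --- the $X_\bot$-representation, the membership test via bases and the lattice structure of $\CL(B)$, and the ``canonical witness'' argument that $X_\bot(\CT^i,B)\notin\CT$ for a suitable base $B$ --- and they run parallel to Claim~1 in the paper's proof, which similarly finds, for each base $B$, an inclusion-wise minimal element of $\CL(B)$ in (the union of) the tangles at hand and shows one of these must separate two of them. But your final paragraph concedes rather than closes the gap: you identify that the enumeration must ``convert yes/no answers into polynomially many explicit extensions'' and that this ``requires'' the witnesses plus Fact~\ref{fact:lin}, without giving the mechanism. Having these ingredients available does not by itself yield a polynomial-time search: once $m$ extensions $\CT^1,\ldots,\CT^m$ are found, the constraint ``for each~$i$ avoid at least one of $\{X_\bot(\CT^i,B) : B\in\CB_{\le\ell}\}$'' is a conjunction of disjunctions, and the naive way to feed it to Lemma~\ref{lem:given:Xs:compute:tangle:avoiding} (which takes a flat list of sets) is to iterate over $m$-tuples $(B_1,\ldots,B_m)$ --- a $|\CB_{\le\ell}|^m$ search, super-polynomial in general. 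The paper avoids this by building a binary \emph{decision tree}: each internal edge carries a single separating set, each leaf corresponds to the avoidance list along its root-to-leaf path, and the construction refines one leaf per step by finding one new separating set; Fact~\ref{fact:lin} then bounds the number of splits. That incremental, path-based organisation of the constraints is the missing algorithmic content.

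There is a second, related issue of circularity in your representation. To compute $X_\bot(\CT^i,B)$ via Lemma~\ref{lem:f4} you need a membership oracle for $\CT_i$, but your proposed oracle for $\CT_i$ is built \emph{from} the stored $X_\bot(\CT^i,\cdot)$ values; and the oracle ``inherited from $\CT_0$'' only decides membership for sets of order $<\ell$, which is not enough to determine $X_\bot(\CT^i,B)$ when $\CL(B)$ consists of sets of order exactly~$\ell$. The paper breaks this cycle by making the avoidance decision procedure of Lemma~\ref{lem:given:Xs:compute:tangle:avoiding} (with the path constraints for the leaf of $\CT_i$) serve directly as the membership oracle; your $X_\bot$-table could then be derived afterwards if desired, but it cannot be the foundation. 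Notice also that the paper's separating sets $X^*(B)$ are computed from the \emph{union} oracle for the not-yet-identified tangles at a leaf, precisely because the individual oracles do not exist yet --- whereas your witnesses $X_\bot(\CT^i,B)$ presuppose the individual oracle for $\CT^i$, so they can only be used to distinguish a new tangle from already-identified ones, not to split an unresolved leaf. Fixing both points essentially reconstructs the paper's tree-refinement argument.
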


\begin{proof}
Let~$k$ be an integer and let~$\kappa \colon 2^U\to\NN$ be a connectivity function. Note that for~$k=0$ it is trivial to construct an efficient comprehensive tangle data structure. 
We suppose by induction that we  have already constructed an efficient comprehensive tangle data  structure of order~$k-1$ in polynomial time.
We first verify that there exists some tangle of order~$k$ (using
Corollary~\ref{cor:oumsey}). If this is not the case then the comprehensive tangle data structure of order~$k-1$ is already a comprehensive tangle data structure of order~$k$. 
Otherwise we proceed as follows.

We compute a binary rooted tree~$T$ and a function~$S$ assigning to every edge~$(s,t)\in E(T)$ that is pointing away from the root a subset~$S(s,t)\subseteq U$  with~$\kappa(S(s,t))< k$ such that the following properties hold.

\begin{enumerate}
\item If~$s$ is a vertex of~$T$ with children~$t_1$ and~$t_2$ then~$S(s,t_1) = \overline{S(s,t_2)}$.\label{item:sets:separate:tangles}
\item  For each path~$p_1,\ldots,p_m$ form the root to a leaf 
there is a tangle~$\mathcal{T}$ of order~$k$ such that $S(p_1,p_2),S(p_2,p_3),\ldots,S(p_{m-1},p_{m}) \in \mathcal{T}$.\label{item:leaves:correspond:to:tangles}
\item For each path~$p_1,\ldots,p_m$ form the root to a leaf there is at most one tangle~$\mathcal{T}$ of order~$k$ satisfying property~\ref{item:leaves:correspond:to:tangles}.\label{item:leaves:correspond:to:at:most:one:tangles}
\end{enumerate}

  \begin{claim}A tree~$T$ satisfying properties~\ref{item:sets:separate:tangles}--\ref{item:leaves:correspond:to:at:most:one:tangles}  can be computed in polynomial time.
  
  \proof
We construct the tree iteratively from subtrees which satisfy properties~\ref{item:sets:separate:tangles} and~\ref{item:leaves:correspond:to:tangles} but not necessarily property~\ref{item:leaves:correspond:to:at:most:one:tangles}.
We say a tangle corresponds to a leaf~$u$ if it satisfies property~\ref{item:leaves:correspond:to:tangles} for the path from the root to~$u$. Note that properties~\ref{item:sets:separate:tangles} and~\ref{item:leaves:correspond:to:tangles} imply that each tangle corresponds to exactly one leaf and that property~\ref{item:leaves:correspond:to:at:most:one:tangles} implies that different tangles correspond to different leaves.

We start with the tree~$T_0$ that only contains one vertex~$r$. It satisfies properties~\ref{item:sets:separate:tangles} and~\ref{item:leaves:correspond:to:tangles}. Suppose we have constructed a tree~$T_i$ satisfying properties~\ref{item:sets:separate:tangles} and~\ref{item:leaves:correspond:to:tangles}. To construct~$T_{i+1}$ it suffices to find a leaf~$u$ and a set~$X$ with~$\kappa(X)<k$ such that there are tangles~$\mathcal{T}_1$ and~$\mathcal{T}_2$ of order at most~$k$ both corresponding to~$u$ with~$X\in \mathcal{T}_1$ and~$X \notin \mathcal{T}_2$.

Note that for each candidate set~$X$ and each
leaf~$u$ we can determine by Lemma~\ref{lem:given:Xs:compute:tangle:avoiding} in polynomial time whether there are two tangles corresponding to~$u$ that are separated by~$X$. It thus suffices for us to compute a set of candidates for~$X$ among which there is an adequate separator. For a leaf~$u$ we proceed as follows. Let~$X_1 = S(p_1,p_2),\ldots,X_{m-1} = S(p_{m-1},p_{m})$ where~$p_1,\ldots,p_m$ is the path from the root to~$u$. (The collection of sets~$X_i$ is empty if the tree has only one node.)
Let~$\mathcal{T}_1,\ldots,\mathcal{T}_n$ be the tangles of order~$k$ which correspond to~$u$. These are exactly the tangles of order~$k$ avoiding~$\overline{X_1},\ldots,\overline{X}_{m-1}$.
Lemma~\ref{lem:given:Xs:compute:tangle:avoiding} provides a membership
oracle to test for a set~$X$ whether~$X\in
\mathcal{T}_1\cup\ldots\cup\mathcal{T}_n$: to check
whether~$X\in \mathcal{T}_1\cup\ldots\cup\mathcal{T}_n$ we test, using
Lemma~\ref{lem:given:Xs:compute:tangle:avoiding} for
$\CT_0:=\emptyset$ (the unique tangle of order $0$), whether there is a tangle of
order~$k$ avoiding~$\overline{X_1},\ldots,\overline{X}_{m-1}$
and~$\bar X$. We have~$X\in  \mathcal{T}_1\cup\ldots\cup\mathcal{T}_n$ if and only if the answer is affirmative for some tangle~$\mathcal{T}$.

For each base~$B\in \CB_{\le k-1}$, we compute, if it exists, an inclusion-wise minimal set~$X^*(B)\in \CL(B)$ (implying~$\kappa(X^*(B))< k$) such that there is a tangle~$\mathcal{T}_i$ (with~$i\in \{1,\ldots,n\}$) containing~$X^*(B)$. Such a set can be computed in polynomial time by Lemma~\ref{lem:minimal:separators:are:computable}.

To prove the claim, it suffices now to show that if there are two
tangles corresponding to~$u$ then there is a base~$B\in \CB_{\le
  k-1}$ for which~$X^*(B)$ exists and separates two tangles
corresponding to~$u$. Let~$\mathcal{T}_j$ and~$\mathcal{T}_k$ be
two such tangles and let~$\widetilde{X}$ be
a~$(\mathcal{T}_j,\mathcal{T}_k)$-separation. Let~$B$ be a base
for~$\widetilde{X}$.  By construction, there is a
tangle~$\mathcal{T}_i$ with~$X^*(B)\in \mathcal{T}_i$. If~$X^*(B)
\notin \mathcal{T}_j$ or $X^*(B) \notin \mathcal{T}_k$ then~$X^*(B)$
separates two tangles corresponding to~$u$ (either $\CT_i$ and
$\CT_j$ or $\CT_i$ and $\CT_k$). So suppose~$X^*(B) \in
\mathcal{T}_j$ and~$X^*(B) \in \mathcal{T}_k$. Then~$X^*(B)
\not\subseteq \widetilde{X}$, because  $\widetilde{X} \notin
\mathcal{T}_k$. Thus $X^*(B) \cap \widetilde{X}\subsetneq X^*(B)$. We
have $X^*(B) \cap \widetilde{X}\in\CL(B)$, because $\CL(B)$ is a
lattice.
Since~$X^*(B)$ is inclusion-wise minimal in $\CL(B)\cap(\CT_1\cup\ldots\cup\CT_t)$, this implies
that~$X^*(B) \cap \widetilde{X} \notin \mathcal{T}_j$.  
Since $\tilde X\in \CL(B)$ and all elements of $\CL(B)$
have the same order, we have $\kappa(X^*(B) \cap \widetilde{X})=\kappa(\tilde
X)\le k-1$. Thus $\overline{X^*(B) \cap \widetilde{X}} \in
\mathcal{T}_j$. However the sets~$X^*(B)$, ${\widetilde{X}}$, and
$\overline{X^*(B) \cap \widetilde{X}}$ have an empty intersection so they
cannot all be contained in~$\mathcal{T}_j$, yielding a
contradiction.

Repeating the construction we obtain a sequence of
trees~$T_0,T_1,\ldots$. Since by Fact~\ref{fact:lin} there is only a
linear number of tangles of order at most $k$ and each new tree distinguishes more tangles than the one before, after a linear number of steps we obtain a tree satisfying properties~~\ref{item:sets:separate:tangles}--\ref{item:leaves:correspond:to:at:most:one:tangles}.\uend
\end{claim}
Let~$\mathcal{D}_{k-1}$ be an efficient comprehensive tangle data structure of order~$k-1$. We argue that using~$\mathcal{D}_{k-1}$ together with a tree~$T$ satisfying properties~\ref{item:sets:separate:tangles}--\ref{item:leaves:correspond:to:at:most:one:tangles}  we obtain a comprehensive tangle data structure of order~$k$.

Recall that the tangles of order~$k$ are in one to one correspondence to the leaves of the tree~$T$. Let~$u_1,\ldots,u_n$ be an enumeration of the leaves of~$T$.
\begin{enumerate}
\item The function~$\order_{\mathcal{D}}()$ simply returns the integer~$k$.
\item The function~$\size_{\mathcal{D}}(\ell)$
  returns~$\size_{\mathcal{D}_{k-1}}(\ell)$ if~$\ell<k$
  and~$\size_{\mathcal{D}_{k-1}}(\ell-1) + n$ if~$\ell =k$
  (where~$n=|L(T)|$ is the number of leaves of~$T$).
\item To evaluate~$\mathcal{T}_\mathcal{D}(i,X)$ we
  return~$\mathcal{T}_{\mathcal{D}_{k-1}}(i,X)$ in
  case~$i\leq |\mathcal{D}_{k-1}|$. Otherwise we suppose
  that~$X_1 = S(p_1,p_2),\ldots,X_{m-1} = S(p_{m-1},p_{m})$
  where~$p_1,\ldots,p_m$ is the path from the root to
  leaf~$u_{i-|\mathcal{D}_{k-1}|}$. We use
  Lemma~\ref{lem:given:Xs:compute:tangle:avoiding} to determine
  whether there exists a tangle
  avoiding~$\overline{X_1},\ldots,\overline{X}_{m-1}$
  and~$\overline{X}$. If this is the case we return~$\mathbf{1}$,
  otherwise we return~$\mathbf{0}$.
\item For~$i\in \{1,\ldots,|\mathcal{D}|\}$ the function~$\tangorder_\mathcal{D}(i)$ returns~$\tangorder_{\mathcal{D}_{k-1}}(i)$ if~$i\leq |\mathcal{D}_{k-1}|$. Otherwise, it returns~$k$.
\item To determine~$\trunc_\mathcal{D}(i,\ell)$, if~$\ell = k$ we
  return~$i$. Assuming otherwise, if~$i\leq |\mathcal{D}_{k-1}|$ we
  use~$\trunc_{\mathcal{D}_{k-1}}(i,\ell)$. If neither of these cases
  happens, we modify the function~$\mathcal{T}_\mathcal{D}(i,X)$ to
  return~$\mathbf{0}$ for all~$X$ with~$\kappa(X) \geq \ell-1$. This
  provides us with an oracle for a tangle~$\mathcal{T}$ that is
  the~$\ell-1$ truncation of~$\mathcal{T}_i$. We can then
  use~$\find_{\mathcal{D}_{k-1}}(k-1,\mathcal{T})$ to determine the
  index of this truncation.
\item To compute~$\sep_\mathcal{D}(i,j)$, if~$i\leq |\mathcal{D}_{k-1}|$ or~$j\leq |\mathcal{D}_{k-1}|$ then we can use the truncation and  simply  compute~$\sep_{\mathcal{D}_{k-1}}(\trunc(i,k-1),\trunc(j,k-1))$.
Otherwise we let~$s$ be the smallest common ancestor of~$u_{i-|\mathcal{D}_{k-1}|}$ and~$u_{j-|\mathcal{D}_{k-1}|}$ in~$T$. Let~$X$ be the set~$S(s,t)$ where~$t$ is the child of~$s$ on the path from~$s$ to~$u_{i-|\mathcal{D}_{k-1}|}$. Using Lemma~\ref{lem:minimal:separators:are:computable} we compute and return the minimal set~$X'\subseteq X$ such that~$X'\in \mathcal{T}_{i}$.
\item To determine $\find_\mathcal{D}(\ell,\mathcal{T})$, if~$\ell<k$ we return $\find_{\mathcal{D}_{k-1}}(\ell,\mathcal{T})$.
  Otherwise, the order of $\CT$ is $k$, and there is a leaf $u$ of
  the tree $T$ such that $\CT$ is the tangle at this leaf. We start
  at the root~$r$ of~$T$ and traverse towards a leaf as
  follows. Suppose we are currently at node~$s$ with children~$t_1$
  and~$t_2$. Exactly one of the two sets~$S(s,t_1)$
  and~$S(s,t_2)=\bar{S(s,t_1)}$ is contained in~$\mathcal{T}$
  and we traverse to the corresponding child. Once we hit a
  leaf,~$u_j$ say, we return~$|\mathcal{D}_{k-1}| +j$.
  \qedhere
\end{enumerate}
\end{proof}

\section{Canonical Tree Decompositions}
\label{sec:tree}

In this section we present the canonical tree decomposition of a
connectivity function into parts corresponding to its tangles of order
at most $k$. Our decomposition is more or less the same as the one
presented by Hundertmark~\cite{hun11}, but our construction differs in two
aspects that are important for the algorithmic treatment. We
exclusively choose leftmost and rightmost minimum separations in our
decomposition. Hundertmark is less restrictive about the separations he
uses, which makes it easier to argue that suitable separations exist,
but infeasible to find them algorithmically.  Moreover, our
construction is modular: we introduce new connectivity functions
during the construction, decompose them and then
merge the decompositions. 

In this section, we often speak of ``canonical'' constructions. The
precise technical meaning depends on the context, but in general a
construction (or algorithm) is \emph{canonical} if every isomorphism between its input objects commutes with an isomorphism between the output objects.
More formally, suppose we have a construction (or algorithm) $A$ that
associates an output $A(I)$ with every input $I$. Then the
construction is \emph{canonical} if for any two inputs $I_1$ and $I_2$ and
every isomorphism $f$ from $I_1$ to $I_2$ there is an isomorphism $g$ from $A(I_1)$ to $A(I_2)$ such that $g(A(I_1))=A(I_2)$, that is, the following diagram
commutes:
\[
\begin{tikzcd}
I_1 \arrow{r}{f} \arrow{d}{A} & I_2 \arrow{d}{A} \\
A(I_1)\arrow{r}{g} & A(I_2)
\end{tikzcd}.
\]
We define an \emph{isomorphism} from a connectivity function $\kappa_1:2^{U_1}\to\NN$
to a connectivity function $\kappa_2:2^{U_2}\to\NN$ to be a 
bijective mapping $f:U_1\to U_2$ such that $\kappa(X)=\kappa(f(X))$
for all $X\subseteq U_1$.
For sets $U_1,U_2$ and families $\CX_1\subseteq 2^{U_1}$, $\CX_2\subseteq
2^{U_2}$ we define an \emph{isomorphism} from $(U_1,\CX_1)$ to $(U_2,\CX_2)$
to be a bijective mapping $f:U_1\to U_2$ such that $X\in\CX_1\iff
f(X)\in\CX_2$ for all $X\subseteq\CX_1$. We assume that the reader is
familiar with tree isomorphisms.

It may be worth noting that our construction of comprehensive tangle
data structures described in the previous section is \emph{not} canonical.

\subsection{Tree Decomposition and Nested Separations}
\label{sec:tree1}

Let $U$ be a finite set. We think of $U$ as being the ground set of a
connectivity function $\kappa$, but this connectivity function plays
no role in this section. A \emph{tree
  decomposition} of $U$ is a pair $(T,\beta)$ consisting of a
tree $T$ and a function $\beta\colon V(T)\to 2^U$ such that the sets
$\beta(t)$ for $t\in V(T)$ are mutually disjoint and their union is
$U$. If $\kappa$ is a connectivity function on $U$, we also call
$(T,\beta)$ a tree decomposition \emph{of $\kappa$}. 

Let $(T,\beta)$ be a tree decomposition of $U$. For every edge $st\in
E(T)$ we let $\tilde\beta(s,t)$ be the union of the sets $\beta(t')$
for all nodes $t'$ in the connected component of $T-st$ that contains
$t$. Note that $\tilde\beta(s,t)=\bar{\tilde\beta(t,s)}$. If
$(T,\beta)$ is a tree decomposition of a connectivity function
$\kappa$ on $U$, we define
the \emph{adhesion} of $(T,\beta)$ to be
$\max\{\kappa(\tilde\beta(s,t))\mid(s,t)\in\vec E(T)\}$. We do not
define a ``width'' for our tree decompositions.

Observe that every branch decomposition $(T,\tilde\xi)$ corresponds to the tree
decomposition $(T,\beta)$ with $\beta(t)=\xi(t)$ for all $t\in
L(T)$ and $\beta(T):=\emptyset$ for all $t\in V(T)\setminus
L(T)$. Therefore, we may just view branch decompositions as special
tree decompositions.  

Our notion of a
tree decomposition of a connectivity function is not new (see for
example~\cite{geegerwhi09}). 
It may be surprising to a reader only
familiar with tree decompositions of graphs, because it partitions the
elements of $U$, whereas the bags of a tree decomposition of a graph
may overlap. But note that if we apply this notion to
the connectivity function $\kappa_G$ of a graph $G$ (see
Example~\ref{exa:1}), we decompose the edge set and not the vertex set
of $G$. The following example details the relation between standard
tree decompositions of graphs and tree decompositions of $\kappa_G$.

\begin{example}
  Let $(T,\beta)$ be a tree decomposition of a graph $G$ (in the usual
  sense). It yields a tree decomposition $(T,\beta')$ of $\kappa_G$ (in
  the sense defined above) as follows: for
  every edge $e\in E(G)$, we arbitrarily choose a node $t_e\in V(T)$
  that covers $e$. Then for every $t\in V(T)$ we let
  $\beta'(t):=\{e\in E(G)\mid t=t_e\}$.

  Conversely, if we have a tree decomposition $(T,\beta')$ of $\kappa_G$,
  then we can define a tree decomposition $(T,\beta)$ of $G$ as
  follows. For every node $v\in V(G)$ we let $U_v$ be the set of all
  nodes $t\in V(T)$ such that $v$ is incident with an edge $e\in
  \beta'(t)$. We let $\hat U_v$ be the union of $U_v$ with all nodes
  $t\in V(T)$ appearing on a path between two nodes in $U_v$. Now we
  let $\beta(t)=\{v\in V(G)\mid t\in\hat U_v\}$. We call $(T,\beta)$
  the \emph{tree decomposition of $G$ corresponding to} $(T,\beta')$.

  Note that the construction of a tree decomposition of $\kappa_G$ from a
  tree decomposition of $G$ involves arbitrary choices, whereas the
  construction of a tree decomposition of $G$ from a
  tree decomposition of $\kappa_G$ is canonical. Thus the ``tree
  decomposition of a graph corresponding to a tree decomposition of
  its edge set'' is well-defined.
  \uend
\end{example}

Let $(T,\beta)$ be a tree decomposition of a set $U$.
We let
\[
\CN(T,\beta)=\{\tilde\beta(s,t)\mid st\in E(T)\}.
\]
and call it the \emph{set of separations} of
$(T,\beta)$. We will now characterise sets of
separations that come from tree decompositions.

Sets (or rather, separations) $X,Y\subseteq U$ are \emph{nested} if
either $X\subseteq Y$ or $X\subseteq\bar Y$ or $\bar X\subseteq Y$ or
$\bar X\subseteq\bar Y$; otherwise $X$ and $Y$ \emph{cross}. Note that
$X$ and $Y$ cross if and only if the four sets $X\cap Y$, $X\cap\bar
Y$, $\bar X\cap Y$, and $\bar X\cap\bar Y$ are all nonempty.  A family
$\CN\subseteq 2^U$ is \emph{nested} if all $X,Y\in\CN$ are nested.
Observe that for every tree decomposition $(T,\beta)$ of $\kappa$ the set
$\CN(T,\beta)$ is nested and closed under complementation.
The following converse of this observation is well-known and goes back
(at least) to \cite{gm10}. (We include a proof for the reader's convenience.)

\begin{lemma}\label{lem:nested}
  If $\CN\subseteq 2^U$ is nested and closed under complementation,
  then there is a tree decomposition
  $(T,\beta)$ of $U$ such that
  $\CN=\CN(T,\beta)$.

  Furthermore, the construction of
  $(T,\beta)$ from $U$ and $\CN$ is
  canonical and can be carried out by a polynomial-time algorithm.
\end{lemma}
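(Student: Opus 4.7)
The plan is a canonical two-step construction based on \emph{orientations} of $\CN$. For each $u\in U$, let its \emph{type} be the orientation $\sigma_u\colon\CN\to\CN$ defined by $\sigma_u(X)=X$ if $u\in X$ and $\sigma_u(X)=\bar X$ otherwise. Call an orientation $\tau\colon\CN\to\CN$ (i.e.\ $\tau(X)\in\{X,\bar X\}$ with $\tau(\bar X)=\overline{\tau(X)}$) \emph{consistent} if $\tau(Y)\cap\tau(Z)\neq\emptyset$ for all $Y,Z\in\CN$. Every type $\sigma_u$ is consistent, since $u$ lies in each chosen side. I take $V(T)$ to be the set of consistent orientations, and set $\beta(\tau):=\{u\in U:\sigma_u=\tau\}$; because each $u$ determines a unique consistent orientation, namely $\sigma_u$, the sets $\beta(\tau)$ partition $U$ (a bag $\beta(\tau)$ may be empty, corresponding to a branching node of $T$ not realised by any element).

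For the edges, I connect $\tau,\tau'$ by an edge iff they differ on exactly one pair $\{X,\bar X\}$. The central claim is that the resulting graph $T$ is a tree and that $\CN(T,\beta)=\CN$. For connectivity, I proceed by induction on the number $r$ of pairs on which two consistent orientations $\tau,\tau'$ disagree: pick a disagreement pair $\{X,\bar X\}$ with $\tau(X)$ inclusion-minimal among the chosen sides of $\tau$ on disagreement pairs, and let $\tau''$ be the result of flipping $\tau$ at this single coordinate. Then $\tau''$ disagrees with $\tau'$ on only $r-1$ pairs, and consistency of $\tau''$ is verified case by case: if $\tau(W)=\tau'(W)$ (non-disagreement coordinate) then $\bar X\cap\tau(W)=\tau'(X)\cap\tau'(W)\neq\emptyset$ by consistency of $\tau'$; if $\tau(W)$ is itself a disagreement chosen side, then nestedness of $\CN$, together with consistency of $\tau$ (which forbids $X\subseteq\overline{\tau(W)}$) and minimality of $X$ (which forbids $\tau(W)\subsetneq X$), leaves only the cases $X\subsetneq\tau(W)$ or $X\cup\tau(W)=U$, in both of which $\bar X\cap\tau(W)\neq\emptyset$. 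Acyclicity follows by a parallel case analysis: a cycle would have to flip each pair an even number of times, and examining the four orientations around a minimal flipped pair in the cycle, nestedness forces one of them to be inconsistent.

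Once $T$ is established as a tree, removing the edge for the pair $\{X,\bar X\}$ partitions $V(T)$ into the orientations choosing $X$ and those choosing $\bar X$, so summing bags on each side recovers $X$ and $\bar X$; hence $\CN(T,\beta)=\CN$. Canonicity is automatic because the construction reads only $(U,\CN)$ and makes no non-canonical choice: any isomorphism of the input induces bijections on types, on consistent orientations, and on edges, hence a tree isomorphism commuting with the bag maps. For polynomial time, $T$ has $|\CN|/2+1$ nodes so $|V(T)|$ is polynomial in the input size; one starts from the realised types $\sigma_u$ and inserts any missing intermediate consistent orientations by applying the minimal-flip step outlined above whenever two already-constructed orientations are not yet adjacent. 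The main obstacle is the connectivity step---specifically, handling cleanly the ``incomparable'' case in which $X$ and another disagreement chosen side neither contain each other nor are disjoint, forcing their union to equal $U$---and this is precisely where the hypothesis that $\CN$ is nested is used in an essential way.
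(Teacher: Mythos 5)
Your construction is genuinely different from the paper's. The paper builds $(T,\beta)$ by induction on $|\CN|$: it peels off the inclusion-minimal members $X_1,\ldots,X_m$ of $\CN$, recursively decomposes $\CN'=\CN\setminus\{X_i,\bar X_i\}$, and then attaches a fresh leaf $u_i$ (with $\beta(u_i)=X_i$) to a canonically chosen node of the recursive tree. You instead instantiate the classical ``tree of consistent orientations'' construction: nodes are orientations $\tau$ of the pairs of $\CN$ with $\tau(Y)\cap\tau(Z)\neq\emptyset$ for all $Y,Z$, edges join orientations differing on a single pair, and bags are fibres of the type map $u\mapsto\sigma_u$. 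This is a legitimate alternative and arguably more conceptual, but as written it has some real gaps.

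The most concrete gap is the pair $\{\emptyset,U\}$. Your consistency condition applied with $Y=Z$ forces $\tau(Y)\neq\emptyset$, so every consistent orientation chooses $U$ from that pair; hence no edge of your $T$ is ever labelled by $\{\emptyset,U\}$, and $\CN(T,\beta)$ misses $\emptyset$ and $U$. For instance with $U=\{a\}$ and $\CN=\{\emptyset,U\}$ you get a one-node tree and $\CN(T,\beta)=\emptyset$. The lemma allows $\emptyset\in\CN$, so this must be handled; the obvious patch (attach a pendant leaf with empty bag) reintroduces a canonicity question about where to attach it, which is not immediate. Two further points need shoring up. First, the definition ``$\tau(\bar X)=\overline{\tau(X)}$'' makes the consistency condition unsatisfiable (take $Y=X$, $Z=\bar X$); you presumably mean that $\tau$ is constant on each pair, $\tau(\bar X)=\tau(X)$. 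Second, the acyclicity argument is only gestured at. The clean route is to show that each pair $\{X,\bar X\}$ labels \emph{at most one} edge: if $\tau_1,\tau_1'$ and $\tau_2,\tau_2'$ both differ exactly on $\{X,\bar X\}$ but $\tau_1\neq\tau_2$, pick a pair $\{Y,\bar Y\}$ on which they disagree and run the four-way nestedness analysis on $X,Y$ to find an inconsistent orientation among the four; a cycle would then have to traverse the unique $X$-edge twice, which is impossible. This should be written out, together with the (easy but needed) converse observation that every nontrivial pair $\{X,\bar X\}$ \emph{is} realised as an edge: for $u\in X$, $v\in\bar X$ the types $\sigma_u,\sigma_v$ are distinct nodes, and the path between them in the (connected) graph $T$ must cross the $X$-coordinate somewhere. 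Your connectivity step, modulo the $\emptyset$ caveat (note that the minimal disagreement side $X$ can never be $U$, since $\{\emptyset,U\}$ is never a disagreement pair), is essentially right.
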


\begin{proof}
   By induction on $|\CN|$ we construct a rooted tree $(T,r)$ and a
   mapping $\beta:V(T)\to 2^U$ such that $(T,\beta)$ is a tree
   decomposition with $\CN(T,\beta)=\CN$.
   
   In the base step $\CN=\emptyset$, we let $T$ be a tree with one
   node $r$ and we define $\beta$ by $\beta(r):=U$.

   In the inductive step $\CN\neq\emptyset$, let $X_1,\ldots,X_m$ be a
   list of all inclusion-wise minimal elements of $\CN$ (possibly,
   $m=1$ and $X_1=\emptyset$). As $\CN$ is
   nested, for all $i\neq j$ we have $X_i\subseteq \bar X_j$. This
   implies that the sets $X_i$ are mutually disjoint. %
   Let
   \[
   \CN':=\CN\setminus\{X_i,\bar X_i\mid i\in[m]\}.
   \]
   By the induction hypothesis, there is a rooted tree $(T',r')$ and a
   mapping $\beta':V(T')\to 2^U$ such that $(T',\beta')$ is a tree
   decomposition with $\CN(T',\beta')=\CN'$. For every $i\in[m]$, let
   $t_i$ be a node of minimum
  height (where the \emph{height} of a node is its distance to the root) with $X_i\subseteq\tilde\beta'(s_i,t_i)$ for the parent $s_i$ of $t_i$,
  or $t_i:=r'$ if no such node exists. Observe that there is only one
  such node $t_i$. Indeed, if $t\neq t_i$ has the same height as
  $t_i$, then neither $t_i = r$ nor $t=r$. Let $s$ be the parent of
  $t$. Then the edges $(s_i,t_i)$ and
  $(s,t)$ are pointing away from each other and thus
  $\tilde\beta'(s_i,t_i)\cap \tilde\beta'(s,t)=\emptyset$. This
  implies $X_i\not\subseteq\tilde\beta'(s,t)$, unless $X_i=\emptyset$,
  in which case $t_i$ is the root, and we do not have to worry
  about this.

  We define a new tree $T$ from $T'$ by attaching a fresh leaf $u_i$
  to $t_i$ for every $i\in[m]$. We let $r:=r'$ be the root of $T$. We
  define $\beta\colon V(T)\to2^U$ by
  \[
  \beta(t):=
  \begin{cases}
    X_i&\text{if }t=u_i,\\
    \beta'(t)\setminus\bigcup_{i=1}^mX_i&\text{if }t\in V(T').
  \end{cases}
  \]
  As the sets $X_i$ are mutually disjoint, $(T,\beta)$ is a tree
  decomposition of $U$.  We need to prove that
  $\CN(T,\beta)=\CN$.

  \begin{claim}[1]
    For all oriented edges $(s,t)\in\vec E(T')$ we have
    $\tilde\beta(s,t)=\tilde\beta'(s,t)$.

    \proof
    Let $(s,t)\in\vec E(T')$. Without
  loss of generality we assume that $t$ is a child of $s$. We have
  $\tilde\beta'(s,t)\in\CN'\subseteq\CN$. By the minimality of $X_i$ and the
  nestedness of $\CN$, we have
  $X_i\subseteq \tilde\beta'(s,t)$ or $X_i\cap
  \tilde\beta'(s,t)=\emptyset$. Moreover, $X_i\subseteq \tilde\beta'(s,t)$ if any
  only if $u_i$ is a descendant of $t$ in $T$. As $\tilde\beta'(s,t)$
  is the union of the sets $\beta'(t')$ for all descendants $t'$ of
  $t$ in $T'$ and $\tilde\beta(s,t)$
  is the union of the sets $\beta(t')$ for all descendants $t'$ of
  $t$ in $T$, the claim follows.\uend
  \end{claim}

  To prove that $\CN(T,\beta)\subseteq\CN$, let $Z\in
  \CN(T,\beta)$. Say, $Z=\tilde\beta(s,t)$ for some oriented edge
  $(s,t)\in\tilde E(T)$.  If $(s,t)=(t_i,u_i)$ for some $i\in[m]$,
  then $Z=X_i\in\CN$, and if $(s,t)=(u_i,t_i)$ then $Z=\bar
  X_i\in\CN$, because $\CN$ is closed under
  complementation. Otherwise, $(s,t)\in\vec E(T')$. Then by Claim~1 we
  have $Z=\tilde\beta'(s,t)\in\CN'\subseteq\CN$.

  To prove the converse inclusion, let $Z\in\CN$. If $Z=X_i$ for some
  $i\in[m]$, then $Z=\tilde\beta(t_i,u_i)$, and if $Z=\bar X_i$, then
  $Z=\tilde\beta(u_i,t_i)$. Otherwise, $Z\in\CN'$, and thus by
  Claim~1, $Z=\tilde\beta'(s,t)=\tilde\beta(s,t)$ for some
  $(s,t)\in\vec E(T')$.

  Thus $(T,\beta)$ is a tree decomposition of $U$ with
  $\CN(T,\beta)=\CN$. The construction is obviously canonical, and it
  is easy to see that it can be carried out by a polynomial time algorithm.
\end{proof}

It is our goal to construct tree decompositions whose parts correspond
to tangles and whose separations separate these tangles. If $\KT$ is a
family of mutually incomparable $\kappa$-tangles, then a \emph{tree
  decomposition for $\KT$} is a triple $(T,\beta,\tau)$, where
$(T,\beta)$ is a tree decomposition of
$\kappa$ and $\tau:\KT\to V(T)$ is an injective mapping with the
following properties.
\begin{nlist}{TD}
\item\label{li:td1} For all distinct $\CT,\CT'\in\KT$ there is an oriented edge
  $(t,t')\in\vec E(T)$ on the oriented path from $\tau(\CT)$ to
  $\tau(\CT')$ in $T$ such that $\tilde\beta(t',t)$ is a minimum
  $(\CT,\CT')$-separation.
\item\label{li:td2} For every oriented edge $(t,t')\in\vec E(T)$ there
  are tangles $\CT,\CT'\in\KT$ such that $(t,t')$ appears on the
  oriented path from $\tau(\CT)$ to $\tau(\CT')$ and $\tilde\beta(t',t)$ is a
  minimum $(\CT,\CT')$-separation.
\item\label{li:td3} For every tangle $\CT\in\KT$ and every neighbour
  $t'$ of $t:=\tau(\CT)$ in $T$ it holds that $\tilde\beta(t',t)\in\CT$.
\end{nlist}
Nodes $t\in \tau(\KT)$ are called \emph{tangle nodes} and the
remaining nodes $t\in V(T)\setminus\tau(\KT)$ are called \emph{hub
  nodes}.

For an arbitrary family of $\KT$-tangles, we say that a \emph{tree
  decomposition for $\KT$} is a tree decomposition for the family
$\KT_{\max}\subseteq\KT$ consisting of all inclusion-wise maximal tangles
in $\KT$.

The following example shows that in  general we cannot do without hub
nodes if we want to construct canonical decompositions.

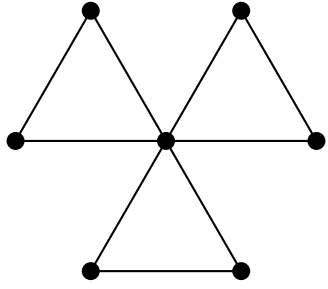
\begin{figure}
  \centering
  \begin{tikzpicture}[
    thick,
    every node/.style={fill=black,circle,minimum width=3pt}
    ]
    \draw[thick] (0,0)--(0:2cm)--(60:2cm)--
          (0,0)--(120:2cm)--(180:2cm)--
          (0,0)--(240:2cm)--(300:2cm)--(0,0);
          
    \draw[fill=black] (0,0) circle (3pt)
                      (0:2cm) circle (3pt) 
                      (60:2cm) circle (3pt) 
                      (120:2cm) circle (3pt) 
                      (180:2cm) circle (3pt) 
                      (240:2cm) circle (3pt) 
                      (300:2cm) circle (3pt) 
    ;
  \end{tikzpicture}
  \caption{The graph of Example~\ref{exa:triangles}}
  \label{fig:triangles}
\end{figure}

\begin{example}\label{exa:triangles}
  Let $G$ be the graph consisting of three triangles joined at a
  single node (see Figure~\ref{fig:triangles}). The connectivity
  function $\kappa_G$ has three tangles of order $2$ corresponding to
  the three triangles. 

  Let $(T,\beta,\tau)$ be a canonical tree decomposition
  for the family of all $\kappa_G$-tangles of order at most $2$. Here
  canonical means that for every automorphism $f$ of the graph $G$
  there is an automorphism $g$ of the tree $T$ such that
  $f(\beta(t))=\beta(g(t))$ for all $t\in V(T)$. Typically, $T$ would
  be a star with three leaves, which are tangle nodes associated with
  the three tangles of order $2$, and one centre, which is a hub node.

  But now suppose for contradiction that the decomposition has no hub
  nodes. Then $T$ must be a path of length $2$. One of the tangles
  must be associated with the centre node of this path, and the other
  two with the leaves. But as the automorphism group of $G$ acts
  transitively on the three tangles, this contradicts the canonicity.  
  \uend
\end{example}

\begin{lemma}\label{lem:td}
  Let $(T,\beta,\tau)$ be a tree decomposition for a family $\KT$ of
  mutually incomparable
  $\kappa$-tangles.
  \begin{enumerate}
  \item For every tangle node $t=\tau(\CT)$ and for every oriented edge
    $(u',u)$ pointing towards $t$, if
    $\kappa(\tilde\beta(u',u))<\ord(\CT)$ then $\tilde\beta(u',u)\in\CT$.
  \item $E(T)=\emptyset$ if and only if $|\KT|\le 1$.
  \item If $\KT\neq\emptyset$, all leaves of\/ $T$ are tangle nodes.
  \item The mapping $\tau:V(T)\to\KT$ is uniquely determined by the
    tree decomposition $(T,\beta)$. That is, if $\tau'$ is an
    injective mappings from $\KT$ to $V(T)$ such that
    $(T,\beta,\tau')$ is a tree decomposition for $\KT$, then
    $\tau=\tau'$.
  \end{enumerate}
\end{lemma}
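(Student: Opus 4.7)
The plan is to prove all four parts in order, exploiting the tangle axioms \ref{li:t1} and \ref{li:t2} together with the decomposition properties \ref{li:td1}--\ref{li:td3}. The main observation that drives everything is that $\tau(\CT)$ is completely pinned down by property \ref{li:td3}, and that incomparable ``pointings'' in a tangle are ruled out by axiom \ref{li:t2}.

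For part (1), fix $t=\tau(\CT)$ and write $t^*$ for the neighbour of $t$ on the path from $t$ to $u$ (if $u=t$, simply take $t^*=u'$). The key geometric fact is that $\tilde\beta(t^*,t)\subseteq \tilde\beta(u',u)$ as subsets of $U$: the component of $T-t^*t$ containing $t$ sits inside the component of $T-u'u$ containing $u$, because every node that reaches $t$ without crossing $t^*t$ also reaches $u$ without crossing $u'u$. Property \ref{li:td3} gives $\tilde\beta(t^*,t)\in\CT$. If $\overline{\tilde\beta(u',u)}$ were in $\CT$, applying \ref{li:t2} to $\tilde\beta(t^*,t),\overline{\tilde\beta(u',u)},\overline{\tilde\beta(u',u)}$ would yield an empty triple intersection, a contradiction. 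Since $\kappa(\tilde\beta(u',u))<\ord(\CT)$, axiom \ref{li:t1} then forces $\tilde\beta(u',u)\in\CT$.

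Parts (2) and (3) are short. For (2), if $|\KT|\ge 2$ then \ref{li:td1} supplies an oriented edge on the path between any two distinct tangle nodes, so $E(T)\neq\emptyset$; conversely, every edge is, by \ref{li:td2}, on an oriented path between two distinct tangle nodes, so $E(T)\neq\emptyset$ forces $|\KT|\ge 2$. (The remaining case $|\KT|\le 1$ with $E(T)=\emptyset$ is consistent since $T$ is a tree.) For (3), suppose $u$ is a leaf that is not a tangle node and let $u'$ be its unique neighbour. By \ref{li:td2}, the oriented edge $(u,u')$ lies on the oriented path from $\tau(\CT)$ to $\tau(\CT')$ for some $\CT,\CT'\in\KT$. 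But in a tree a leaf can only occur as an endpoint of a path, so $u=\tau(\CT)$, contradicting $u\notin\tau(\KT)$.

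For part (4), I would show that $\tau(\CT)$ is the unique node $t\in V(T)$ satisfying $\tilde\beta(t',t)\in\CT$ for every neighbour $t'$ of $t$; this characterisation depends only on $(T,\beta)$, hence $\tau=\tau'$ follows immediately. Existence of such a $t$ is \ref{li:td3}. For uniqueness, if two distinct nodes $t_1\neq t_2$ both had the property, let $t_i^*$ be the neighbour of $t_i$ on the path from $t_1$ to $t_2$; then $\tilde\beta(t_1^*,t_1),\tilde\beta(t_2^*,t_2)\in\CT$ are disjoint (their underlying subtrees sit on opposite sides of the path), which again contradicts \ref{li:t2}. The subtle step throughout is part (1), because one must verify the set-theoretic inclusion $\tilde\beta(t^*,t)\subseteq\tilde\beta(u',u)$ correctly, handle the degenerate case $u=t$, and invoke \ref{li:t2} in the form where repeated arguments are allowed; once that is in place, the remaining parts follow by short arguments of the same flavour.
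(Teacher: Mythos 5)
Your proof is correct and takes essentially the same approach as the paper's: part (1) via $\tilde\beta(t^*,t)\subseteq\tilde\beta(u',u)$ together with \ref{li:td3} and the tangle axioms, parts (2) and (3) as direct consequences of \ref{li:td1} and \ref{li:td2}, and part (4) via the characterisation of $\tau(\CT)$ as the unique node all of whose incident separations point into $\CT$. The paper's write-up is terser (and has a small notational slip in part (4), writing $\tilde\beta(t',t)$ where it means $\tilde\beta(u,t)$), but the content is the same.
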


\begin{proof}
  (1) follows from \ref{li:td3} and the fact that for every edge
  $(u,u')$ pointing towards $t$ there is a neighbour $t'$ of $t$ (the
  neighbour of $t$ on the path from $u'$ to $t$) such that $\tilde\beta(u',u)\supseteq\tilde\beta(t',t)$.
  The forward direction of (2) follows from \ref{li:td1} and the
  backward direction follows from \ref{li:td2}.
  (3) follows from \ref{li:td2}.

  To prove (4), suppose for contradiction that $\tau\neq\tau'$. Let
  $\CT\in\KT$ such that $t:=\tau(\CT)\neq\tau'(\CT)=:t'$. Let $u,u'$
  be the neighbours of $t,t'$, respectively, on the path from $t$ to
  $t'$ in $T$. Then by \ref{li:td3},
  $\tilde\beta(t',t),\tilde\beta(u',u)\in\CT$. However,
  $\tilde\beta(t',t)\cap\tilde\beta(u',u)=\emptyset$. This contradicts
  $\CT$ being a tangle.
\end{proof}

The next lemma shows that a canonical tree decomposition for a family~$\KT$ of
tangles can be constructed from a nested family of separations
satisfying two extra conditions relating it to $\KT$.

\begin{lemma}\label{lem:tn2td}
  Let $\KT$ be a family of $\kappa$-tangles and $\CN\subseteq 2^U$
  a family of separations that is nested and closed under complementation
  and satisfies the following two conditions.
  \begin{nlist}{TN}
    \item\label{li:tn1} For all $\CT,\CT'\in\KT$ with $\CT\bot\CT'$ there is a
      $Z\in\CN$ such that $Z$ is a minimum $(\CT,\CT')$-separation.
    \item\label{li:tn2} For all $Z\in\CN$ there are tangles $\CT,\CT'\in\KT$ with
      $\CT\bot\CT'$ such that $Z$ is a minimum $(\CT,\CT')$-separation.
  \end{nlist}
  Then for every tree decomposition $(T,\beta)$ of $\kappa$ with
  $\CN(T,\beta)=\CN$ there is a unique injective mapping $\tau\colon\KT_{\max}\to
  V(T)$ such that $(T,\beta,\tau)$ is a tree decomposition for $\KT$.

  Furthermore, given $(T,\beta)$ and the index set of $\KT$ in a
  comprehensive tangle data structure, the mapping $\tau$ can be
  computed in polynomial time.
\end{lemma}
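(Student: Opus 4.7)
The plan is to define $\tau(\CT)$ for each $\CT \in \KT_{\max}$ via a Helly-type argument on subtrees of $T$ induced by the separations in $\CT$, and then verify the three properties and injectivity. For $\CT \in \KT_{\max}$, consider the family $\CA_\CT := \{Z \in \CN : \kappa(Z) < \ord(\CT),\ Z \in \CT\}$. Each $Z = \tilde\beta(s,t) \in \CA_\CT$ determines the subtree $V_Z \subseteq V(T)$ of nodes in the $t$-component of $T-st$; since $\{\beta(u) : u \in V(T)\}$ partitions $U$, one has $Z_1 \cap Z_2 = \bigcup_{u \in V_{Z_1} \cap V_{Z_2}} \beta(u)$ for all $Z_1, Z_2 \in \CA_\CT$. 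Tangle axiom \ref{li:t2}, applied with $Z_3 = Z_1$, forces $Z_1 \cap Z_2 \neq \emptyset$ and hence $V_{Z_1} \cap V_{Z_2} \neq \emptyset$; by the Helly property for subtrees of a tree, $V^*_\CT := \bigcap_{Z \in \CA_\CT} V_Z$ is a nonempty subtree. Any valid $\tau(\CT)$ must lie in $V^*_\CT$: if $t \notin V^*_\CT$, pick $Z \in \CA_\CT$ with $t \notin V_Z$; then the incident edge of $t$ on the path to $V_Z$ supplies a separation in $\CT$ disjoint from $Z$, violating axiom \ref{li:t2}.

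The main obstacle is showing that $V^*_\CT$ consists of a single node. Observe that if $t^* \in V^*_\CT$ has an incident edge $e$ with $\kappa(\tilde\beta(e)) \geq \ord(\CT)$, then the other endpoint also lies in $V^*_\CT$ (since such an edge contributes no constraint to the definition of $V^*_\CT$); conversely if $V^*_\CT$ contains two adjacent nodes, the edge between them must have order at least $\ord(\CT)$, as otherwise the corresponding separation in $\CA_\CT$ would distinguish the two nodes. So either $V^*_\CT$ is a single node, in which case all its incident edges have low order and automatically satisfy (TD3) by axiom \ref{li:t1} and the choice of $\CA_\CT$, or $V^*_\CT$ contains a high-order edge. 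To rule out the latter, apply (TN2) to the high-order edge after replacing its witnesses by maximal extensions in $\KT$ (a replacement that preserves the minimum-separation property by a truncation argument): this produces $\CT_1, \CT_2 \in \KT_{\max}$ of strictly larger order than $\CT$ with $\tilde\beta(e)$ a minimum $(\CT_1, \CT_2)$-separation; maximality of $\CT$ forces $\CT \bot \CT_i$ for $i = 1, 2$, so (TN1) delivers low-order separations $W_i \in \CT$ whose placement in $T$, combined with axiom \ref{li:t2} applied to $\CT_1$, yields the required contradiction. Setting $\tau(\CT)$ to be the unique node of $V^*_\CT$ then makes (TD3) hold.

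For uniqueness of $\tau$ and injectivity, axiom \ref{li:t2} is the key ingredient. Two distinct candidates $t^*, t^{**}$ for $\tau(\CT)$ would give, via (TD3), two separations in $\CT$ (coming from the first and last edges of the path between them) whose supporting subtrees in $T$ are disjoint, contradicting axiom \ref{li:t2}. If $\tau(\CT) = \tau(\CT')$ for distinct $\CT, \CT' \in \KT_{\max}$, then (TN1) produces a minimum $(\CT, \CT')$-separation $W \in \CN$, forcing the common node simultaneously into the disjoint subtrees $V_W$ and $V_{\bar W}$. Property (TD1) is immediate from (TN1) since the minimum $(\CT, \CT')$-separation corresponds to an edge on the path between $\tau(\CT)$ and $\tau(\CT')$ with the required orientation, and (TD2) follows from (TN2) after taking maximal extensions. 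Algorithmically, given the data structure $\mathcal{D}$ and the tree $(T,\beta)$, enumerate $\KT_{\max}$ via $\trunc_{\mathcal{D}}$, $\tangorder_\mathcal{D}$, and $\find_{\mathcal{D}}$; for each $\CT \in \KT_{\max}$, scan the polynomially many nodes of $T$ (polynomial by Fact \ref{fact:lin}) and test (TD3) at each candidate via queries $\mathcal{T}_\mathcal{D}(i, \tilde\beta(t', t))$.
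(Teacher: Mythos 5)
Your construction of $V^*_\CT$ as the intersection $\bigcap_{Z\in\CA_\CT}V_Z$ coincides with the paper's set $X_\CT$ (the component of $T-E_{\ord(\CT)}$ that all low-order separations orient towards), and your singleton argument --- pass to $\KT_{\max}$, use \ref{li:tn2} to get incomparable higher-order tangles $\CT_1,\CT_2$ for which a high-order internal edge is a minimum separation, use \ref{li:tn1} to get a low-order $(\CT,\CT_i)$-separation whose position in the tree makes it a cheaper $(\CT_1,\CT_2)$-separation --- is the same contradiction as the paper's Claim~2, just phrased via Helly rather than edge orientations. The verification of \ref{li:td1}--\ref{li:td3}, uniqueness, injectivity, and the algorithm also match, so the proposal is correct and essentially the same proof.
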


\begin{proof}
  Without loss of generality we may assume that the tangles in
  $\KT$ are mutually incomparable; otherwise we work with $\KT_{\max}$
  instead of $\KT$. Observe that $\CN$ satisfies conditions \ref{li:tn1} and
  \ref{li:tn2} with respect to $\KT$ if any only it satisfies the two
  conditions with respect to $\KT_{\max}$.

 We may further  assume that $|\KT|\ge 2$. Then
  $\CN\neq\emptyset$ by \ref{li:tn1}. Furthermore, $\ord(\CT)\ge 1$
  for all $\CT\in\KT$, because the unique tangle of order $0$ is the
  empty tangle, which is comparable with all other tangles. 

  Let $(T,\beta)$ be a tree decomposition of $\kappa$ with
  $\CN(T,\beta)=\CN$. For every $k\ge 1$, we let $E_k$ be the set of
  all edges $e=tt'\in E(T)$ with $\kappa(\tilde\beta(t,t'))<k$.

  For every tangle $\CT\in\KT$ of order $k$ we construct a connected subset
  $X_{\CT}\subseteq V(T)$ as follows: we orient all edges $e=tt'\in
  E_k$ in such a way that they
  point towards $\CT$, that is, if $\tilde\beta(t,t')\in\CT$ then
  the orientation of $e$ is $(t,t')$ and otherwise the orientation is
  $(t',t)$. Then there a unique connected component of $T-E_k$ (the
  forest obtained from $T$ by deleting all edges in $E_k$) such
  that all oriented edges point towards this component. We let
  $X_{\CT}$ be the node set of this connected component.

  It follows from \ref{li:tn1} that the sets $X_{\CT}$ are mutually
  vertex disjoint. To see this, consider distinct
  $\CT,\CT'\in\KT$. Let $Z\in\CN$ be a minimum $(\CT,\CT')$ separation
  and $(t,t')\in\vec E(T)$ such that $\tilde\beta(t',t)=Z$. Then
  $X_{\CT}$ is contained in the connected component of $T-tt'$ that
  contains $t$ and $X_{\CT'}$ is contained in the connected component
  of $T-tt'$ that contains $t'$. Hence $X_{\CT}\cap
  X_{\CT'}=\emptyset$.

  \begin{claim}[1]
    Let $\CT,\CT'\in\KT$ be distinct, and let $Z\in\CN$ be a minimum
    $(\CT,\CT')$-separation. 
    \begin{enumerate}
    \item There is an oriented edge $(t,t')\in
    \vec E(T)$ such that $\tilde\beta(t',t)=Z$.
  \item Every oriented edge $(t,t') \in \vec E(T)$ such that
    $\tilde\beta(t',t)=Z$ appears on the oriented path from $X_{\CT}$
    to $X_{\CT'}$.
    \end{enumerate}

    \proof
    (1) follows immediately from $\CN(T,\beta)=\CN$.

    To prove (2), let $(t,t')\in\vec E(T)$ such that
    $Z=\tilde\beta(t',t)$. 
    As $Z\in\CT$
    the oriented edge $(t',t)$ points towards $X_{\CT}$, and as $\bar
    Z=\tilde\beta(t,t')\in\CT'$ the oriented edge $(t,t')$ points
    towards $X_{\CT'}$. It follows that the oriented edge $(t,t')$
    appears on the oriented path $\vec P$ from $X_{\CT}$ to $X_{\CT'}$
    in $T$.
    \uend
  \end{claim}

  \begin{claim}[2]
    For all $\CT\in\KT$ it holds that $|X_{\CT}|=1$.

    \proof Suppose for contradiction that $|X_{\CT}|>1$ for some
    $\CT\in\KT$. Let $X:=X_{\CT}$. As $X$ is connected, there is an
    edge $e=t_1t_2\in E(T)$ with both endvertices in $X$. Then
    $e\not\in E_{\ord(\CT)}$ and thus
    $\kappa(\tilde\beta(t_1,t_2))\ge\ord(\CT)$.

    Let $\CT_1,\CT_2\in\KT$ such that $Z:=\tilde\beta(t_2,t_1)\in\CN$
    is a minimum $(\CT_1,\CT_2)$-separation. Such tangles exist by
    \ref{li:tn2}. For $i=1,2$, let $X_i:=X_{\CT_i}$. By Claim~1, the
    oriented edge $(t_1,t_2)$ appears on the oriented path $\vec P$
    from $X_1$ to $X_2$ in $T$.

  We have
  \[
  \ord(\CT)\le\kappa(\tilde\beta(t_1,t_2))
  =\kappa(Z)<\min\{\ord(\CT_1),\ord(\CT_2)\}.
  \]
  Let $Z_1\in\CN$ be a minimum $(\CT_1,\CT)$-separation. Then
  $\kappa(Z_1)<\ord(\CT)\le\kappa(Z)$. Moreover, by Claim~1, there is an oriented edge $(u_1,u)$ on the oriented path
  $\vec Q$ from $X_1$ to $X$ such that
  $\tilde\beta(u,u_1)=Z_1$. 

  We have $Z_1\in\CT_1$, because $Z_1$ is a
  $(\CT_1,\CT)$-separation. Since $t_1\in X$, the path $\vec Q$ is an
  initial segment of the path $\vec P$, and therefore $(u_1,u)$ is
  also an edge of $\vec P$. The edge $(u_1,u)$ occurs before $(t_1,t_2)$ on the path
  $\vec P$. Thus $\bar Z_1=\tilde\beta(u_1,u)\supseteq\tilde\beta(t_1,t_2)=\bar
  Z$, and as $\bar Z\in\CT_2$, this implies $\bar Z_1\in\CT_2$. Hence
  $Z_1$ is a $(\CT_1,\CT_2)$-separation. As $\kappa(Z_1)<\kappa(Z)$,
  this contradicts the minimality of $Z$. 
  \uend
  \end{claim}

  We define $\tau:\KT\to V(T)$ by letting $\tau(\CT)$ be the unique
  node in $X_{\CT}$, for all $\CT\in\KT$. This mapping is well-defined
  by Claim~2, and injective, because the sets $X_{\CT}$ are mutually
  disjoint.

  It follows from \ref{li:tn1} and Claim~1 that $(T,\beta,\tau)$ satisfies
  \ref{li:td1}. It follows from \ref{li:tn2} and $\CN(T,\beta)=\CN$
  and Claim~1 that $(T,\beta,\tau)$ satisfies
  \ref{li:td2}.

  By the construction of $X_{\CT}$, for all oriented edges $(t',t)$ with
  $t\in X_{\CT}$ and $t'\not\in X_{\CT}$ it holds that
  $\tilde\beta(t',t)\in\CT$. As $X_{\CT}=\{\tau(\CT)\}$, this implies
  that $(T,\beta,\tau)$ satisfies \ref{li:td3}.

  The uniqueness of $\tau$ follows from Lemma~\ref{lem:td}(4). As
  $\tau(\CT)$ is the unique node $t$ of $T$ such that
  $\tilde\beta(t',t)\in\CT$ for all neighbours $t'$ of $t$, it
  is straightforward to compute $\tau$ in polynomial time.
\end{proof}

We call a family $\CN\subseteq 2^U$ that is nested and closed under
complementation and satisfies \ref{li:tn1} and \ref{li:tn2} a
\emph{nested family for $\KT$}.  Observe the converse of
Lemma~\ref{lem:tn2td}: if $(T,\beta,\tau)$ is a tree decomposition for
$\KT$, then $\CN(T,\beta)$ is a nested family for $\KT$.

\begin{remark}
  It follows from Lemma~\ref{lem:tn2td} that \ref{li:td1} and
  \ref{li:td2} imply \ref{li:td3} and that we can even replace
  \ref{li:td1} and \ref{li:td2} by the weaker conditions \ref{li:tn1}
  and \ref{li:tn2} for $\CN=\CN(T,\beta)$.

  The reason that we nevertheless used \ref{li:td1}--\ref{li:td3} is
  that they state the crucial properties that we expect from a tree
  decomposition for a family of tangles.
\end{remark}

\subsection{Decomposing Coherent Families}
\label{sec:tree2}

Let us call a family $\KT$ of $\kappa$-tangles of order $k+1$
\emph{coherent} if all elements of $\KT$ have the same truncation to order
$k$. Observe that this condition implies, and is in fact equivalent to, the
condition that for distinct $\CT,\CT'\in\KT$ the order of a minimum
$(\CT,\CT')$-separation is $k$.
The main result of this section, Lemma~\ref{lem:coherent}, shows how to
compute  a tree decomposition for a
coherent family
of tangles of order $k+1$. In Section~\ref{sec:tree3}, we will then
combine decompositions for different coherent sets of tangles of different
orders. 

The family of separations of the tree decomposition our algorithm computes for a given
set $\KT$ of tangles will be a subset of the set 
\[
\CZ(\KT)=\{ Z(\CT,\CT')\mid \CT,\CT'\in\KT\text{
such that }\CT\bot\CT'\}.
\]
of all leftmost minimum separations 
of pairs of tangles in $\KT$ and of their complements.
The following lemma is similar to Lemma~5.3 of \cite{hun11}. But our
proof is different, because we work with
different assumptions and a different set of separations.

\begin{lemma}\label{lem:coherent0}
  Let $\KT$ be a coherent family of $\kappa$-tangles of order $k+1$, and
  let $Z_0\in\CZ(\KT)$ be inclusion-wise minimal. Then for all
  $Z\in\CZ(\KT)$, either $Z_0\subseteq Z$ or $Z_0\subseteq \bar Z$.
\end{lemma}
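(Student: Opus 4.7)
The plan is to argue by contradiction. Assume some $Z\in\CZ(\KT)$ satisfies $Z_0\not\subseteq Z$ and $Z_0\not\subseteq\bar Z$. Write $Z_0=Z(\CT,\CT')$ and $Z=Z(\CS,\CS')$ for incomparable pairs in $\KT$; by coherence, $\kappa(Z_0)=\kappa(Z)=k$.

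First I would carry out some easy reductions. If $\CT$ and $\CT'$ orient $Z$ oppositely---one contains $Z$, the other $\bar Z$---then $Z$ or $\bar Z$ is itself a minimum $(\CT,\CT')$-separation, and the leftmost property of $Z_0=Z(\CT,\CT')$ yields $Z_0\subseteq Z$ or $Z_0\subseteq\bar Z$, contradicting the assumption. Symmetrically, if $\CS$ and $\CS'$ orient $Z_0$ oppositely, then the leftmost property of $Z=Z(\CS,\CS')$ gives $Z\subseteq Z_0$ or $Z\subseteq\bar Z_0$: in the first case, inclusion-wise minimality of $Z_0$ in $\CZ(\KT)$ forces $Z=Z_0$, so $Z_0\subseteq Z$, a contradiction; in the second, $Z\cap Z_0=\emptyset$ contradicts $Z_0\not\subseteq\bar Z$. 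So I may henceforth assume $\CT,\CT'$ agree on $Z$ and $\CS,\CS'$ agree on $Z_0$.

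The main step is to classify each $\CU\in\{\CT,\CT',\CS,\CS'\}$ by its ``type'' $(\epsilon_0,\epsilon)\in\{+,-\}^2$ defined by $\epsilon_0=+$ iff $Z_0\in\CU$, $\epsilon=+$ iff $Z\in\CU$ (well-defined by axiom \ref{li:t1} since $\kappa(Z_0),\kappa(Z)<k+1$). The reduction above implies $\CT,\CT'$ have types $(+,*_1)$ and $(-,*_1)$ for a common $*_1$, while $\CS,\CS'$ have types $(*_2,+)$ and $(*_2,-)$ for a common $*_2$. A quick case analysis on $(*_1,*_2)\in\{+,-\}^2$ shows that one can always pick one tangle from $\{\CT,\CT'\}$ and one from $\{\CS,\CS'\}$ whose types differ in \emph{both} coordinates---for example, $(\CS',\CT')$ when $(*_1,*_2)=(+,+)$, or $(\CT,\CS')$ when $(*_1,*_2)=(+,-)$. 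For such an opposite pair $(\CX,\CY)$, both $Z_0$ and exactly one of $W\in\{Z,\bar Z\}$ are minimum $(\CX,\CY)$-separations of order $k$ ($W=Z$ when the types are $\{(+,+),(-,-)\}$ and $W=\bar Z$ when they are $\{(+,-),(-,+)\}$), and $\CX\bot\CY$ is automatic because they disagree on $Z_0$. Leftmostness then gives $Z(\CX,\CY)\subseteq Z_0$ and $Z(\CX,\CY)\subseteq W$; since $Z(\CX,\CY)\in\CZ(\KT)$, inclusion-minimality of $Z_0$ forces $Z(\CX,\CY)=Z_0$, whence $Z_0\subseteq W$, contradicting the assumption.

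The main obstacle I anticipate is the four-case combinatorial inspection producing the opposite pair: the ``anti-diagonal'' values $(*_1,*_2)\in\{(+,-),(-,+)\}$ are witnessed by $(\CT,\CS')$ or $(\CT',\CS)$, while the ``diagonal'' values $(*_1,*_2)\in\{(+,+),(-,-)\}$ require the less obvious cross-combinations $(\CS',\CT')$ or $(\CT,\CS)$. Beyond this bookkeeping, the argument relies only on the leftmost property of $Z_0=Z(\CT,\CT')$ and of $Z(\CX,\CY)$, together with the inclusion-wise minimality of $Z_0$ in $\CZ(\KT)$; no submodularity computation beyond what is used to define the leftmost minimum separation is needed.
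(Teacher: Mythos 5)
Your proof is correct, and it takes a genuinely different route from the paper's. The paper starts from $Z_0 = Z(\CT_0,\CT_0')$ and an arbitrary minimum $(\CT,\CT')$-separation $Z$, normalises so that $Z\in\CT_0$, and splits on whether $\kappa(Z_0\cap Z)\le k$. The easy branch finishes by leftmostness of $Z_0$; the hard branch uses submodularity to force $\kappa(Z_0\cup Z)<\kappa(Z_0)$, from which $Z\in\CT_0'$ is extracted, and then a two-case split on whether $Z_0$ or $\bar Z_0$ lies in $\CT'$ is resolved by combining leftmostness with inclusion-wise minimality of $Z_0$ in $\CZ(\KT)$. Your proof never invokes submodularity directly: you classify the four tangles $\CT,\CT',\CS,\CS'$ by their orientations of $Z_0$ and $Z$, dispose of the ``disagreeing'' configurations by two short reductions (each resolved by leftmostness of $Z_0$, resp.\ leftmostness of $Z$ plus inclusion-minimality of $Z_0$), and then in the remaining cases exhibit a cross pair $(\CX,\CY)$ for which both $Z_0$ and some $W\in\{Z,\bar Z\}$ are minimum separations, whence $Z(\CX,\CY)\subseteq Z_0$ and $Z(\CX,\CY)\subseteq W$, so inclusion-minimality gives $Z(\CX,\CY)=Z_0\subseteq W$. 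The two approaches buy different things: the paper's submodularity step lets it prove the slightly stronger claim that the conclusion holds for \emph{any} minimum $(\CT,\CT')$-separation, not just leftmost ones (it explicitly notes this extra generality), whereas your argument relies on $Z$ itself being a leftmost minimum separation, i.e.\ on $Z\in\CZ(\KT)$, and so proves exactly the lemma as stated. In exchange, your proof replaces the submodularity calculation by a purely combinatorial type-chasing argument that reuses leftmostness and inclusion-minimality in a symmetric way. Both are about the same length once the four-way case inspection is written out, and your proof is a clean alternative.
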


\begin{proof}
  Let $\CT_0,\CT_0'\in\KT$ such
  that $Z_0=Z(\CT_0,\CT_0')$. Moreover, let $\CT,\CT'\in\KT$ be
  distinct, and let $Z$ be a minimum $(\CT,\CT')$-separation. We shall
  prove that $Z_0\subseteq Z$ or
  $Z_0\subseteq \bar Z$. Of course this will imply the assertion of
  the lemma, because every $Z\in\CZ(\KT)$ is a minimum
  $(\CT,\CT')$-separation for some $\CT,\CT'\in\KT$.
  
  Without loss of generality, we may assume that
  \begin{equation}
    \label{eq:coh1}
    Z\in\CT_0.
  \end{equation}
  Otherwise, we swap $\CT$ and $\CT'$ and take $\bar Z$ instead of
  $Z$.
  
  Suppose first that $\kappa(Z_0\cap Z)\le k$. Then $Z_0\cap
  Z\in\CT_0$, because $\bar{Z_0\cap Z}\cap Z_0\cap
  Z=\emptyset$. Moreover, we have $\bar{Z_0\cap Z}\in\CT_0'$,
  because $(Z_0\cap Z)\cap\bar Z_0=\emptyset$.  Thus $Z_0\cap Z$
  is a $(\CT_0,\CT_0')$-separation. As $Z_0$ is leftmost minimum, it
  follows that $Z_0\subseteq Z_0\cap Z$ and thus $Z_0\subseteq Z$.

  In the following, we assume $\kappa(Z_0\cap Z)> k=\kappa(Z)$. By
  submodularity, $ \kappa(Z_0\cup Z)<\kappa(Z_0) $. We have $Z_0\cup
  Z\in\CT_0$, because $\bar{Z_0\cup Z}\cap Z_0=\emptyset$. If $Z_0\cup
  Z$ was a $(\CT_0,\CT_0')$-separation, $\kappa(Z_0\cup Z)<\kappa(Z_0)$
  would contradict the minimality of $Z_0$. Hence $\bar{Z_0\cup
    Z}\not\in\CT_0'$, which implies $Z_0\cup Z\in\CT_0'$. As $\bar
  Z_0\in\CT_0'$ and $\bar Z_0\cap \bar Z\cap (Z_0\cup Z)=\emptyset$,
  it follows that
    \begin{equation}
      \label{eq:coh2}
      Z\in\CT_0'.
    \end{equation}
     \begin{cs}
       \case1
       $Z_0\in\CT'$.\\
       Then $Z_0$ is a $(\CT',\CT_0')$-separation. As $\CT',\CT_0'$
       have the same truncation to order $k$, $Z_0$ is in fact a
       minimum $(\CT',\CT_0')$-separation, and this implies
       $Z(\CT',\CT_0')\subseteq Z_0$. By the inclusion-wise minimality
       of $Z_0$ in $\CZ(\KT)$, this implies $Z_0=Z(\CT',\CT_0')$

       By \eqref{eq:coh2}, $\bar Z$ is another
       $(\CT',\CT_0')$-separation. As $\kappa(Z)=\kappa(Z_0)$, it
       follows that $Z_0=Z(\CT',\CT_0')\subseteq \bar Z$.
      \case2
      $\bar Z_0\in\CT'$.\\
       Then $Z_0$ is a $(\CT_0,\CT')$-separation, and by a similar
       argument as in Case~1 it follows that $Z_0=Z(\CT_0,\CT')$. 

       By \eqref{eq:coh1}, $Z$ is another
       $(\CT_0,\CT')$-separation, and it
       follows that $Z_0=Z(\CT_0,\CT')\subseteq Z$.
       \qedhere
    \end{cs}
\end{proof}

\begin{lemma}\label{lem:coherent}
  Let $k\ge0$.
  There is a polynomial time algorithm that, given 
  a coherent family $\KT$ of $\kappa$-tangles of order $k+1$
  for a connectivity function $\kappa$ on a set $U$ (via a
  comprehensive tangle data structure and the set of indices of the
  tangles in $\KT$),
  computes a canonical nested family for $\KT$.
\end{lemma}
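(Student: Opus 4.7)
The plan is to build $\CN$ recursively by exploiting Lemma \ref{lem:coherent0}: an inclusion-wise minimal element of $\CZ(\KT)$ is nested with every element of $\CZ(\KT)$. First, using the tangle data structure's $\sep_\CD$ procedure on the given indices, compute the set $\CZ(\KT) = \{Z(\CT,\CT') : \CT,\CT' \in \KT \text{ distinct}\}$ in polynomial time. All elements of $\CZ(\KT)$ have order exactly $k$ by coherence of $\KT$.

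Next, define a recursive procedure $\mathrm{Nest}(\KT')$ on subfamilies $\KT' \subseteq \KT$. If $|\KT'| \le 1$, return $\emptyset$. Otherwise, compute the set $\CM(\KT')$ of all inclusion-wise minimal elements of $\CZ(\KT')$, which is nonempty. Since for each $\CT \in \KT'$ and each $Z \in \CM(\KT')$ exactly one of $Z, \bar Z$ lies in $\CT$, the collection $\CM(\KT')$ canonically partitions $\KT'$ into subfamilies $\KT'_1, \ldots, \KT'_m$ via the equivalence relation ``$\CT \sim \CT'$ iff $Z \in \CT \Leftrightarrow Z \in \CT'$ for every $Z \in \CM(\KT')$''. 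Each piece is strictly smaller than $\KT'$ because any $Z \in \CM(\KT')$ places the two tangles it separates into different pieces. Recurse on each $\KT'_j$, and return $\CM(\KT') \cup \{\bar Z : Z \in \CM(\KT')\} \cup \bigcup_j \mathrm{Nest}(\KT'_j)$. Finally set $\CN := \mathrm{Nest}(\KT)$.

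The verification is then straightforward. Closure under complementation is built in. Property \ref{li:tn2} holds because each added $Z$ is, by construction, a leftmost minimum $(\CT,\CT')$-separation for some incomparable pair in $\KT' \subseteq \KT$, and its complement is the corresponding $(\CT',\CT)$-minimum separation. For \ref{li:tn1}, given distinct $\CT, \CT' \in \KT$, they fall into different subfamilies at some recursive call (since pieces strictly shrink), and at the first such call some $Z \in \CM(\KT'')$ separates them; as $\kappa(Z) = k$ matches the order of a minimum $(\CT,\CT')$-separation by coherence, this $Z$ works. Canonicity holds at every step because we take \emph{all} minimal elements of $\CZ(\KT')$ (no arbitrary choice) and the induced partition is canonical; the leftmost minimum separations themselves are canonical by Lemma \ref{lem:lm-tansep}. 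The polynomial running time follows from Fact \ref{fact:lin}, which bounds $|\KT| \le |U|$: the recursion tree has at most $|\KT|$ leaves and each node does only polynomial work via the data structure.

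The main obstacle is verifying nestedness \emph{across} recursion levels. Lemma \ref{lem:coherent0} directly controls nestedness of an element of $\CM(\KT')$ only against elements of $\CZ(\KT')$, whereas deeper recursive calls add separations from $\CZ(\KT'_j)$. The key observation resolving this is that $\CZ(\KT'_j) \subseteq \CZ(\KT')$: the leftmost minimum separation $Z(\CT,\CT')$ depends only on the pair $\CT, \CT'$ and not on the surrounding family, so any $Z' \in \CZ(\KT'_j)$ also lies in $\CZ(\KT')$. Applying Lemma \ref{lem:coherent0} to $Z \in \CM(\KT')$ (minimal in $\CZ(\KT')$) and such a $Z'$ yields the required nestedness, and an easy induction then gives nestedness of all of $\CN$.
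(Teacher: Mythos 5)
Your construction is close in spirit to the paper's, which iteratively ``peels'': at each step it adds all inclusion-wise minimal elements of $\CZ(\KT\setminus\KT_i)$ to $\CN$, enlarges $\KT_i$ to include the tangles just separated off, and repeats on $\KT\setminus\KT_{i+1}$. You instead recursively partition $\KT'$ by agreement on $\CM(\KT')$ and recurse on every piece. These turn out to produce the same family, but your nestedness argument has a genuine gap.

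Your observation $\CZ(\KT'_j)\subseteq\CZ(\KT')$ does show, via Lemma~\ref{lem:coherent0}, that each $Z_0\in\CM(\KT')$ is nested with every separation produced in any deeper call. What the ``easy induction'' does \emph{not} deliver is nestedness of $\mathrm{Nest}(\KT'_j)$ against $\mathrm{Nest}(\KT'_{j'})$ for $j\ne j'$, i.e.\ between sibling branches of the recursion. Both collections live inside $\CZ(\KT')$, but neither need contain an element that is inclusion-wise minimal in $\CZ(\KT')$, and they come from $\CZ$ of \emph{disjoint} subfamilies rather than nested ones, so Lemma~\ref{lem:coherent0} has no purchase. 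The paper never faces this because it never partitions: it always works inside the single shrinking family $\KT\setminus\KT_i$, giving a chain $\CZ(\KT)\supseteq\CZ(\KT\setminus\KT_1)\supseteq\cdots$ that Lemma~\ref{lem:coherent0} can be applied along at every step.

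The gap is in fact vacuous, but establishing that requires an extra observation you do not make: for every $Z_0\in\CM(\KT')$, at most one tangle in $\KT'$ contains $Z_0$. (If $Z_0\in\CT_1\cap\CT_2$ with $\CT_1\ne\CT_2$, let $Z:=Z(\CT_1,\CT_2)\in\CZ(\KT')$; Lemma~\ref{lem:coherent0} gives $Z_0\subseteq Z$ or $Z_0\subseteq\bar Z$, and in either case one of $\CT_1,\CT_2$ contains two disjoint members of $\{Z_0,Z,\bar Z\}$, violating \ref{li:t2}.) Hence any equivalence class of size at least two must choose $\bar{Z_0}$ for every $Z_0\in\CM(\KT')$, so there is at most one such class; all other pieces $\KT'_j$ are singletons, $\mathrm{Nest}(\KT'_j)=\emptyset$ there, and the cross-branch case never actually occurs. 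Without this step, your ``easy induction'' is not justified.
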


\begin{proof}
  The idea of the proof is to construct a tree decomposition for $\KT$
  starting from the leaves of the decomposition tree and then moving
  towards the centre of the tree. Observe that the separations of a tree
  decomposition associated with the edges towards the leafs are
  precisely the inclusion-wise minimal separations.

  The algorithm inductively computes for all $i\in\NN$ a set
  $\CN_i\subseteq\CZ(\KT)$ of separations and a family $\KT_i$ of
  tangles.
  \begin{itemize}
  \item $\CN_0:=\emptyset$ and $\KT_0:=\emptyset$.
  \item Suppose that $\CN_i$ and $\KT_i$ are already computed. Then
    the algorithm repeatedly queries the tangle data structure to
    obtain $\CZ(\KT\setminus\KT_i)$. 

    $\CN_{i+1}$ is the union of $\CN_i$ with all
    inclusion-wise minimal $Z\in\CZ(\KT\setminus\KT_i)$, and
    $\KT_{i+1}$ is the set of all tangles $\CT\in\KT$ such that
    $Z(\CT,\CT')\in\CN_{i+1}$ for
    some $\CT'\in\KT$.
  \end{itemize}
  Let $\CN$ be the closure of  $\bigcup_{i\ge0}\CN_i$ under
  complementation. It is easy to see that $\CN$ can be computed in
  polynomial time. We claim that $\CN$ is a nested family for $\KT$.

  It follows from Lemma~\ref{lem:coherent0} that $\CN$ is nested:
  when we add  a $Z_0$ to $\CN_{i+1}$, it is nested with all
  $\CZ(\KT\setminus\KT_i)$ and thus with all $Z\in\bigcup_{j\ge
    i+1}\CN_j$.

  The family $\CN$ trivially satisfies \ref{li:tn2}, because each element of each $\CN_i$ is an element of $\CZ(\KT)$.

  It remains to prove that $\CN$ satisfies \ref{li:tn1}. For all $i\ge0$
  we prove that for all $\CT\in\KT_{i+1}\setminus\KT_i$,
  $\CT'\in\KT\setminus\KT_i$ there is a $Z\in\CN_{i+1}$ such
  that $Z$ 
  is a minimum $(\CT,\CT')$-separation. Let $\CT\in\KT_{i+1}\setminus\KT_i$,
  $\CT'\in\KT\setminus\KT_i$. Let $\CT''\in\KT$
  such that $Z=Z(\CT,\CT'')\in\CN_{i+1}$. Then $Z$ is inclusion-wise
  minimal in $\CZ(\KT\setminus\KT_i)$. Let
  $Z'=Z(\CT,\CT')$. By Lemma~\ref{lem:coherent0}, either $Z\subseteq
  Z'$ or $Z\subseteq \bar Z'$. If $Z\subseteq Z'$, then $\bar
  Z\in\CT'$, because $\bar Z\supseteq\bar Z'\in\CT'$, and thus $Z$ is
  a $(\CT,\CT')$-separation. If $Z\subseteq \bar Z'$, then $Z\cap
  Z'=\emptyset$, which contradicts $Z,Z'\in\CT$.

  Now we observe that $\Big|\KT\setminus\bigcup_{i\ge0}\KT_i\Big|\le
  1$, because otherwise $\CZ(\KT\setminus\bigcup_{i\ge0}\KT_i)$ would
  be nonempty and the constructions would not have stopped. So for distinct tangles $\CT,\CT'\in\KT$, at least one
  of them is $\bigcup_{i\ge0}\KT_i$, and for some $i\ge0$ either
  $\CT\in\KT_{i+1}\setminus\KT_i$ and
  $\CT'\in\KT\setminus\KT_i$ or vice versa. We have just seen that
  then $\CN_{i+1}$ contains a $(\CT,\CT')$-separation or a
  $(\CT',\CT)$-separation, and this implies that $\CN$, which is
  closed under complementation, contains a $(\CT,\CT')$-separation. 
\end{proof}

\subsection{Decomposing Arbitrary Families}
\label{sec:tree3}

\begin{figure}
  \centering
  \begin{tikzpicture}[thick]
  \begin{scope}[scale = 1.3]
  \draw[blue]  (0,-0.25) node  {} ellipse (1.75 and 1.25);
  \node[blue] at (-0.1,0.75) {$B$};
  \draw[fill,inner sep = -1pt]  (0,-0.25) node (t) {} circle (0.05 and .05);
  \node at (0.3,-0.15) {$t$};

  \node[blue] at (-1.25,2.25) {$\overline{C_5}$};
  \draw[blue]  plot[smooth, tension=.7] coordinates {(-0.75,2.5) (-0.75,0.25) (-2.75,1.25)};
  \draw[fill,inner sep = -1pt]  (-1.5,1.25) node (t5) {} circle (0.05 and .05);

  \node at (-1.25,1.4) {$t_5$};
  \draw[node distance = 0cm, inner sep = -1pt]  (t) -- (t5);
  \draw[node distance = 0cm, inner sep = -1pt]  (t5) -- (-2,1.25);
  \draw[node distance = 0cm, inner sep = -1pt]  (t5) -- (-1.5,1.75);
  \draw[node distance = 0cm, inner sep = -1pt]  (t5) -- (-1.8,1.6);

  \node[blue] at (-2.5,-0.25) {$\overline{C_4}$};
  \draw[blue]  plot[smooth, tension=.7] coordinates {(-3,0.25) (-1.25,-0.5) (-2.5,-1.5)};
  \draw[fill,inner sep = -1pt]  (-2.5,-0.75) node (t4) {} circle (0.05 and .05);
  \node at (-2.3,-1) {$t_4$};
  \draw[node distance = 0cm, inner sep = -1pt]  (t) -- (t4);
  \draw (-3,-0.5) -- (t4) -- (-2.75,-1.25);
  
  \node[blue] at (-0.65,-1.85) {$\overline{C_3}$};
  \draw[blue]  plot[smooth, tension=.7] coordinates {(-1.25,-2.75) (-0.75,-1.25) (-0.15,-0.75) (0.5,-1.25) (0.5,-3)};
  \draw[fill,inner sep = -1pt]  (-0.25,-2) node (t3) {} circle (0.05 and .05);
  \node at (0,-1.85) {$t_3$};
  \draw[node distance = 0cm, inner sep = -1pt]  (t) -- (t3);
  \draw (-0.75,-2.4) -- (t3) -- (-0.3,-2.4) -- (t3) -- (0.15,-2.4);

  \node[blue] at (2.25,-0.5) {$\overline{C_2}$};
  \draw[blue]  plot[smooth, tension=.7] coordinates {(2.25,-2.25) (1.25,-1.5) (1,-0.75) (1.5,-0.25) (2.5,-0.25) (3.25,-0.5)};
  \draw[fill,inner sep = -1pt]  (2.5,-1.25) node (t2) {} circle (0.05 and .05);
  \node at (2.25,-1.45) {$t_2$};
  \draw[node distance = 0cm, inner sep = -1pt]  (t) -- (t2);
  \draw (2.75,-1.75) -- (t2) -- (3,-1.5) -- (t2) -- (3,-1);
  
  \node[blue] at (1.5,1) {$\overline{C_1}$};
  \draw[blue]  plot[smooth, tension=.7] coordinates {(0.5,2.25) (0.25,1.5) (0.35,0.75) (1,0.35) (1.75,0.75) (2.5,1.75)};
  \draw[fill,inner sep = -1pt]  (1.25,1.5) node (t1) {} circle (0.05 and .05);
  \node at (1,1.5) {$t_1$};
  \draw[node distance = 0cm, inner sep = -1pt]  (t) -- (t1);
  \draw (1,2.25) -- (t1) -- (1.5,2) -- (t1) -- (2,1.5);
  \end{scope}
    \path (0,-5) node[anchor=north] {(a) \parbox[t]{5cm}{Node $t$ and its neighbours
      $t_i$ and the sets $B,C_i$}};

    \begin{scope}[scale = 1, shift = {(8.5,0)}]
    
    \draw[blue]  (0,-0.25) node  {} ellipse (1.75 and 1.25);
    \node[blue] at (0,0.75) {$B$};
    \draw[blue,fill,inner sep = -1pt]  (-1.5,1.25) node (t5) {} circle (0.05 and .05);
    \node at (-1.25,1.4) {$c_5$};
    \draw[blue,fill,inner sep = -1pt]  (-2.5,-0.75) node (t4) {} circle (0.05 and .05);
    \node at (-2.3,-1) {$c_4$};
    \draw[blue,fill,inner sep = -1pt]  (-0.25,-2) node (t3) {} circle (0.05 and .05);
    \node at (0,-1.85) {$c_3$};
    \draw[blue,fill,inner sep = -1pt]  (2.5,-1.25) node (t2) {} circle (0.05 and .05);
    \node at (2.25,-1.45) {$c_2$};
    \draw[blue,fill,inner sep = -1pt]  (1.25,1.5) node (t1) {} circle (0.05 and .05);
    \node at (1,1.5) {$c_1$};
    \end{scope}
    \path (8.5,-5) node[anchor=north] {(b) Contraction of the $C_i$};
    \draw[dashed] (5,-4) -- (5,3.5);
  \end{tikzpicture}
  \caption{Sets at a node $t$ of a tree decomposition}
  \label{fig:treedec}
\end{figure}
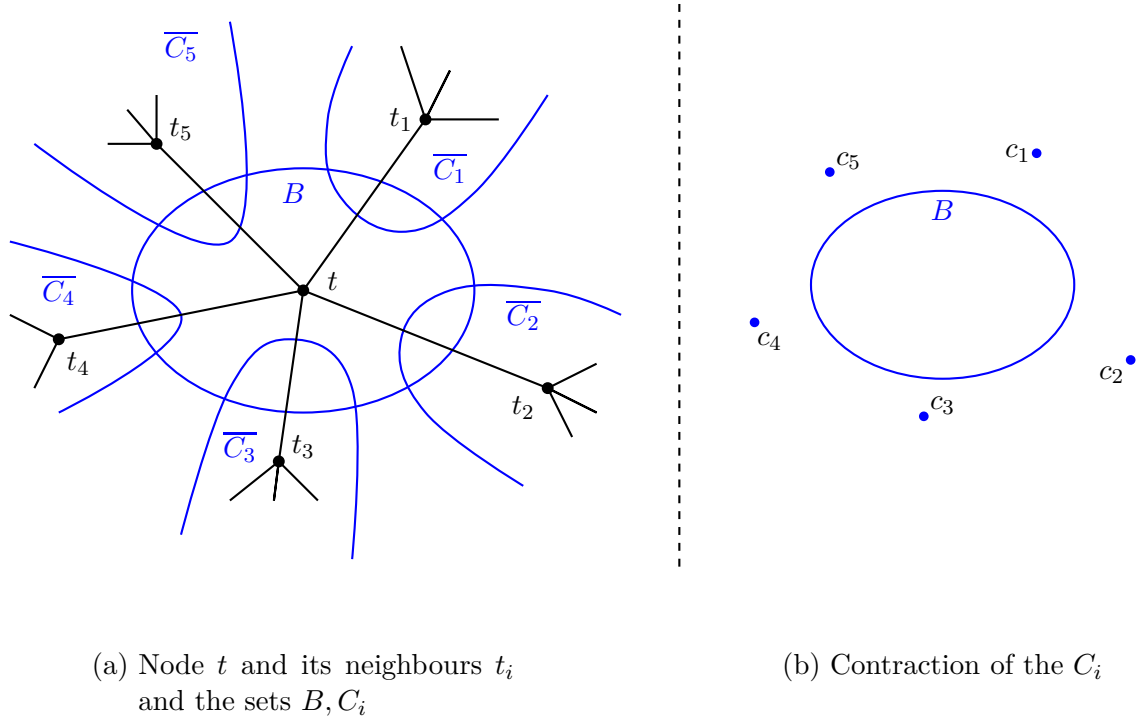

In this section, we will describe how to build a ``global'' tree
decomposition of all tangles of order at most $k+1$ from ``local''
decompositions for coherent families of tangles. Suppose that we have
already built a tree decomposition $(T^{\le k},\beta^{\le k},\tau^{\le
  k})$ for the family $\KT^{\le k}$ of
all $\kappa$-tangles of order at most $k$. Consider a tangle node $t$ of this decomposition and let
$\CT_t\in\KT^{\le k}_{\max}$ such that $\tau^{\le k}(\CT_t)=t$. Suppose that $\ord(\CT_t)=k$, and let
$\KT_t$ be the set of all $\kappa$-tangles of order $k+1$ whose
truncation to order $k$ is $\CT_t$. We
want to expand our decomposition to a decomposition over the set
$\KT^{\le k}\cup\KT_t$. In fact, we want to do this simultaneously for all
nodes $t$ of $T$ in a consistent way to obtain a tree decomposition
for $\KT^{\le k+1}$, but let us focus on just one
node $t$ first. Suppose that the neighbours of $t$ in $T^{\le k}$ are
$t_1,\ldots,t_m$. For every $i\in[m]$, let $C_i=\tilde\beta(t_i,t)$,
and let $B=\beta(t)$ (see Figure~\ref{fig:treedec}(a)). Then $B=\bigcap_{i=1}^mC_i$ and
$U=B\cup\bigcup_{i=1}^m\bar C_i$. Furthermore, the sets $\bar C_i$ for
$i\in[m]$ are mutually disjoint. 

We now ``contract'' each of the sets $\bar C_i$ to a single vertex and define a new
connectivity function on the resulting set.  We take fresh points
$c_1,\ldots,c_m$ not in $U$ and let
$U\contract_t:=B\cup\{c_1,\ldots,c_m\}$ (see Figure~\ref{fig:treedec}(b)). We call $U\contract_t$ the
\emph{contraction of $U$ at $t$}. We define the \emph{expansion} of a
set $X\subseteq U\contract_t$ to be the set
$
X\expand_t:=(X\cap B)\cup\bigcup_{c_i\in X}\bar C_i.
$
We define a set function $\kappa\contract_t$ on $U\contract_t$ by
letting $\kappa\contract_t(X):=\kappa(X\expand_t)$. It is easy to
verify that $\kappa\contract_t$ is a connectivity function on $U\contract_t$.
For every $\kappa$-tangle $\CT$ we let
\[
\CT\contract_t:=\{X\subseteq U\contract_t\mid X\expand_t\in\CT\}.
\]
$\CT\contract_t$ is not necessarily a $\kappa\contract_t$-tangle,
because it may violate tangle axiom \ref{li:t3}. However, it is easy
to see that $\CT\contract_t$ is a 
$\kappa\contract_t$-tangle of the same order as $\CT$ if $\bar C_i\not\in\CT$ for all
$i\in[m]$. By \ref{li:td3}, for all $\CT\in\KT_t$ we have
$C_i\in\CT_t\subseteq\CT$ and thus $\bar C_i\not\in\CT$  for all $i$. Thus $\contract_t$ defines a
mapping from $\KT_t$ to the set of all $\kappa\contract_t$-tangles of
order $k+1$.
Lemma~\ref{lem:injective} below implies that this ``contraction mapping'' is injective.

\begin{lemma}\label{lem:intersection}
  Let $X\subseteq U$ such that there are $\kappa$-tangles $\CT,\CT'$
  for which $X$ is a minimum $(\CT,\CT')$-separation. Then for every
  $Y\subseteq U$, either $\kappa(X\cap Y)\le\kappa(Y)$ or
  $\kappa(X\setminus Y)\le\kappa(Y)$.
\end{lemma}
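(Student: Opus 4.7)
The plan is to argue by contradiction, assuming both $\kappa(X\cap Y)>\kappa(Y)$ and $\kappa(X\setminus Y)>\kappa(Y)$, and derive a $(\CT,\CT')$-separation of order strictly less than $\kappa(X)$.

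First I would apply submodularity once to $X,Y$ and once to $X,\bar Y$, using $\kappa(Y)=\kappa(\bar Y)$, to obtain
\[
\kappa(X\cup Y)<\kappa(X)\quad\text{and}\quad \kappa(X\cup\bar Y)<\kappa(X).
\]
By symmetry of $\kappa$ these are the same as $\kappa(\bar X\cap\bar Y)<\kappa(X)$ and $\kappa(\bar X\cap Y)<\kappa(X)$. Since $X\in\CT$ and $\bar X\in\CT'$, we have $\kappa(X)<\min\{\ord(\CT),\ord(\CT')\}$, so both $\bar X\cap Y$ and $\bar X\cap\bar Y$ have order less than $\ord(\CT)$ and less than $\ord(\CT')$. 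Tangle axiom \ref{li:t1} for $\CT'$ therefore forces each of these sets to be in $\CT'$ or to have its complement in $\CT'$.

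Next I would rule out that \emph{both} complements $X\cup Y$ and $X\cup\bar Y$ lie in $\CT'$: together with $\bar X\in\CT'$, axiom \ref{li:t2} applied to $\bar X,\,X\cup Y,\,X\cup\bar Y$ would require a nonempty intersection, but $(X\cup Y)\cap(X\cup\bar Y)=X$, so the triple intersection equals $\bar X\cap X=\emptyset$, a contradiction. Hence at least one of $\bar X\cap Y$, $\bar X\cap\bar Y$ lies in $\CT'$; up to swapping $Y$ with $\bar Y$, assume $\bar X\cap Y\in\CT'$.

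Finally I would show $X\cup\bar Y\in\CT$. Its order is less than $\ord(\CT)$ by the bound above, so by \ref{li:t1} either $X\cup\bar Y\in\CT$ or $\bar X\cap Y\in\CT$; the second option is excluded because then $X,X,\bar X\cap Y\in\CT$ would violate \ref{li:t2} (the intersection is empty). Thus $X\cup\bar Y\in\CT$ and $\overline{X\cup\bar Y}=\bar X\cap Y\in\CT'$, so $X\cup\bar Y$ is a $(\CT,\CT')$-separation of order strictly less than $\kappa(X)$, contradicting the assumed minimality of $X$.

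The main obstacle is the case analysis with the tangle axioms \ref{li:t1} and \ref{li:t2}; the submodularity inequalities themselves are routine. The symmetric case where $\bar X\cap\bar Y\in\CT'$ rather than $\bar X\cap Y\in\CT'$ is handled identically, yielding $X\cup Y$ as the smaller separation instead.
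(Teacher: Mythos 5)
Your proof is correct and follows essentially the same route as the paper: the contradiction hypothesis, the two submodularity applications yielding $\kappa(X\cup Y)<\kappa(X)$ and $\kappa(X\cup\bar Y)<\kappa(X)$, and the observation that $\bar X\cap(X\cup Y)\cap(X\cup\bar Y)=\emptyset$ forces one of $\bar{X\cup Y},\bar{X\cup\bar Y}$ into $\CT'$, producing a smaller $(\CT,\CT')$-separation. The only difference is cosmetic: the paper gets $X\cup Y,X\cup\bar Y\in\CT$ in one stroke from $X\subseteq X\cup Y,X\cup\bar Y$ and $X\in\CT$, whereas you re-derive membership in $\CT$ by a second explicit \ref{li:t1}/\ref{li:t2} case check.
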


\begin{proof}
  Suppose for contradiction that $\kappa(X\cap Y)>\kappa(Y)$ and
  $\kappa(X\setminus Y)>\kappa(Y)$. Then by submodularity,
  $\kappa(X\cup Y)<\kappa(X)$ and $\kappa(X\cup\bar Y)<\kappa(X)$. 

  Now let $\CT,\CT'$ be tangles $\CT,\CT'$
  such that $X$ is a minimum $(\CT,\CT')$-separation.
  As $X\subseteq X\cup Y,X\cup\bar Y$ and $X\in\CT$, we have $X\cup
  Y,X\cup\bar Y\in\CT$. 
  Furthermore, either $\bar{X\cup Y}\in\CT'$ or $\bar{X\cup\bar
    Y}\in\CT'$, because 
  $
  \bar X\cap(X\cup Y)\cap(X\cup\bar Y)=\emptyset.
  $
  Thus either $X\cup Y$ or $X\cup\bar Y$ is a $(\CT,\CT')$-separation
  of order less than $\kappa(X)$. This contradicts the minimality
  of $X$.
\end{proof}

\begin{lemma}\label{lem:injective}
  Let $\CT,\CT'\supseteq\CT_t$ be incomparable. Then $\CT\contract_t$ and
  $\CT'\contract_t$ are incomparable as well, and for every 
  minimum $(\CT\contract_t,\CT'\contract_t)$-separation $Z$ the expansion
  $Z\expand_t$ is a minimum $(\CT,\CT')$-separation.
\end{lemma}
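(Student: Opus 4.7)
The plan is to reduce the lemma to a claim about what I will call \emph{structured} $(\CT,\CT')$-separations $Y\subseteq U$, meaning those for which for every $i\in[m]$ either $\bar C_i\subseteq Y$ or $\bar C_i\cap Y=\emptyset$. By the construction of $\kappa\contract_t$, the map $Z\mapsto Z\expand_t$ is a $\kappa$-preserving bijection between separations of $U\contract_t$ and structured separations of $U$, and it identifies $\CT\contract_t$ (respectively $\CT'\contract_t$) with the structured members of $\CT$ (respectively $\CT'$). Thus the lemma reduces to the following: every minimum $(\CT,\CT')$-separation can be replaced by a \emph{structured} one of the same order, and no $(\CT,\CT')$-separation has order smaller than the minimum one.

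A quick preliminary observation is that every $(\CT,\CT')$-separation has order exactly $k$: since $\CT$ and $\CT'$ share the truncation $\CT_t$ of order $k$, no set of order $<k$ separates them, and since $\CT\bot\CT'$ some separation must exist of order $<\ord(\CT)=k+1$. Now let $W$ be a minimum $(\CT,\CT')$-separation. I will iteratively modify $W$ to align it with $\bar C_1,\ldots,\bar C_m$. At step $i$, property~\ref{li:td2} of the tree decomposition implies that $C_i=\tilde\beta(t_i,t)$ is itself a minimum separation between some pair of tangles, so Lemma~\ref{lem:intersection} applied with $X:=C_i$ and $Y:=W$ yields either $\kappa(W\cap C_i)\le k$ or $\kappa(C_i\setminus W)\le k$. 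In the first case I replace $W$ by $W\cap C_i=W\setminus\bar C_i$, in the second by $W\cup\bar C_i$; in either case the order stays at most $k$. Using $C_i\in\CT_t\subseteq\CT\cap\CT'$ together with tangle axioms~\ref{li:t1} and~\ref{li:t2}, one checks that the replacement is still a $(\CT,\CT')$-separation---for example $W\cap C_i\in\CT$, because otherwise~\ref{li:t1} would place $\overline{W\cap C_i}$ into $\CT$ and then $W,C_i,\overline{W\cap C_i}\in\CT$ would have empty intersection, violating~\ref{li:t2}. Because the $\bar C_i$ are mutually disjoint, the alignment achieved in earlier steps is preserved by later modifications, so after $m$ steps we arrive at a structured $(\CT,\CT')$-separation $W^*$ of order $\le k$.

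Contracting $W^*$ now yields a $(\CT\contract_t,\CT'\contract_t)$-separation of order at most $k$, which proves that $\CT\contract_t$ and $\CT'\contract_t$ are incomparable. Conversely, expanding any $(\CT\contract_t,\CT'\contract_t)$-separation $Z$ gives a $(\CT,\CT')$-separation $Z\expand_t$ of the same order, which is therefore $\ge k$ by the preliminary observation. Both minimum orders thus equal $k$, and hence for any minimum $(\CT\contract_t,\CT'\contract_t)$-separation $Z$ its expansion $Z\expand_t$ is a $(\CT,\CT')$-separation of order $k$, i.e.\ a minimum one. The crux of the argument is the iterative alignment step: it relies both on Lemma~\ref{lem:intersection}, which is applicable precisely because each $C_i$ is itself a minimum tangle separation (property~\ref{li:td2}), and on the disjointness of the sets $\bar C_i$, which keeps successive modifications from undoing one another.
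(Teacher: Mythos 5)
Your proof takes essentially the same approach as the paper: both align a minimum $(\CT,\CT')$-separation with the sets $\bar C_i$ by applying Lemma~\ref{lem:intersection} with $X:=C_i$ (justified by \ref{li:td2}), and both use the mutual disjointness of the $\bar C_i$ to keep successive alignments from interfering. The paper packages this as a single extremal choice (pick a minimum separation maximizing the number of aligned $\bar C_i$, then argue no misalignment can remain), while you iterate $i=1,\ldots,m$; that is a cosmetic difference. One small imprecision: the lemma does not assume $\ord(\CT)=\ord(\CT')=k+1$, so ``every $(\CT,\CT')$-separation has order exactly $k$'' is too strong in general; but your argument only needs the order of the \emph{minimum} separation, call it $k^*\ge k$, and each alignment step keeps the order at $\le k^*$, hence $=k^*$ by minimality, so the argument goes through unchanged once you replace $k$ by $k^*$ in those places.
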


\begin{proof}
  We choose a minimum $(\CT,\CT')$-separation $Y$ in such a way that
  it maximises the number of $i\in[m]$ with $Y\cap \bar C_i=\emptyset$
  or $\bar C_i\subseteq Y$. 

  \begin{claim}[1]
    For all $i\in[m]$, either $Y\cap\bar C_i=\emptyset$ or $\bar
    C_i\subseteq Y$.

    \proof Suppose for contradiction that there is some $i\in[m]$ such
    that $\emptyset\subset \bar C_i\cap Y\subset \bar C_i$. By
    \ref{li:td2}, there are tangles $\CT_i,\CT_i'$ such that
    $C_i=\tilde\beta^{\le k}(t_i,t)$ is a minimum
    $(\CT_i,\CT_i')$-separation.  By Lemma~\ref{lem:intersection}
    (applied to $X:=C_i$ and $Y$), either $\kappa(Y\cap
    C_i)\le\kappa(Y)$ or $\kappa(\bar Y\cap C_i)\le\kappa(Y)$. 
    
    Suppose first that $\kappa(Y\cap
    C_i)\le\kappa(Y)$. Then $Y\cap
    C_i\in\CT$, because $Y\in\CT$ and $C_i\in\CT_t\subseteq\CT$ and
    $Y\cap C_i\cap\bar{Y\cap C_i}=\emptyset$. Furthermore, $\bar{Y\cap
      C_i}\in\CT'$, because $\bar Y\cap (Y\cap C_i)=\emptyset$. Thus
    $(Y\cap C_i)$ is a minimum $(\CT,\CT')$-separation as
    well. Furthermore, $(Y\cap C_i)\cap\bar C_i=\emptyset$, and for
    all $j\neq i$, if $Y\cap\bar C_j=\emptyset$ then $(Y\cap
    C_i)\cap\bar C_j=\emptyset$, and if $\bar C_j\subseteq Y$, then
    $\bar C_j\subseteq (Y\cup C_i)$, because $\bar
    C_j=\tilde\beta(t,t_j)\subseteq\tilde\beta(t_i,t)=C_i$. This
    contradicts the choice of $Y$.

    Suppose next that $\kappa(\bar Y\cap
    C_i)\le\kappa(Y)$. Arguing as above with $Y,\bar Y$ and $\CT,\CT'$
    swapped, we see that $\bar Y\cap C_i$ is a minimum
    $(\CT',\CT)$-separation. Thus $Y\cup\bar C_i$ is a minimum
    $(\CT,\CT')$-separation. We have $\bar C_i\subseteq Y\cup\bar
    C_i$, and for all $j\neq i$, if $\bar C_j\subseteq Y$ then $\bar
    C_j\subseteq \bar C_i\cup Y$, and if $\bar C_j\cap Y=\emptyset$
    then $\bar C_j\cap(\bar C_i\cup Y)=\bar C_j\cap \bar
    C_i=\emptyset$. Again, this contradicts the choice of $Y$.
    \uend
  \end{claim}

  It follows from Claim~1 that there is a $Y'\subseteq U\contract_t$
  such that $Y=Y'\expand_t$.
   This set $Y'$ is a
  $(\CT\contract_t,\CT'\contract_t)$-separation. Thus
  $\CT\contract_t\bot\CT'\contract_t$, and the order $k'$ of a minimum
  $(\CT\contract_t,\CT'\contract_t)$-separation is at most
  $\kappa(Y')=\kappa(Y)=:k$, the order of a minimum
  $(\CT,\CT')$-separation.
  Now let $Z\subseteq U\contract_t$ be a minimum
  $(\CT\contract_t,\CT'\contract_t)$-separation. Then the expansion
  $Z\expand_t$ is a
  $(\CT,\CT')$-separation. Thus
  \[
  k\le\kappa(Z\expand_t)=\kappa\contract_t(Z)=k'\le k.
  \]
 Hence $k=k'$, and $Z\expand_t$ is a
  minimum $(\CT,\CT')$-separation.
\end{proof}

\begin{corollary}\label{cor:injective}
   The ``contraction mapping'' $\contract_t$ on $\KT_t$ is
  injective.
\end{corollary}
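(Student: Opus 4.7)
Let $\CT, \CT' \in \KT_t$ be distinct. By definition of $\KT_t$, both $\CT$ and $\CT'$ have order $k+1$ and share the same truncation $\CT_t$ to order $k$. If one were an extension of the other, then since extensions have order at least that of the base tangle and since here both orders equal $k+1$, the extension would be the truncation of itself to order $k+1$, forcing $\CT = \CT'$, a contradiction. Hence $\CT \bot \CT'$.

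Since $\CT, \CT' \supseteq \CT_t$ are incomparable, Lemma~\ref{lem:injective} applies and yields $\CT\contract_t \bot \CT'\contract_t$. In particular $\CT\contract_t \neq \CT'\contract_t$, so the contraction mapping is injective.
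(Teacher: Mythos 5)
Your proof is correct and follows essentially the same route as the paper: observe that distinct tangles in $\KT_t$ of equal order $k+1$ must be incomparable, then invoke Lemma~\ref{lem:injective} to conclude $\CT\contract_t\bot\CT'\contract_t$, hence $\CT\contract_t\neq\CT'\contract_t$. The phrasing of the incomparability argument is slightly garbled ("the extension would be the truncation of itself"); the cleaner statement is that if $\CT$ extends $\CT'$ then $\CT'$ equals the truncation of $\CT$ to order $\ord(\CT')=k+1$, which is $\CT$ itself, forcing equality.
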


\begin{proof}
  Let $\CT,\CT'\in\KT_t$
  with $\CT\neq\CT'$. Then $\CT\bot\CT'$, because
  $\ord(\CT)=\ord(\CT')=k+1$, and it follows from
  the lemma that $\CT\contract_t\neq\CT'\contract_t$.
\end{proof}

Finally, we are ready to prove the main theorem this section.

\begin{theorem}\label{theo:candec}
  Let $\ell\ge 0$. Then there is a polynomial time algorithm that,
  given oracle access to a connectivity function $\kappa$,
  computes a canonical tree decomposition for the set of all
  $\kappa$-tangles of order at most $\ell$.
\end{theorem}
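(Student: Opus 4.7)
The plan is to proceed by induction on $\ell$. The base case $\ell=0$ is trivial: the only $\kappa$-tangle is the empty tangle, and a one-node tree with $\beta(t)=U$ serves as its decomposition. For the inductive step, assume we have already produced, in polynomial time and canonically, a tree decomposition $(T^{\le k},\beta^{\le k},\tau^{\le k})$ for $\KT^{\le k}$, the set of all $\kappa$-tangles of order at most $k$. Using the comprehensive tangle data structure of Theorem~\ref{theo:ds} we enumerate $\KT^{\le k+1}$. Since every tangle of order $k$ is maximal in $\KT^{\le k}$ (any proper extension would have strictly higher order by \ref{li:t1}), each new tangle $\CT$ of order $k+1$ has its order-$k$ truncation $\CT_t$ sitting as a maximal tangle at a unique tangle node $t=\tau^{\le k}(\CT_t)$. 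For each such $t$, collect the coherent family $\KT_t$ of all order-$(k+1)$ tangles whose order-$k$ truncation is $\CT_t$.

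For each tangle node $t$ with $\KT_t\neq\emptyset$, form the contraction $\kappa\contract_t$ on $U\contract_t=B_t\cup\{c_1,\ldots,c_m\}$, where $B_t=\beta^{\le k}(t)$ and each $c_i$ replaces $\bar C_i=\tilde\beta^{\le k}(t,t_i)$ for the neighbours $t_1,\ldots,t_m$ of $t$ in $T^{\le k}$. By Corollary~\ref{cor:injective}, the map $\CT\mapsto\CT\contract_t$ sends $\KT_t$ injectively onto a coherent family $\KT_t\contract_t$ of $\kappa\contract_t$-tangles of order $k+1$, and by Lemma~\ref{lem:injective} the expansion of any minimum $(\CT\contract_t,\CT'\contract_t)$-separation is a minimum $(\CT,\CT')$-separation. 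Apply Lemma~\ref{lem:coherent} to obtain a canonical nested family $\CN_t^\contract$ for $\KT_t\contract_t$, and let $\CN_t:=\{Z\expand_t\mid Z\in\CN_t^\contract\}$. Then set
\[
\CN:=\CN(T^{\le k},\beta^{\le k})\;\cup\;\bigcup_{t}\CN_t,
\]
where $t$ ranges over tangle nodes with nonempty $\KT_t$. Close $\CN$ under complementation and apply Lemma~\ref{lem:nested} to build a tree decomposition $(T^{\le k+1},\beta^{\le k+1})$, then Lemma~\ref{lem:tn2td} to obtain the canonical labelling $\tau^{\le k+1}\colon\KT^{\le k+1}_{\max}\to V(T^{\le k+1})$.

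The key verifications are that $\CN$ is nested and satisfies \ref{li:tn1} and \ref{li:tn2}. Nestedness is the main technical point. By construction, every $Z\expand_t\in\CN_t$ has the form $(Z\cap B_t)\cup\bigcup_{c_i\in Z}\bar C_i$, so for each neighbour $t_i$ of $t$ either $\bar C_i\subseteq Z\expand_t$ or $\bar C_i\cap Z\expand_t=\emptyset$; this makes $Z\expand_t$ nested with every $C_j$, hence with every separation of $\CN(T^{\le k},\beta^{\le k})$. For expansions $Z\expand_t$ and $Z'\expand_{t'}$ coming from distinct tangle nodes $t\ne t'$, some edge of the path from $t$ to $t'$ in $T^{\le k}$ supplies a separation $C\in\CN(T^{\le k},\beta^{\le k})$ containing one expansion in its complement and the other in itself, so again they are nested. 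Condition \ref{li:tn1} splits into three cases: if both tangles lie in $\KT^{\le k}_{\max}$, invoke the inductive hypothesis; if both lie in the same $\KT_t$, invoke Lemma~\ref{lem:coherent} together with Lemma~\ref{lem:injective}; otherwise they have different order-$k$ truncations and an edge of $T^{\le k}$ yields a minimum separation of order $\le k$ already in $\CN$. Condition \ref{li:tn2} is immediate because every element of $\CN$ is by construction a minimum separation, either by induction or via Lemma~\ref{lem:injective} applied to Lemma~\ref{lem:coherent}.

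For canonicity and efficiency: every step above is canonical—the inductive input, the partition into families $\KT_t$ (determined by truncation), the contraction $\kappa\contract_t$ (determined by the canonical $(T^{\le k},\beta^{\le k},\tau^{\le k})$), the nested family from Lemma~\ref{lem:coherent}, the expansion, and the reconstructions via Lemmas~\ref{lem:nested} and~\ref{lem:tn2td}. Fact~\ref{fact:lin} bounds the number of tangle nodes linearly, and all subroutines run in polynomial time, so the overall algorithm does too. The main obstacle is precisely the verification of global nestedness across contractions at different tangle nodes: one has to exploit the fact that the local problems solved by Lemma~\ref{lem:coherent} at distinct tangle nodes take place in disjoint ``regions'' delimited by the already-nested separations of $(T^{\le k},\beta^{\le k})$, so that their expansions never cross each other or the old separations.
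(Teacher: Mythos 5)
Your proposal follows essentially the same route as the paper's proof: induction on $k$, localization by contracting at each tangle node with an order-$k$ maximal tangle, an application of Lemma~\ref{lem:coherent} to the coherent family of contracted order-$(k{+}1)$ tangles, expansion of the resulting nested family, union with $\CN(T^{\le k},\beta^{\le k})$, and reconstruction via Lemmas~\ref{lem:nested} and~\ref{lem:tn2td}. Your verification of nestedness and \ref{li:tn1}/\ref{li:tn2} matches the paper's Claims~1 and~2 (including the key observations that expansions at a node $t$ are nested with every $C_j=\tilde\beta^{\le k}(t_j,t)$ and hence with all of $\CN^{\le k}$, and that expansions at distinct tangle nodes are separated by some edge of $T^{\le k}$). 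The only small imprecision is the parenthetical attribution to \ref{li:t1} for why order-$k$ tangles are maximal in $\KT^{\le k}$ (the right justification is the remark that an extension of a tangle has order at least as large), and you leave implicit that one must construct a fresh comprehensive tangle data structure for each $\kappa\contract_t$ before invoking Lemma~\ref{lem:coherent}; neither affects correctness.
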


\begin{proof}
  Our algorithm first computes a comprehensive tangle data structure
  of order $\ell$. Then, by induction on $0\le k\le \ell$, it computes a
  tree decomposition $(T^{\le k},\beta^{\le^k},\tau^{\le
    k})$
  for the set $\KT^{\le
    k}$ of all $\kappa$-tangles of order at most $k$.

  The base step $k=0$ is trivial, because the only tangle of order $0$
  is the empty set, and the trivial one-node tree decomposition is a
  tree decomposition for this tangle.

  For the inductive step $k\to k+1$ (where $k<\ell$), we assume that
  we have already constructed a canonical tree decomposition $(T^{\le
    k},\beta^{\le^k},\tau^{\le k})$ for $\KT^{\le k}$. Let $\CN^{\le k}:=\CN(T^{\le
    k},\beta^{\le k})$ be the set of separations of this tree
  decomposition.

  For every tangle node $t\in \tau^{\le k}(\KT^{\le k}_{\max})$ we do the following:
  \begin{itemize}
  \item We compute the set $\KT_t$ of all tangles
    $\CT\supseteq\CT_t:=(\tau^{\le k})^{-1}(t)$ of order
    $k+1$.
  \item We compute $U\contract_t$; we can use our oracle for $\kappa$
    to implement an oracle for $\kappa\contract_t$. Then we compute a
    comprehensive tangle data structure of order $k+1$ for $\kappa\contract_t$.
  \item Within this data structure, we compute the family $\KT^\vee_t$ of all
    tangles $\CT\contract_t$ for $\CT\in\KT_t$. 

    Note that $\KT_t^\vee$ is a coherent family of
    $\kappa\contract_t$-tangles of order $k+1$.
  \item We apply Lemma~\ref{lem:coherent} to $\kappa\contract_t$ and
    $\KT_t^\vee$ and obtain a nested family $\CN_t^\vee$ for
    $\KT_t^\vee$. 
  \item We compute the set $\CN_t:=\{X\expand_t\mid X\in\CN_t^\vee\}$.
  \end{itemize}

  \begin{claim}[1]
    The family
    $
    \CN^{\le k+1}:=\CN^{\le k}\cup\bigcup_{t\in V(T^{\le t})}\CN_t
    $
    of separations is nested.

    \proof
    We already know that the family $\CN^{\le k}$ is nested. Furthermore,
    for every
    $t\in V(T^{\le k})$ the family $\CN_t^\vee$ is nested, and this
    implies that $\CN_t$ is nested as well.

    Thus we need to show that the sets in $\CN_t$ are nested with all
    sets in $\CN^{\le k}$ as well as all sets in $\CN_u$ for $u\neq t$. So let
    $X\in\CN_t$. As before, let $t_1,\ldots,t_m$ be the neighbours of
    $t$ in $T^{\le k}$. Then for
    all $i\in[m]$, either $\tilde\beta(t,t_i)\subseteq X$ or $X\cap \tilde\beta(t,t_i)=\emptyset$ and thus $\tilde\beta(t,t_i)\subseteq X$ or
    $\tilde\beta(t,t_i)\subseteq\bar X$. 

    Every set $Z\in\CN^{\le k}$ is
    of the form $\tilde\beta(s,s')$ for some $(s,s')\in\vec E(T^{\le
      k})$. Then there is an $i\in[m]$ such that either
    $Z=\tilde\beta(s,s')\subseteq\tilde\beta(t,t_i)$ or $\bar
    Z=\tilde\beta(s',s)\subseteq\tilde\beta(t,t_i)$. As
    $\tilde\beta(t,t_i)\subseteq X$ or
    $\tilde\beta(t,t_i)\subseteq\bar X$, it follows that $Z$ and $X$
    are nested.

    Now consider a set $Y\in\CN_u$ for some node $u\neq t$. Let $t_i$
    be the neighbour of $t$ and $u_j$ the neighbour of $u$ on the path
    from $t$ to $u$ in $T^{\le k}$. Then either $\tilde\beta(u,u_j)\subseteq Y$ or
    $\tilde\beta(u,u_j)\subseteq\bar Y$, which implies $\bar
    Y\subseteq
    \bar{\tilde\beta(u,u_j)}=\tilde\beta(u_j,u)\subseteq\tilde\beta(t,t_i)$
    or $Y\subseteq\tilde\beta(u_j,u)\subseteq\tilde\beta(t,t_i)$. As
    $\tilde\beta(t,t_i)\subseteq X$ or
    $\tilde\beta(t,t_i)\subseteq\bar X$, it follows that $Y$ and $X$
    are nested.
    \uend
  \end{claim}

  \begin{claim}[2]
   $\CN^{\le k+1}$ is a nested family for $\KT^{\le k+1}$.

    \proof
    By Claim~1, the family $\CN^{\le k+1}$ is nested. By construction,
    it is closed under complementation (instead of going through the
    construction to check this, we can also just close it under
    complementation without any harm). We have to prove that it
    satisfies \ref{li:tn1} and \ref{li:tn2} for $\KT:=\KT^{\le k+1}$.

    It satisfies \ref{li:tn2}, because $\CN^{\le k}$ does and by
    Lemma~\ref{lem:injective}, for all $t\in V(T^{\le k})$, all
    $Z\in\CN_t$ are minimum separations for tangles in
    $\KT_t\subseteq\KT^{\le k+1}$.
   
    To see that $\CN^{\le k+1}$ satisfies \ref{li:tn1}, let $\CT,\CT'\in\KT^{\le
      k+1}$ be incomparable. Let $\CT_0$ be the truncation of $\CT$ to order $k$ if
    $\ord(\CT)=k+1$ and $\CT_0:=\CT$ otherwise, and let $\CT_0'$ be
    defined similarly from $\CT'$. If $\CT_0\bot\CT_0'$, there is a
    $Z\in\CN^{\le k}$ that is a minimum $(\CT_0,\CT_0')$-separation,
    and this $Z$ is also a minimum $(\CT,\CT')$-separation. Otherwise,
    $\ord(\CT)=\ord(\CT')=k+1$ and $\CT_0=\CT_0'$. Let $t:=\tau^{\le
      k}(\CT_0)$. Then $\CT,\CT'\in\KT_t$, and $\CN_t$ contains a
    minimum $(\CT,\CT')$-separation.
    \uend
\end{claim}

    We use the algorithm of Lemma~\ref{lem:nested} to a compute a
    canonical tree decomposition $(T^{\le k+1},\beta^{\le k+1})$ of
    $\kappa$ with $\CN(T^{\le k+1},\beta^{\le k+1})=\CN^{\le k+1}$,
    and we use the algorithm of Lemma~\ref{lem:tn2td} to turn it into
    a tree decomposition $(T^{\le k+1},\beta^{\le k+1},\tau^{\le
      k+1})$ for $\KT^{\le k+1}$.
 \end{proof}

Let us say that a
$\kappa$-tangle is \emph{$\ell$-maximal} if it is
inclusion-wise maximal among all $\kappa$-tangles of order at most
$\ell$. In our previous notation, $\KT^{\le\ell}_{\max}$ denotes the set of all
$\ell$-maximal $\kappa$-tangles. 

 \begin{remark}\label{rem:linear}
   The existence of a canonical tree decomposition for the set
   $\KT^{\le\ell}$ of all
   tangles of order at most $\ell$, which of course follows from
   Theorem~\ref{theo:candec}, implies that for every $\ell$ the
   number $|\KT^{\le\ell}_{\max}|$ of $\ell$-maximal $\kappa$-tangles is at most $|U|-1$,
   provided $|U|\ge 2$. 

   To see this, we assume without loss of generality that
   $|\KT^{\le\ell}_{\max}|\ge 2$. Let $(T,\beta,\tau)$ be a tree
   decomposition for $\KT^{\le\ell}$. Then $|T|\ge 2$. By
   Lemma~\ref{lem:td}, all leaves of $T$ are tangle nodes. Let $t$ be
   a leaf, $s$ the neighbour of $t$, and let $\CT_t\in
   \KT^{\le\ell}_{\max}$ be the tangle with $\tau(\CT_t)=t$. By
   \ref{li:td3} we have $\beta(t)=\tilde\beta(s,t)\in\CT_t$. By
   \ref{li:t2} and \ref{li:t3}, this implies $|\beta(t)|>1$.

   Now let $t\in V(T)$ be a tangle node of degree $2$, say, with
   neighbours $s$ and $u$. Again, let $\CT_t\in \KT^{\le\ell}_{\max}$
   be the tangle $\tau(\CT_t)=t$.  By \ref{li:td3}, we have
   $\tilde\beta(u,t),\tilde\beta(s,t)\in\CT_t$, which implies
   $\beta(t)=\tilde\beta(u,t)\cap\tilde\beta(s,t)\neq\emptyset$.

   Let $n_1,n_2,n_{\ge 3}$ be the numbers of tangle nodes of degree
   $1$, $2$, at least $3$, respectively. We have $2n_1+n_2\le |U|$.
   Furthermore, $n_{\ge 3}<n_1$, because a tree with $n_1$ leaves has
   less than $n_1$ vertices of degree at least $3$. Thus
   \[
   |\KT^{\le\ell}_{\max}|=n_1+n_2+n_{\ge 3}<2n_1+n_2\le|U|.\uende
   \]
 \end{remark}

We close this section with another decomposition algorithm
that may be useful in some applications. Theorem~\ref{theo:candec} yields a tree decomposition
$(T,\beta)$ where at
most one $\ell$-maximal $\kappa$-tangle is associated with every
node. However, in applications we may want to work with the ``local
structure'' at the nodes $t$ of the decomposition, and this local
structure is represented by the ``contractions'' $\kappa\contract_t$
on $U\contract_t$. To understand this local structure, we might be
more interested in $\kappa\contract_t$-tangles than in $\kappa$-tangles
associated with $t$. It is not clear whether at every tangle node $t$
there  is at most one $\ell$-maximal $\kappa\contract_t$-tangle, and we
know even less about the hub nodes of the decomposition. However, the
following theorem shows that we can also construct a decomposition
where at every node $t$ we have at most one $\ell$-maximal
$\kappa\contract_t$-tangle. 

\begin{theorem}\label{theo:candec2}
    Let $\ell\ge 0$. Then there is a polynomial time algorithm that,
    given oracle access to a connectivity function $\kappa$,
  computes a canonical tree decomposition $(T,\beta)$ of $\kappa$ of
  adhesion less than $\ell$ such
  that for all $t\in V(T)$ there is
  exactly one $\ell$-maximal $\kappa\contract_t$-tangle.
\end{theorem}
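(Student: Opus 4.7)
The plan is to bootstrap from Theorem~\ref{theo:candec} by recursively refining at nodes where the contracted function still carries more than one $\ell$-maximal tangle. First, apply Theorem~\ref{theo:candec} to obtain a canonical tree decomposition $(T_0,\beta_0,\tau_0)$ for the family $\KT^{\le\ell}$ of all $\kappa$-tangles of order at most $\ell$; its adhesion is already less than $\ell$ because each separation is the order of a minimum $(\CT,\CT')$-separation for tangles of order at most $\ell$. For each node $t\in V(T_0)$, form the contracted connectivity function $\kappa\contract_t$ on $U\contract_t=\beta_0(t)\cup\{c_1,\ldots,c_m\}$ (where $t_1,\ldots,t_m$ are the neighbours of $t$); the oracle for $\kappa\contract_t$ reduces to that of $\kappa$ via expansion. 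Using Theorem~\ref{theo:ds}, build a comprehensive tangle data structure of order $\ell$ for $\kappa\contract_t$ and read off the number of $\ell$-maximal $\kappa\contract_t$-tangles. This number is canonically defined (even though the data structure is not) and is always at least one, since the empty tangle of order $0$ is always present. If every node has exactly one such tangle, return $(T_0,\beta_0)$.

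Otherwise, for every node $t$ whose count is at least two, recursively invoke the algorithm on the smaller instance $\kappa\contract_t$ to obtain a canonical tree decomposition $(T_t,\beta_t)$ satisfying the statement of the theorem. Splice $T_t$ into $T_0$ in place of $t$: each virtual element $c_i$ occurs in a unique bag $\beta_t(s_i)$, so delete $c_i$ from that bag, remove $t$ from $T_0$, and add the edge $s_it_i$ between $s_i$ and the former neighbour $t_i$ of $t$. Since the bags of $T_t$ partition $U\contract_t$, the resulting bags partition $U$, and the resulting graph is a tree. The adhesion remains below $\ell$: a separation associated with an edge of $T_t$ corresponds under expansion to a separation of $\kappa$ of exactly the same order, which is below $\ell$ by the inductive hypothesis; and the new edge $s_it_i$ inherits the original separation $\bar C_i$, whose order is below $\ell$ by Theorem~\ref{theo:candec}.

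The main obstacle is termination of the recursion. It is ensured by the inequality $|U\contract_t|<|U|$ whenever $T_0$ has more than one node. Indeed, Lemma~\ref{lem:td}(3) shows that every leaf of $T_0$ is a tangle node, and the argument of Remark~\ref{rem:linear} yields that any leaf tangle node has a bag of size at least two; consequently every branch at $t$ contributes $|\bar C_i|\ge 2$ elements to $U$ but only the virtual element $c_i$ to $U\contract_t$, so
\[
|U|-|U\contract_t|=\sum_{i=1}^{m}\bigl(|\bar C_i|-1\bigr)\ge m\ge 1.
\]
Hence the recursion descends strictly in $|U|$ and terminates after at most $|U|$ levels; since each level performs polynomially many calls to Theorem~\ref{theo:candec}, Theorem~\ref{theo:ds}, and the oracle for $\kappa$, the total running time is polynomial.

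Canonicity is maintained throughout: Theorem~\ref{theo:candec} is canonical by hypothesis, the construction of $\kappa\contract_t$ from $(\kappa,T_0,\beta_0,t)$ is canonical, the inductive invocation delivers a canonical sub-decomposition of $\kappa\contract_t$, and the splicing step depends only on the canonical input data (the nodes $s_i$ are uniquely determined as the nodes of $T_t$ whose bag contains $c_i$). Therefore any isomorphism between two input connectivity functions lifts, step by step, through the initial Theorem~\ref{theo:candec} decomposition, through the contractions at corresponding nodes, through the recursive sub-decompositions, and through the splicing, to an isomorphism of the output tree decompositions.
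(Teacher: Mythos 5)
Your proposal is correct and follows essentially the same recursive strategy as the paper: begin with the Theorem~\ref{theo:candec} decomposition, detect nodes $t$ whose contraction $\kappa\contract_t$ still supports more than one $\ell$-maximal tangle, recurse on $\kappa\contract_t$, and combine, with termination ensured by $|U\contract_t|<|U|$. The one point where you diverge is in how the combination is implemented: the paper takes the union $\CN^{\le k}\cup\bigcup_t\CN_t$ (with separations in $\CN_t$ implicitly expanded back to $U$), argues that this family is nested and closed under complementation, and then re-derives the tree from the nested family via Lemma~\ref{lem:nested}; you splice the recursive trees $T_t$ directly into $T_0$ by reattaching the former neighbours $t_i$ to the unique bag node $s_i$ of $T_t$ containing the virtual element $c_i$. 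The two constructions produce the same separation set $\CN(T,\beta)$, hence isomorphic decompositions, so this is an implementation difference rather than a different route. Your termination argument via Remark~\ref{rem:linear} (leaf bags have size at least two) is cleaner than the paper's explicit ``star with singleton leaves'' base case, which in fact degenerates to the one-node tree for exactly that reason; to be fully airtight you should add a sentence noting that in the one-node case $|\KT^{\le\ell}_{\max}|\le 1$ by Lemma~\ref{lem:td}(2), so the tangle count is exactly one and the recursion never fires even though $U\contract_t=U$ there.
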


\begin{proof}
  We start by computing the decomposition of $(T^0,\beta^0,\tau^0)$ of
  Theorem~\ref{theo:candec}. Let $\CN^0:=\CN(T^0,\beta^0)$. It follows
  from \ref{li:td2} that the adhesion of this decomposition is less
  than $\ell$. As every
  connectivity function has at least one tangle (the empty tangle of
  order $0$), for every node $t\in V(T)$ there is at least one
  $\ell$-maximal $\kappa\contract_t$-tangle.

  Suppose first that $T^0$ is a star with centre $t_0$ and $|\beta^0(t)|=1$ for all $t\in
  V(T^0)\setminus\{t_0\}$. Then $\kappa\contract_{t_0}=\kappa$ (up to
  renaming of the elements of $U$ in $U\contract_{t_0}$). As there is at most
  one $\ell$-maximal $\kappa$-tangle $\CT_0$ with
  $\tilde\beta^0(t,t_0)\in\CT_0$ for every neighbour $t$ of $t_0$, there
  is at most one $\ell$-maximal  $\kappa\contract_{t_0}$-tangle. For
  all $t\neq t_0$, we have $|U\contract_t|=2$, and this implies that
  there is at most one $\ell$-maximal $\kappa\contract_t$-tangle.

  So suppose that $T^0$ is not such a star. Then $|U\contract_t|<|U|$
  for all $t\in V(T^0)$.
  For every node $t\in V(T^0)$ such that there is more than one  $\ell$-maximal
  $\kappa\contract_t$-tangle, we recursively apply the algorithm to
  $\kappa\contract_t$ and obtain a tree decomposition $(T_t,\beta_t)$
  of $\kappa\contract_t$. We let $\CN_t:=\CN(T_t,\beta_t)$. For nodes
  $t$ such that there is only one  $\ell$-maximal
  $\kappa\contract_t$-tangle, we let $\CN_t:=\emptyset$, and we let
  \[
  \CN:=\CN^0\cup\bigcup_{t\in V(T^0)}\CN_t.
  \]
  Using similar arguments as in the proof of Claim~1 in the proof of
  Theorem~\ref{theo:candec}, it is easy to prove that $\CN$ is nested
  and closed under complementation.

  We apply the algorithm of Lemma~\ref{lem:nested} to $\CN$ and obtain
  a tree decomposition $(T,\beta)$ of $\kappa$ with
  $\CN(T,\beta)=\CN$. It is easy to see that this decomposition has
  the desired properties.
\end{proof}

\section{Directed Decompositions}

In this section, we prove a variant of our canonical decomposition
theorem (Theorem~\ref{theo:candec}) in which we get rid of the hub
nodes at the price weakening the separation properties of the
decomposition and loosing a bit of the canonicity (the decomposition
will only be canonical given one tangle). This version of the
decomposition theorem is used in \cite{grosch15b} to design a
polynomial isomorphism test for graph classes of bounded rank width.

We work with a directed version of tree decompositions here. A
\emph{directed tree} is an oriented tree where all edges are directed
away from the root. In a directed tree $T$, by $\dagle^T$, or just
$\dagle$ if $T$ is clear from the context, we denote
the ``descendant order'', that is, the reflexive transitive closure
of the edge relation. The set of children of node $t\in V(T)$ is
denoted by $N_+^T(t)$ or just $N_+(t)$.

A \emph{directed tree decomposition} of a set $U$ or a connectivity
function $\kappa$ on $U$ is a pair $(T,\gamma)$, where $T$ is a
directed tree and $\gamma:V(T)\to 2^U$ such that 
\begin{itemize}
\item $\gamma(r)=U$ for the
root $r$ of $T$;
\item $\gamma(t)\supseteq\gamma(u)$ for all $(t,u)\in E(T)$;
\item $\gamma(u_1)\cap\gamma(u_2)=\emptyset$ for all siblings
  $u_1,u_2$. 

  (We call $u_1,u_2$ \emph{siblings} if $u_1\neq u_2$ and
  there is a $t\in V(T)$ such that $u_1,u_2\in N_+(t)$.)
\end{itemize}
We call $\gamma(t)$ the \emph{cone} of the decomposition at node
$t$. We define $\beta:V(T)\to 2^U$ by
\begin{equation}
  \label{eq:1}
  \beta(t):=\gamma(t)\setminus\bigcup_{u\in N_+(t)}\gamma(u).
\end{equation}
We call $\beta(t)$ the \emph{bag} of the decomposition at node
$t$. Observe that the bags are mutually disjoint and that their union
is $U$. Thus if $T_{\circ}$ is the undirected tree underlying $T$, then
$(T_{\circ},\beta)$ is an (undirected) tree decomposition in the sense
defined before. Moreover, for all $(s,t)\in E(T)$ we have
$\gamma(t)=\tilde\beta(s,t)$. 

Conversely, let $(T_{\circ},\beta)$ be an
undirected tree decomposition. Let $T$ a directed tree with underlying
undirected tree $T_{\circ}$ (obtained by arbitrarily choosing a root and
directing all edges away from the root) and define $\gamma:V(T)\to
2^U$ by
\begin{equation}
  \label{eq:2}
  \gamma(t):=\bigcup_{u\dagri t}\beta(u).
\end{equation}
Then $(T,\gamma)$ is a directed tree decomposition.

We always denote the bag function of a directed tree decomposition
$(T,\gamma)$ by $\beta$, and we use implicit naming conventions by
which, for example, we denote the bag function of $(T',\gamma')$ by $\beta'$. 

Now let $\KT$ be a family of mutually incomparable $\kappa$-tangles. A
\emph{directed tree decomposition for $\KT$} is a triple
$(T,\gamma,\tau)$, where $(T,\gamma)$ is a directed tree decomposition
of $\kappa$ and $\tau:\KT\to V(T)$ a bijective mapping such that the
following two conditions are satisfied.
  \begin{nlist}{DTD}
  \item\label{li:dtd1} For all nodes $t,u\in V(T)$ with $u\not\dagle t$ there is a minimum $(\tau^{-1}(u),\tau^{-1}(t))$-separation
    $Y$ such that $\gamma(u)\subseteq Y$.
\item\label{li:dtd2} For all nodes $t\in V(T)$ except the root, there is a node
    $u\in V(T)$ such that $t\not\dagle u$ and $\gamma(t)$ is a
    leftmost minimum $(\tau^{-1}(t),\tau^{-1}(u))$-separation.
  \end{nlist}
Observe that \ref{li:dtd1} implies that for all nodes $t\in V(T)$ and children $u\in N_+(t)$ we have
    $\gamma(u)\not\in\tau^{-1}(t)$. Furthermore, \ref{li:dtd2} implies
    that $\gamma(t)\in\tau^{-1}(t)$.
 
\begin{theorem}\label{theo:dcandec}
  Let $\ell\ge0$. Then there is a polynomial time algorithm that,
  given oracle access to a connectivity function $\kappa$ and a $\kappa$-tangle
  $\CT_{\textup{root}}\in \KT^{\le\ell}_{\max}$ (via a membership oracle or its index in a
  comprehensive tangle data structure for $\kappa$), computes a
  canonical directed tree decomposition $(T,\gamma,\tau)$ for the set
  $\KT^{\le\ell}_{\max}$ such
  that $\tau^{-1}(r)=\CT_{\textup{root}}$ for the root $r$ of $T$.
\end{theorem}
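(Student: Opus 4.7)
The plan is to derive the directed decomposition from the canonical undirected decomposition of Theorem~\ref{theo:candec} by rooting at $\tau^0(\CT_{\textup{root}})$ and eliminating the hub nodes through contraction toward the root.

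First, I apply Theorem~\ref{theo:candec} to compute, in polynomial time, the canonical undirected tree decomposition $(T^0,\beta^0,\tau^0)$ for $\KT^{\le\ell}$. I orient $T^0$ as a rooted tree directed away from $r^0:=\tau^0(\CT_{\textup{root}})$, and for each non-root $t\in V(T^0)$ let $\gamma^0(t):=\tilde\beta^0(\mathrm{parent}_{T^0}(t),t)$, with $\gamma^0(r^0):=U$. Then I repeatedly contract every hub node $h\in V(T^0)\setminus\tau^0(\KT^{\le\ell}_{\max})$ into its unique parent in the rooted tree. The resulting directed tree $T$ has vertex set $\tau^0(\KT^{\le\ell}_{\max})$ with root $r^0$; I set $\tau:=\tau^0|_{\KT^{\le\ell}_{\max}}$ and $\gamma(t):=\gamma^0(t)$ for every $t\in V(T)$.

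Canonicity of $(T,\gamma,\tau)$ is inherited from Theorem~\ref{theo:candec}: the only additional choice -- which node is designated as the root -- is prescribed by $\CT_{\textup{root}}$. The basic directed tree decomposition properties ($\gamma(r^0)=U$, the containment $\gamma(t)\supseteq\gamma(u)$ for $(t,u)\in E(T)$, and disjointness of sibling cones) follow from the analogous properties of $\gamma^0$ in $T^0$, since the contraction of hub nodes preserves the set of descendants of each remaining tangle node. For \ref{li:dtd1}, given $u\not\dagle t$ I consider the $T^0$-path $n_0=\tau^0(\tau^{-1}(u)),n_1,\ldots,n_m=\tau^0(\tau^{-1}(t))$; since $n_m$ is not in the $T^0$-subtree of $n_0$, we have $n_1=\mathrm{parent}_{T^0}(n_0)$, so the cones $\tilde\beta^0(n_i,n_{i-1})$ on the ``$u$-side'' of successive edges grow monotonically with $i$ starting from $\gamma^0(n_0)=\gamma(u)$, and by \ref{li:td1} applied to the pair $(\tau^{-1}(u),\tau^{-1}(t))$ at least one of these cones is a minimum $(\tau^{-1}(u),\tau^{-1}(t))$-separation $Y$, which then satisfies $Y\supseteq\gamma(u)$.

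The main obstacle is verifying \ref{li:dtd2}: for each non-root $t$, the cone $\gamma(t)=\gamma^0(t)$ must be a leftmost minimum $(\CT_t,\tau^{-1}(v))$-separation for some $v\in V(T)$ with $t\not\dagle v$. The key is a structural observation about the construction in Theorem~\ref{theo:candec}: every separation placed into the nested family $\CN(T^0,\beta^0)$ is produced (via Lemma~\ref{lem:coherent} and the contraction/expansion mechanism of Lemma~\ref{lem:injective}) as a leftmost minimum separation $Z(\CT_1,\CT_2)$ for tangles $\CT_1,\CT_2$ in some coherent family, with $\CT_1$ becoming identified with $\CT_t$. Writing $\CT_2^*\in\KT^{\le\ell}_{\max}$ for the unique $\ell$-maximal extension of $\CT_2$, a short truncation argument shows that every $(\CT_t,\CT_2^*)$-separation of order strictly less than $\ord(\CT_2)$ is already a $(\CT_1,\CT_2)$-separation, so $Z(\CT_1,\CT_2)=Z(\CT_t,\CT_2^*)$; taking $v:=\tau^0(\CT_2^*)$ then yields~\ref{li:dtd2}.
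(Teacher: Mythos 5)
Your construction keeps $\gamma(t):=\gamma^0(t)=\tilde\beta^0(\mathrm{parent}(t),t)$ for each tangle node~$t$, and the entire argument for~\ref{li:dtd2} rests on the ``structural observation'' that every separation placed in $\CN(T^0,\beta^0)$ is a leftmost minimum separation for the relevant pair of tangles. That observation is false in general, and the paper's own construction explicitly flags the point where your plan diverges: the cone $\gamma_\circ(t)$ coming out of Theorem~\ref{theo:candec} is a \emph{minimum} $(\CT_t,\bar{\gamma_\circ(t)})$-separation, but not necessarily a \emph{leftmost} one. Concretely, the separations produced by Lemma~\ref{lem:coherent} are leftmost minimum only \emph{inside the contracted universe} $U\contract_t$; Lemma~\ref{lem:injective} guarantees that their expansions are minimum in $U$, but its proof repairs an arbitrary minimum $(\CT,\CT')$-separation $Y$ by replacing it with either $Y\cap C_i$ (shrinking) or $Y\cup\bar C_i$ (growing). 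When the second case occurs, the genuine leftmost minimum $(\CT,\CT')$-separation in $U$ need not be expressible as an expansion, so the expansion of the leftmost-in-contraction separation can fail to be leftmost in $U$. Thus your identification $\gamma(t)=\gamma^0(t)$ cannot be used to discharge~\ref{li:dtd2}.

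This gap also hides the second piece of work that the paper has to do. Once one replaces $\gamma_\circ(t)$ by the actual leftmost minimum $(\CT_t,\bar{\gamma_\circ(t)})$-separation (which is what~\ref{li:dtd2} requires), the cones may shrink strictly, and the parent–child containment $\gamma(u)\subseteq\gamma(t)$ along $E(T^{(0)})$ can then fail. One must prove (paper's Claim~3) that for each $(t,u)\in E(T^{(0)})$ either $\gamma(u)\subseteq\gamma(t)$ or $\gamma(u)\cap\gamma(t)=\emptyset$, and then iteratively re-attach the ``bad'' children higher up in the tree ($T^{(0)},T^{(1)},\dots,T^{(h)}$) to restore the directed tree decomposition axioms. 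Your proposal avoids this restructuring only because it assumes the (false) structural observation; once that is dropped, the restructuring step is unavoidable, and its correctness is not immediate. Your treatment of~\ref{li:dtd1} and of canonicity is otherwise on the right track and matches the paper's approach.
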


Here canonical means that if $\kappa':2^{U'}\to\NN$ is another
connectivity function and $\CT'_{\textup{root}}$ an $\ell$-maximal
$\kappa'$-tangle, and $(T',\gamma',\tau')$ is the decomposition
computed by our algorithm on input $(\kappa',\CT'_{\textup{root}})$,
then for every isomorphism $f$ from
$(\kappa,\CT_{\textup{root}})$ to $(\kappa',\CT'_{\textup{root}})$,
that is, bijective mapping $f:U\to U'$ with $\kappa(X)=\kappa'(f(X))$
and $X\in \CT_{\textup{root}}\iff f(X)\in \CT'_{\textup{root}}$ for
all $X\subseteq U$, there is an isomorphism $g$ from $T$ to $T'$ such
that that $f(\gamma(t))=\gamma'(g(t))$ for all $t\in V(T)$ and
$X\in\tau^{-1}(t)\iff f(X)\in(\tau')^{-1}(g(t))$ for all $X\subseteq
U,t\in V(T)$. 

\begin{proof}[Proof of Theorem~\ref{theo:dcandec}]
  Without loss of generality we assume that $|\KT^{\le {\ell}}_{\max}|\ge
  2$. Let $\CT_{\text{root}}\in\KT^{\le {\ell}}_{\max}$.

  We start our construction from a canonical undirected tree
  decomposition $(T_{\circ},\beta_{\circ},\tau_{\circ})$ for $\KT^{\le {\ell}}_{\max}$, which we compute by
  the algorithm of Theorem~\ref{theo:candec}. We let $r:=\tau_{\circ}(\CT_{\text{root}})$
  and henceforth think of the tree $T_{\circ}$ as being rooted in $r$. We
  denote the descendant order in this rooted tree by $\dagle_{\circ}$.
  For
  all $t\in V(T_{\circ})$, we let $\gamma_{\circ}(t):=\bigcup_{u\dagri_{\circ}
    t}\beta_{\circ}(u)$. Observe that $\gamma_{\circ}(r)=U$ and
  $\gamma_{\circ}(t)=\tilde\beta_{\circ}(s,t)$ for all nodes $t$ with parent $s$.

  We let
  \[
  V:=\tau_{\circ}(\KT^{\le {\ell}}_{\max})=\{\tau_{\circ}(\CT)\mid \CT\in\KT^{\le {\ell}}_{\max}\},
  \]
  and we let $T^{(0)}$ be the directed tree with vertex set $V(T^{(0)}):=V$ and
  edge set 
  \[
  E(T^{(0)}):=\big\{(t,u)\in V^2\bigmid t\dagsle_{\circ} u\text{ and there
      is no $x\in V$ such that }t\dagsle_{\circ} x\dagsle_{\circ} u\big\}.
  \]
  Let $\dagle^{(0)}$ be the descendant order in $T^{(0)}$. Obviously, $\dagle^{(0)}$
  is the restriction of $\dagle_{\circ}$ to $V$.

  We let $\tau:=\tau_{\circ}$. Note that $\tau$ is a bijection between
  $\KT^{\le {\ell}}_{\max}$ and $V$. For all $t\in V$ we let
  $\CT_t:=\tau^{-1}(t)$.

  Observe that for all $t\in V(T)$ we have
  $\gamma_{\circ}(t)\in\CT_t$: if $t=r$ is the root, we have
  $\gamma_{\circ}(t)=U\in\CT_t$, and if $s$ is the parent of $t$, we have
  $\gamma_{\circ}(t)=\tilde\beta_{\circ}(s,t)\in\CT_t$ by \ref{li:td3}.

  We define $\gamma:V\to 2^U$ as follows: we let $\gamma(r):=U$ and
  for every node $t\in V\setminus\{r\}$ we let $\gamma(t)$ be the
  leftmost minimum $(\CT_t,\bar{\gamma_{\circ}(t)})$-separation. As
  $\gamma_\circ(t)$ is a $(\CT_t,\bar{\gamma_{\circ}(t)})$-separation,
  there is a unique leftmost minimum
  $(\CT_t,\bar{\gamma_{\circ}(t)})$-separation. 
Then
  $\gamma(t)\in\CT_t$ and
  $\kappa(\gamma(t))\le\kappa(\gamma_{\circ}(t))$ and
  $\gamma(t)\subseteq\gamma_{\circ}(t)$. Note that we cannot just let
  $\gamma(t)=\gamma_\circ(t)$, because $\gamma_\circ(t)$ is not
  necessarily a \emph{leftmost} minimum
  $(\CT_t,\bar{\gamma_{\circ}(t)})$-separation, and \ref{li:dtd2}
  requires leftmost minimum separations.

  \begin{claim}
    For all $t\in V\setminus\{r\}$ there is a $t'\in V$ such that
    $t\not\dagle^{(0)} t'$ and $\gamma_{\circ}(t)$ is a minimum
    $(\CT_t,\CT_{t'})$-separation and $\gamma(t)$ is a leftmost minimum
    $(\CT_t,\CT_{t'})$-separation

    \proof Let $t\in V(T)\setminus\{r\}$ and $\CT:=\CT_t$. Let $s$ be the parent of
    $t$ in the rooted undirected tree $(T_{\circ},r)$. Then
    $\gamma_{\circ}(t)=\tilde\beta(s,t)$. By \ref{li:td2}, there are
    tangles $\CT'',\CT'$ such that the oriented edge $(t,s)$ appears
    on the oriented path from $t'':=\tau(\CT'')$ to $t':=\tau(\CT')$
    in $T_{\circ}$, and $\gamma_{\circ}(t)=\tilde\beta(s,t)$ is a minimum
    $(\CT'',\CT')$-separation. Note that $t',t''\in V(T)$ and $t\dagle
    t''$ and $t\not\dagle t'$. Furthermore, $\gamma_{\circ}(t)\in\CT$ is a
    $(\CT,\CT')$-separation.

    By \ref{li:td1}, there is an edge $(v',v)$ on the oriented path
    from $t'$ to $t$ in $T_{\circ}$ such that $\tilde\beta_{\circ}(v',v)$ is a
    minimum $(\CT,\CT')$-separation.
    Then
    \[
    \tilde\beta_{\circ}(v',v)\supseteq\tilde\beta(s,t)=\gamma_{\circ}(t)
    \]
    and $\kappa(\tilde\beta_{\circ}(v,v'))\le\kappa(\gamma_{\circ}(t))$.

    Suppose for contradiction that $\kappa(\tilde\beta_{\circ}(v',v))<
    \kappa(\gamma_{\circ}(t))$. As $\gamma_{\circ}(t)\subseteq
    \tilde\beta_{\circ}(v',v)$, this implies
    $\tilde\beta_{\circ}(v',v)\in\CT''$, and thus $\tilde\beta_{\circ}(v',v)$ is
    also a $(\CT'',\CT')$-separation. This contradicts $\gamma_{\circ}(t)$
    being a minimum $(\CT'',\CT')$-separation.

    Thus  $\kappa(\tilde\beta_{\circ}(v,v'))=\kappa(\gamma_{\circ}(t))$, and thus
    $\gamma_{\circ}(t)$ is a minimum $(\CT,\CT')$-separation as well.

    \medskip
    It remains to prove that $\gamma(t)$ is a leftmost minimum $(\CT,\CT')$-separation.
    Let $Y$ be the leftmost minimum
    $(\CT,\CT')$-separation. Then $\kappa(Y)=\kappa(\gamma_{\circ}(t))$ and $Y\subseteq\gamma_{\circ}(t)$.

    It follows that $Y$ is also the leftmost minimum
    $(\CT,\bar{\gamma_{\circ}(t)})$-separation, and thus $\gamma(t)=Y$. 
    \uend
  \end{claim}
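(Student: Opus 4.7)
My plan is to choose $t'$ using the axioms of the undirected tree decomposition applied to the ``upward'' edge at $t$ in $(T_{\circ},r)$. Let $s$ be the parent of $t$, so that $\gamma_{\circ}(t)=\tilde\beta_{\circ}(s,t)$. Applying \ref{li:td2} to the oriented edge $(t,s)$ yields tangles $\CT'',\CT'\in \KT^{\le\ell}_{\max}$ such that $(t,s)$ lies on the oriented path from $\tau(\CT'')$ to $\tau(\CT')$ in $T_{\circ}$ and $\gamma_{\circ}(t)=\tilde\beta_{\circ}(s,t)$ is a minimum $(\CT'',\CT')$-separation; in particular $\gamma_{\circ}(t)\in\CT''$ and $\bar{\gamma_{\circ}(t)}\in\CT'$. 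Since $\tau(\CT'')$ must sit on the $t$-side of the edge and $\tau(\CT')$ on the $s$-side, and since the root $r$ lies on the $s$-side, setting $t':=\tau(\CT')$ gives $t\not\dagle^{(0)} t'$. As $\gamma_{\circ}(t)\in\CT_t$ by \ref{li:td3} applied to $t$ and its neighbour $s$, it is automatically a $(\CT_t,\CT')$-separation; I still have to prove it is minimum and that $\gamma(t)$ is leftmost minimum for this pair.

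For minimality I argue by contradiction. Suppose some $(\CT_t,\CT')$-separation has order $<\kappa(\gamma_{\circ}(t))$. Applying \ref{li:td1} to the pair $(\CT_t,\CT')$ produces an oriented edge $(v,v')$ on the oriented path in $T_{\circ}$ from $t$ to $t'$ such that $\tilde\beta_{\circ}(v',v)$ is a minimum $(\CT_t,\CT')$-separation, so $\kappa(\tilde\beta_{\circ}(v',v))<\kappa(\gamma_{\circ}(t))$. This path leaves $t$ via $s$ and thereafter stays on the $s$-side of $\{s,t\}$, so $\tilde\beta_{\circ}(v',v)$, being the $v$-side, contains the whole subtree of $t$ and hence $\gamma_{\circ}(t)\subseteq\tilde\beta_{\circ}(v',v)$. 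Using \ref{li:t2}, the inclusion $\gamma_{\circ}(t)\subseteq\tilde\beta_{\circ}(v',v)$ together with $\gamma_{\circ}(t)\in\CT''$ forbids $\bar{\tilde\beta_{\circ}(v',v)}\in\CT''$ (their intersection with $\gamma_{\circ}(t)$ is empty), so \ref{li:t1} forces $\tilde\beta_{\circ}(v',v)\in\CT''$. Combined with $\bar{\tilde\beta_{\circ}(v',v)}\in\CT'$ this yields a $(\CT'',\CT')$-separation of order strictly less than $\kappa(\gamma_{\circ}(t))$, contradicting the minimality guaranteed in the first paragraph.

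For the leftmost property, let $Y$ be the leftmost minimum $(\CT_t,\CT')$-separation. By the minimality just proved, $\kappa(Y)=\kappa(\gamma_{\circ}(t))$ and $Y\subseteq\gamma_{\circ}(t)$. The decisive observation is that, at order $\le \kappa(\gamma_{\circ}(t))$, the $(\CT_t,\CT')$-separations coincide with the $(\CT_t,\bar{\gamma_{\circ}(t)})$-separations: for any $Z\in\CT_t$ with $Z\subseteq\gamma_{\circ}(t)$ and $\kappa(Z)\le\kappa(\gamma_{\circ}(t))$, the case $Z\in\CT'$ is ruled out by \ref{li:t2} applied to $Z$ and $\bar{\gamma_{\circ}(t)}\in\CT'$ (whose intersection is empty), so \ref{li:t1} forces $\bar Z\in\CT'$ and $Z$ is a $(\CT_t,\CT')$-separation. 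Since this equivalence preserves set inclusion, $Y$ is also the leftmost minimum $(\CT_t,\bar{\gamma_{\circ}(t)})$-separation, which is $\gamma(t)$ by its definition.

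The main obstacle is that the auxiliary tangle $\CT''$ is not directly accessible; it is only known to exist by \ref{li:td2}, and its sole role is to certify that $\gamma_{\circ}(t)$ realises the minimum separator order between some pair of tangles in $\KT^{\le\ell}_{\max}$. This is exactly the leverage that lets \ref{li:td1} close the contradiction in the minimality step; the remainder is tangle-axiom bookkeeping.
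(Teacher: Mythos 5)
Your proof is correct and follows essentially the same path as the paper's: invoke \ref{li:td2} at the edge $(t,s)$ to obtain the auxiliary tangle $\CT''$ and the target tangle $\CT'$, use \ref{li:td1} to produce a candidate minimum $(\CT_t,\CT')$-separation $\tilde\beta_{\circ}(v',v)$ lying on the $t'$-to-$t$ path, observe that it contains $\gamma_{\circ}(t)$ and hence lies in $\CT''$, and close the contradiction against the $(\CT'',\CT')$-minimality supplied by \ref{li:td2}; then translate leftmost minimum $(\CT_t,\CT')$-separations into leftmost minimum $(\CT_t,\bar{\gamma_{\circ}(t)})$-separations. Your only deviation is cosmetic: you spell out the tangle-axiom bookkeeping (that $\gamma_{\circ}(t)\subseteq\tilde\beta_{\circ}(v',v)$ forces $\tilde\beta_{\circ}(v',v)\in\CT''$, and that any $Z\in\CT_t$ with $Z\subseteq\gamma_{\circ}(t)$ of suitable order has $\bar Z\in\CT'$) that the paper leaves to the reader, which is a reasonable amount of extra justification and changes nothing structural.
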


  \begin{claim}[resume]
    For all $t,u\in V$ such that $t\dagsle^{(0)}u$ there is
    $Y\subseteq U$ such that $\gamma(u)\subseteq\gamma_{\circ}(u)\subseteq
    Y\subseteq\gamma_{\circ}(t)$ and $Y$ is a minimum
    $(\CT_u,\CT_t)$-separation.

    \proof
    By \ref{li:td1}, there is an edge
  $(t',u')$ on the path from $t$ to $u$ in $T_{\circ}$ such that
  $Y:=\gamma_{\circ}(u')=\tilde\beta(t',u')$ is a minimum
  $(\CT_u,\CT_t)$-separation. Then
  $\gamma(u)\subseteq \gamma_{\circ}(u)\subseteq \gamma_{\circ}(u')=Y\subseteq\gamma_{\circ}(t)$.\uend
  \end{claim}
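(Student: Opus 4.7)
The plan is to invoke axiom \ref{li:td1} of the undirected tree decomposition $(T_\circ, \beta_\circ, \tau)$ directly on the ordered pair of tangles $(\CT_u, \CT_t)$. That axiom yields an oriented edge $(a,b)\in\vec E(T_\circ)$ on the path from $\tau(\CT_u)=u$ to $\tau(\CT_t)=t$ in $T_\circ$ such that $\tilde\beta_\circ(b,a)$ is a minimum $(\CT_u,\CT_t)$-separation, which is precisely the kind of separation I need.

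The next step is to recognise this separation as a $\gamma_\circ$-value. Since $t\dagsle^{(0)} u$ and $\dagle^{(0)}$ is the restriction of the rooted descendant order $\dagle_\circ$ to $V$, the path between $u$ and $t$ in $T_\circ$ is the ancestor chain $u=w_0,w_1,\ldots,w_m=t$ rising toward the root $r$. Every oriented edge $(a,b)$ on this path therefore has $b$ the parent of $a$ in $(T_\circ,r)$, and consequently $\tilde\beta_\circ(b,a)=\bigcup_{x\dagri_\circ a}\beta_\circ(x)=\gamma_\circ(a)$. I then set $Y:=\gamma_\circ(a)$; by the choice of $(a,b)$ this is a minimum $(\CT_u,\CT_t)$-separation.

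Finally, I will verify the inclusion chain $\gamma(u)\subseteq\gamma_\circ(u)\subseteq Y\subseteq\gamma_\circ(t)$. The leftmost inclusion is immediate from the construction of $\gamma(u)$, which was defined as the leftmost minimum $(\CT_u,\bar{\gamma_\circ(u)})$-separation and hence is automatically contained in $\gamma_\circ(u)$. The middle and right inclusions follow from the monotonicity $y\dagle_\circ x\Rightarrow\gamma_\circ(x)\subseteq\gamma_\circ(y)$ applied to the chain $t\dagle_\circ a\dagle_\circ u$, which holds because $a$ lies on the tree path from $t$ to $u$ and this path is a descending chain in $(T_\circ,r)$.

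The only delicate point is orienting the invocation of \ref{li:td1} correctly: calling it with the tangles in the order $(\CT_u,\CT_t)$ rather than $(\CT_t,\CT_u)$ guarantees that the minimum separation produced sits on the descendant side of its edge, which is precisely the form $\gamma_\circ(a)$. Beyond this bookkeeping there are no submodularity, minimality, or leftmost-separation arguments to carry out; unlike the previous claim we do not need to re-optimise $Y$, because the claim only demands an arbitrary minimum $(\CT_u,\CT_t)$-separation sandwiched between $\gamma_\circ(u)$ and $\gamma_\circ(t)$.
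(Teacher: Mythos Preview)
Your proposal is correct and follows essentially the same approach as the paper: invoke \ref{li:td1} for the pair $(\CT_u,\CT_t)$, identify the resulting separation $\tilde\beta_\circ(b,a)$ as $\gamma_\circ(a)$ because the edge lies on the ancestor chain between $u$ and $t$, and read off the inclusion chain from the monotonicity of $\gamma_\circ$ along this chain together with $\gamma(u)\subseteq\gamma_\circ(u)$. The paper phrases the edge as $(t',u')$ on the path from $t$ to $u$ (so your $a$ is their $u'$), but the argument is identical.
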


  \begin{claim}[resume]
    For all $(t,u)\in E(T^{(0)})$, either $\gamma(u)\subseteq\gamma(t)$ or
    $\gamma(t)\cap\gamma(u)=\emptyset$. 

    \proof
    Let $(t,u)\in E(T^{(0)})$. If $t=r$ then we trivially have $\gamma(u)\subseteq
    U=\gamma(t)$. Therefore, we assume that $t\neq r$. Let $t'\in V$ such that
    $t\not\dagle^{(0)}t'$ and $\gamma_{\circ}(t)$ is a minimum
    $(\CT_t,\CT_{t'})$-separation and $\gamma(t)$ is a leftmost
    minimum $(\CT_t,\CT_{t'})$-separation. Such an $t'$ exists by
    Claim~1. Similarly, let $u'\in V$ such that
    $u\not\dagle^{(0)}u'$ and $\gamma_{\circ}(u)$ is a minimum
    $(\CT_u,\CT_{u'})$-separation and $\gamma(u)$ is a leftmost
    minimum $(\CT_u,\CT_{u'})$-separation.

    Suppose for contradiction that 
    $\gamma(u)\not\subseteq\gamma(t)$ and
    $\gamma(t)\cap\gamma(u)\neq\emptyset$, or equivalently,
    $\gamma(u)\not\subseteq\bar{\gamma(t)}$.
    \begin{cs}
      \case1
      $\kappa(\gamma(t))\le\kappa(\gamma(u))$.\\
      Then either $\gamma(t)\in\CT_u$ or $\bar{\gamma(t)}\in\CT_u$.
      \begin{cs}
        \case{1a}
        $\gamma(t)\in\CT_u$.\\
        If $\kappa(\gamma(t)\cap\gamma(u))\le\kappa(\gamma(u))$, then
      $\gamma(t)\cap\gamma(u)\in\CT_u$ and
      $\bar{\gamma(t)\cap\gamma(u)}\in\CT_{u'}$. Thus
      $\gamma(t)\cap\gamma(u)$ is a $(\CT_u,\CT_{u'})$-separation. As
      $\gamma(t)\cap\gamma(u)\subset\gamma(u)$ by our assumption
      $\gamma(u)\not\subseteq\gamma(t)$, this contradicts
      $\gamma(u)$ being a leftmost minimum
      $(\CT_u,\CT_{u'})$-separation. Thus $\kappa(\gamma(t)\cap\gamma(u))>\kappa(\gamma(u))$.

      By submodularity,
      \[
      \kappa(\gamma(t)\cup\gamma(u))<\kappa(\gamma(t))\le\kappa(\gamma_{\circ}(t)).
      \]
      Then
      $\gamma(t)\cup\gamma(u)\in\CT_t$. Furthermore,
      $\gamma(t)\cup\gamma(u)\subseteq
      \gamma_{\circ}(t)\cup\gamma_{\circ}(u)\subseteq\gamma_{\circ}(t)$. Thus
      $\bar{\gamma(t)\cup\gamma(u)}\supseteq\bar{\gamma_{\circ}(t)}\in\CT_{t'}$,
      and $\gamma(t)\cup\gamma(u)$ is a
      $(\CT_t,\CT_{t'})$-separation. This contradicts $\gamma_{\circ}(t)$ being a minimum
      $(\CT_t,\CT_{t'})$-separation. 
      \case{1b}
      $\bar{\gamma(t)}\in\CT_u$. \\
      If $\kappa(\bar{\gamma(t)}\cap{\gamma(u)})\le\kappa(\gamma(u))$, then
      $\bar{\gamma(t)}\cap\gamma(u)\in\CT_u$ and
      $\bar{\bar{\gamma(t)}\cap\gamma(u)}\in\CT_{u'}$. Thus
      $\bar{\gamma(t)}\cap\gamma(u)$ is a $(\CT_u,\CT_{u'})$-separation. As
      $\bar{\gamma(t)}\cap\gamma(u)\subset\gamma(u)$ by our assumption
      $\gamma(u)\not\subseteq\bar{\gamma(t)}$, this contradicts
      $\gamma(u)$ being a leftmost minimum
      $(\CT_u,\CT_{u'})$-separation. Thus $\kappa(\bar{\gamma(t)}\cap\gamma(u))>\kappa(\gamma(u))$

      By posimodularity, 
      \[
      \kappa(\gamma(t)\cap\bar{\gamma(u)})
      <\kappa(\gamma(t)).
      \]
      This implies
      $\bar{\gamma(t)\cap\bar{\gamma(u)}}=\bar{\gamma(t)}\cup\gamma(u)\in\CT_{t'}$.
      By Claim~2, there exists a minimum $(\CT_u,\CT_t)$-separation
      $Y\supseteq \gamma(u)$. We have
      \[
      \gamma(t)\cap\bar Y\cap(\bar{\gamma(t)}\cup\gamma(u))\subseteq
      \gamma(t)\cap\bar{\gamma(u)}\cap(\bar{\gamma(t)}\cup\gamma(u))=\emptyset.
      \]
      Thus $\gamma(t)\cap\bar{\gamma(u)}
      =\bar{\bar{\gamma(t)}\cup\gamma(u)}\in\CT_t$, and
      $\gamma(t)\cap\bar{\gamma(u)}$ is a
      $(\CT_t,\CT_{t'})$-separation. As $\kappa(\gamma(t)\cap\bar{\gamma(u)})
      <\kappa(\gamma(t))$, this contradicts $\gamma(t)$
      being a minimum $(\CT_t,\CT_{t'})$-separation.
      \end{cs}
      
      \case2
      $\kappa(\gamma(t))>\kappa(\gamma(u))$.\\
      Then either $\gamma(u)\in\CT_t$ or $\bar{\gamma(u)}\in\CT_t$. As
      above, let $Y\supseteq \gamma(u)$ be a minimum
      $(\CT_u,\CT_t)$-separation. 

      Then $\bar Y\in\CT_t$, and as $\bar Y\cap\gamma(u)=\emptyset$,
      we have $\bar{\gamma(u)}\in\CT_t$.

        If $\kappa(\gamma(t)\cap\bar{\gamma(u)})\le\kappa(\gamma(t))$,
        then $\gamma(t)\cap\bar{\gamma(u)}$ is a
        $(\CT_t,\CT_{t'})$-separation. As
        $\gamma(t)\cap\bar{\gamma(u)}\subset\gamma(t)$, this
        contradicts $\gamma(t)$ being a leftmost minimum $(\CT_t,\CT_{t'})$-separation.  Hence
        $\kappa(\gamma(t)\cap\bar{\gamma(u)})>\kappa(\gamma(t))$. 

        By posimodularity,
        \[
        \kappa(\bar{\gamma(t)}\cap\gamma(u))<\kappa(\gamma(u)).
        \]
        We
        have
        $
        \bar{\bar{\gamma(t)}\cap\gamma(u)}=\gamma(t)\cup\bar{\gamma(u)}\in\CT_{u'},
        $
        and as $\gamma(u)$ is a minimum $(\CT_u,\CT_{u'})$-separation,
        this implies
        $\bar{\gamma(t)}\cap\gamma(u)\not\in\CT_{u}$. Hence
        $\gamma(t)\cup\bar{\gamma(u)}\in \CT_u$. 

        Suppose for contradiction that
          $\kappa(\gamma(t)\cap\gamma(u))\le\kappa(\gamma(u))$.
          Then either $\gamma(t)\cap\gamma(u)\in\CT_u$ or
          $\bar{\gamma(t)\cap\gamma(u)}\in\CT_u$, and as
          \[
          (\gamma(t)\cup\bar{\gamma(u)})\cap\gamma(u)
          \cap\bar{\gamma(t)\cap\gamma(u)}
          =\gamma(t)\cap\gamma(u)
          \cap\bar{\gamma(t)\cap\gamma(u)}=\emptyset,
          \]
          we have $\gamma(t)\cap\gamma(u)\in\CT_u$. Now we continue
          exactly as in Case~1a: w e have 
          $\bar{\gamma(t)\cap\gamma(u)}=\bar{\gamma(t)}\cup\bar{\gamma(u)}\in\CT_{u'}$,
          which contradicts $\gamma(u)$ being a leftmost minimum
          $(\CT_u,\CT_{u'})$-separation. Thus
          $\kappa(\gamma(t)\cap\gamma(u))>\kappa(\gamma(u))$.

      By submodularity,
      $\kappa(\gamma(t)\cup\gamma(u))<\kappa(\gamma(t))\le\kappa(\gamma_{\circ}(t))$. Then
      $\gamma(t)\cup\gamma(u)\in\CT_t$. Furthermore,
      $\gamma(t)\cup\gamma(u)\subseteq
      \gamma_{\circ}(t)\cup\gamma_{\circ}(u)\subseteq\gamma_{\circ}(t)$. Thus
      $\bar{\gamma(t)\cup\gamma(u)}\supseteq\bar{\gamma_{\circ}(t)}\in\CT_{t'}$,
      and $\gamma(t)\cup\gamma(u)$ is a
      $(\CT_t,\CT_{t'})$-separation. This contradicts $\gamma_{\circ}(t)$ being a minimum
      $(\CT_t,\CT_{t'})$-separation. 
      \uend
    \end{cs}
  \end{claim}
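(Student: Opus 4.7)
The plan is to argue by contradiction: assume simultaneously that $\gamma(u)\not\subseteq\gamma(t)$ and $\gamma(t)\cap\gamma(u)\neq\emptyset$ (equivalently, $\gamma(u)\not\subseteq\bar{\gamma(t)}$), and derive a contradiction. The case $t=r$ is trivial because $\gamma(r)=U$, so I may assume $t\neq r$, which allows me to apply Claim~1 to $t$.

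First I would gather the witnesses. Applying Claim~1 to $t$ and to $u$ separately yields tangles $\CT_{t'},\CT_{u'}$ with $t\not\dagle^{(0)}t'$ and $u\not\dagle^{(0)}u'$, such that $\gamma_\circ(t)$ is a minimum $(\CT_t,\CT_{t'})$-separation and $\gamma(t)$ is the leftmost minimum such separation, and analogously for $u$ and $u'$. From Claim~2 I would also extract a minimum $(\CT_u,\CT_t)$-separation $Y$ with $\gamma(u)\subseteq Y\subseteq\gamma_\circ(t)$; this auxiliary set is crucial for pinpointing tangle membership when submodularity alone does not reveal it.

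Next I would split on $\kappa(\gamma(t))$ versus $\kappa(\gamma(u))$, and in each branch use tangle axiom \ref{li:t1} to decide whether $\gamma(t)$ or $\bar{\gamma(t)}$ lies in $\CT_u$ (respectively $\gamma(u)$ or $\bar{\gamma(u)}$ lies in $\CT_t$). In every subcase the strategy is the same: out of the four boolean combinations $\gamma(t)\cap\gamma(u)$, $\gamma(t)\cup\gamma(u)$, $\gamma(t)\cap\bar{\gamma(u)}$, $\bar{\gamma(t)}\cap\gamma(u)$, I would identify one that either (a) is a $(\CT_u,\CT_{u'})$-separation strictly contained in $\gamma(u)$, contradicting the leftmost-minimality of $\gamma(u)$; (b) is a $(\CT_t,\CT_{t'})$-separation strictly contained in $\gamma(t)$, contradicting the leftmost-minimality of $\gamma(t)$; or (c) is a $(\CT_t,\CT_{t'})$-separation of order strictly below $\kappa(\gamma_\circ(t))$, contradicting the minimality of $\gamma_\circ(t)$. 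The containments needed for (a)--(c) come from the standing assumption that the two sets $\gamma(t),\gamma(u)$ are in ``general position'', together with $\gamma(u)\subseteq\gamma_\circ(u)\subseteq\gamma_\circ(t)$. Tangle membership of the new set is verified by applying axiom \ref{li:t2} to a suitable triple, typically involving $\bar{\gamma_\circ(t)}\in\CT_{t'}$ on the $t$-side and $Y$ on the $u$-side.

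The hardest subcase will be $\bar{\gamma(t)}\in\CT_u$ under $\kappa(\gamma(t))\le\kappa(\gamma(u))$ (and its mirror when the inequality reverses), since pure submodularity on $\gamma(t)$ and $\gamma(u)$ is not strong enough. There I would invoke posimodularity to obtain $\kappa(\gamma(t)\cap\bar{\gamma(u)})<\kappa(\gamma(t))$, then use the auxiliary separation $Y\supseteq\gamma(u)$ from Claim~2 to certify $\gamma(t)\cap\bar{\gamma(u)}\in\CT_t$ by applying \ref{li:t2} to the triple $\gamma(t),\bar Y,\bar{\gamma(t)}\cup\gamma(u)$, which has empty joint intersection; this lands me in conclusion (b) above. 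Every other subcase should reduce to one submodularity (or posimodularity) step plus one invocation of \ref{li:t2}, so tracking which tangle each modified set belongs to --- and not accidentally losing the relation $\gamma(t)\subseteq\gamma_\circ(t)$ --- will be the main bookkeeping obstacle.
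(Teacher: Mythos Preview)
Your plan is correct and follows the paper's proof almost exactly: the same witnesses from Claims~1 and~2, the same case split on $\kappa(\gamma(t))$ versus $\kappa(\gamma(u))$, the same subcases on tangle membership, and the same use of posimodularity together with the auxiliary $Y$ in the hard subcase (your triple $\gamma(t),\bar Y,\bar{\gamma(t)}\cup\gamma(u)$ is precisely the one the paper uses). Be aware that the case $\kappa(\gamma(t))>\kappa(\gamma(u))$ is not a clean mirror of Case~1b but needs an extra round---first forcing $\bar{\gamma(u)}\in\CT_t$ via $Y$, then a posimodularity step, then a submodularity step---so your ``one step plus one invocation of \ref{li:t2}'' estimate is slightly optimistic there, though the toolkit you have assembled handles it.
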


  To obtain a treelike decomposition, which is supposed to satisfy
  $\gamma(u)\subseteq\gamma(t)$ whenever $u$ is a child of $t$, we
  need to restructure the tree, moving children $u$ of a node $t$ with
  $\gamma(u)\not\subseteq\gamma(t)$, and thus
  $\gamma(u)\cap\gamma(t)=\emptyset$ by Claim~3, upwards in the tree and
  attaching them as children to the first ancestor $s$ of $t$ such
  that $\gamma(u)\subseteq\gamma(s)$. To
  avoid inconsistencies, we do this by induction.

  We shall define a sequence of directed trees $T^{(i)}$, for $i$ ranging from
  $0$ to some $h$ to be determined, such
  that $V(T^{(i)})=V$ and the following conditions
  are satisfied for all $i$:
  \begin{eroman}
     \item if $i\ge 1$, then the descendant order $\dagle^{(i)}$ of
       $T^{(i)}$ is a subset of the descendant order $\dagle^{(i-1)}$ of
       $T^{(i-1)}$, that is, if $t\dagle^{(i)} u$ then
       $t\dagle^{(i-1)} u$;
     \item for all $(t,u)\in E(T^{(i)})$, either
       $\gamma(u)\subseteq\gamma(t)$ or $\gamma(u)\cap\gamma(t)=\emptyset$.
  \end{eroman}
  We have already defined $T^{(0)}$, and it clearly satisfies
  conditions (i) and (ii): condition (i) is void for $i=0$ and condition (ii)
  follows from Claim~3.

  So let us assume we have defined
  $T^{(i)}$ for some $i\ge0$. Let us call a node $u\in
  V\setminus\{r\}$ \emph{bad} in $T^{(i)}$ if
  $\gamma(u)\not\subseteq\gamma(t)$ for the parent $t$ of $u$. If
  there are no bad nodes in $T^{(i)}$, we let $h:=i$ and stop the
  construction. Otherwise, let $U^{(i)}$ be the set of all
  $\dagle^{(i)}$-maximal bad nodes. Note that the nodes $u\in U^{(i)}$ are mutually
  incomparable with respect to the order $\dagle^{(i)}$.
  
  To define $T^{(i+1)}$, for every $u\in U^{(i)}$, we do the
  following:
  \begin{itemize}
  \item we delete the edge from $u$ to its parent;
  \item we let $s_u$ be the the last (that is, $\dagle^{(i)}$-maximal) node 
    on the path from the root $r$ to $u$ such that
    $\gamma(u)\subseteq\gamma(s_u)$ (such a node exists because
    $\gamma(u)\subseteq U=\gamma(r)$);
  \item we add an edge from $s_u$ to $u$.
  \end{itemize}
  Then $T^{(i+1)}$ satisfies condition (i), because
  \[
  \dagle^{(i+1)}=\dagle^{(i)}\setminus\big\{(v,w)\bigmid\exists u\in
  U^{(i)}:\;s_u\dagsle^{(i)}v\dagsle^{(i)}u\text{ and }u\dagle^{(i)}
  w\big\}.
  \]
  To see that $T^{(i+1)}$ satisfies (ii), let $(t,u)\in
  E(T^{(i+1)})$. If $u\in U^{(i)}$, then we have $t=s_u$ and thus
  $\gamma(u)\subseteq\gamma(t)$. Otherwise
  $(t,u)\in E(T^{(i)})$, and we have $\gamma(u)\subseteq\gamma(t)$ or
  $\gamma(u)\cap\gamma(t)=\emptyset$ by the induction hypothesis.

  The construction terminates after at most height of $T^{(0)}$ many
  steps, because in each step the maximum height of a bad node decreases.

  We let $T:=T^{(h)}$. 

  \begin{claim}[resume]
    $(T,\gamma)$ is a directed tree decomposition of $\kappa$.

    \proof
    It is immediate from the construction that $T$ is a directed tree,
    that $\gamma(u)\subseteq\gamma(t)$ for all $(t,u)\in E(T)$, and
    that $\gamma(r)=U$ for the root $r$. It remains to prove that for
    all $t\in V$ and distinct children $u_1,u_2$ in $T$ we have
    $\gamma(u_1)\cap\gamma(u_2)=\emptyset$. Indeed, it follows from
    (i) that $t\dagle^{(0)} u_1,u_2$ and thus $t\dagle_{\circ} u_1,u_2$. Let
    $t'\in V(T_{\circ})$ be $\dagle_{\circ}$-maximal such that $t\dagle_{\circ} t'\dagle_{\circ}
    u_1,u_2$, and for $i=1,2$, let $u_i'$ be the child of $t'$ in
    $T_{\circ}$ such that $t'\dagsle_{\circ} u_i'\dagle_{\circ} u_i$. Then 
    \[
    \gamma(u_i)\subseteq\gamma_{\circ}(u_i)\subseteq\gamma_{\circ}(u_i')=\tilde\beta_{\circ}(t',u_i')
    \]
    and $\tilde\beta_{\circ}(t',u_1')\cap
    \tilde\beta_{\circ}(t',u_2')=\emptyset$. Thus
    $\gamma(u_1)\cap\gamma(u_2)=\emptyset$.
    \uend
  \end{claim}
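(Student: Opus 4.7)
The plan is to derive the directed decomposition from the canonical undirected tree decomposition $(T_\circ,\beta_\circ,\tau_\circ)$ for $\KT^{\le\ell}_{\max}$ produced by Theorem~\ref{theo:candec}. Since that decomposition is canonical, rooting it at $r:=\tau_\circ(\CT_{\textup{root}})$ is canonical given $\CT_{\textup{root}}$, and so is the resulting descendant order $\dagle_{\circ}$ and cone function $\gamma_\circ(t)=\bigcup_{u\dagri_{\circ}t}\beta_\circ(u)$. The directed decomposition we want has no hub nodes and its cones must be \emph{leftmost} minimum separations, so two adjustments are needed: eliminate the hub nodes, and replace each $\gamma_\circ(t)$ by a leftmost minimum separation of the same order pointing at the same tangle.

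First I would set $V:=\tau_\circ(\KT^{\le\ell}_{\max})$ and define a preliminary tree $T^{(0)}$ on $V$ by making $(t,u)$ an edge whenever $u$ is the nearest $\dagle_\circ$-descendant of $t$ that lies in $V$. Using $\tau:=\tau_\circ|_V$ and $\CT_t:=\tau^{-1}(t)$, I would set $\gamma(r):=U$ and for every other $t$ let $\gamma(t)$ be the leftmost minimum $(\CT_t,\bar{\gamma_\circ(t)})$-separation; this exists because $\gamma_\circ(t)$ is itself such a separation. Using \ref{li:td1} and \ref{li:td2} of the undirected decomposition together with Lemma~\ref{lem:lm-tansep}, I would show that $\gamma_\circ(t)$ is in fact a minimum $(\CT_t,\CT_{t'})$-separation for some $\dagle_\circ$-cousin $\CT_{t'}$, and deduce that $\gamma(t)\subseteq\gamma_\circ(t)$ is a leftmost minimum $(\CT_t,\CT_{t'})$-separation. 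This already delivers \ref{li:dtd2}, and combined with a descendant-containment lemma (``for $t\dagsle^{(0)}u$ there is a minimum $(\CT_u,\CT_t)$-separation sandwiched between $\gamma(u)$ and $\gamma_\circ(t)$'') it will give \ref{li:dtd1} once the correct tree is in place.

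The main technical step, and the one I expect to be hardest, is a dichotomy: for every edge $(t,u)\in E(T^{(0)})$, either $\gamma(u)\subseteq\gamma(t)$ or $\gamma(u)\cap\gamma(t)=\emptyset$. I would prove this by contradiction, applying submodularity and posimodularity to the pairs $\{\gamma(t),\gamma(u)\}$ and the containing separation $\gamma_\circ(t)$: if $\gamma(u)$ neither lies in $\gamma(t)$ nor is disjoint from $\gamma(t)$, then one of $\gamma(t)\cap\gamma(u)$, $\gamma(t)\cup\gamma(u)$, or $\gamma(t)\cap\bar{\gamma(u)}$ would beat $\gamma(t)$ or $\gamma(u)$ as a minimum (or leftmost minimum) separation for one of the tangle pairs supplied by the previous step. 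A Lemma~\ref{lem:intersection}-style lemma, used to pin down which of $\gamma(t)$ and $\bar{\gamma(t)}$ lies in $\CT_u$, organizes the many cases; the key is to exploit the fact that \emph{both} $\gamma(t)$ and $\gamma(u)$ are leftmost minimum separations, which keeps strict inclusions strictly smaller in the inclusion-order and thus yields the contradictions.

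Given the dichotomy, I would restructure $T^{(0)}$ into the final tree $T$ by iteratively re-parenting every ``bad'' child $u$ (whose cone is disjoint from its parent's cone) to the deepest ancestor $s_u$ with $\gamma(u)\subseteq\gamma(s_u)$; such an $s_u$ exists because $\gamma(r)=U$. Working top-down on $\dagle^{(0)}$-maximal bad nodes ensures termination (the maximum depth of a bad node drops each round), and an easy induction shows that (i) $\gamma(u)\subseteq\gamma(t)$ along every new edge, and (ii) the restructured descendant order stays inside $\dagle^{(0)}$. Hence $T:=T^{(h)}$ is a directed tree decomposition: disjointness of sibling cones $\gamma(u_1),\gamma(u_2)$ follows by tracing back to disjointness of the corresponding $\tilde\beta_\circ$-values at their last common ancestor in $T_\circ$. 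Finally, \ref{li:dtd2} is preserved because the cone values $\gamma(t)$ and the witnessing tangle $\CT_{t'}$ never change, and \ref{li:dtd1} is obtained by combining the descendant-containment lemma with the reparenting rule (if $u\not\dagle t$ in $T$, an ancestor-swap exhibits a minimum $(\CT_u,\CT_t)$-separation containing $\gamma(u)$). Canonicity is immediate from the canonicity of Theorem~\ref{theo:candec}, the canonical choice of $r$ from $\CT_{\textup{root}}$, and the uniqueness of leftmost minimum separations; the polynomial running time follows from Theorem~\ref{theo:ds} and the fact that each leftmost minimum separation can be computed via submodular minimization.
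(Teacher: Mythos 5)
Your argument for this claim is the same as the paper's: the only substantive condition to check is that sibling cones $\gamma(u_1),\gamma(u_2)$ are disjoint, and both you and the paper obtain this by passing to the last common $\dagle_{\circ}$-ancestor $t'$ of $u_1,u_2$ in $T_{\circ}$ and tracing $\gamma(u_i)\subseteq\gamma_{\circ}(u_i)\subseteq\tilde\beta_{\circ}(t',u_i')$, where the two sets $\tilde\beta_{\circ}(t',u_1')$, $\tilde\beta_{\circ}(t',u_2')$ are disjoint.

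One case is glossed over, both in your sketch and in the paper's text: the re-parenting step can produce siblings $u_1,u_2$ of a node $t$ in $T$ that are $\dagle_{\circ}$-\emph{comparable}. Concretely, if $T^{(0)}$ is the directed path $r\dagsle^{(0)}u_1\dagsle^{(0)}u_2$ with $\gamma(u_2)\cap\gamma(u_1)=\emptyset$, then $u_2$ is bad and is re-parented to $r$ while $u_1$ stays a child of $r$, so $u_1,u_2$ are siblings of $r$ in $T$ with $u_1\dagsle_{\circ}u_2$. For such a pair, the $\dagle_{\circ}$-maximal $t'$ with $t\dagle_{\circ}t'\dagle_{\circ}u_1,u_2$ is $u_1$ itself, and there is no child $u_1'$ of $t'$ with $t'\dagsle_{\circ}u_1'\dagle_{\circ}u_1$, so the chain of inclusions cannot be written for $i=1$. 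The disjointness is still true here, but it needs a different justification: one has $\gamma(u_2)\not\subseteq\gamma(u_1)$ (because $s_{u_2}=t$ was chosen $\dagle$-maximal with $\gamma(u_2)\subseteq\gamma(s_{u_2})$, so no $V$-node strictly between $t$ and $u_2$ on the relevant path can contain $\gamma(u_2)$), and then the dichotomy of Claim~3 --- whose proof never actually uses that $(t,u)$ is an edge of $T^{(0)}$ rather than an arbitrary pair with $t\dagsle^{(0)}u$ --- forces $\gamma(u_1)\cap\gamma(u_2)=\emptyset$. It would be worth either spelling out this case explicitly or arguing that it cannot occur.
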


  We have defined $\tau$ and $V=V(T)$ such that $\tau$ is a bijection between $\KT^{\le
    {\ell}}_{\max}$ and $V$. It remains to prove that $(T,\gamma,\tau)$
  satisfies \ref{li:dtd1} and \ref{li:dtd2}. Recall that
  $\tau^{-1}(t)=\CT_t$ and $\tau^{-1}(u)=\CT_u$.

  To prove \ref{li:dtd2}, let $t\in V\setminus\{r\}$. By Claim~1,
  there is a $u\in V$ such that $t\not\dagle^{(0)}u$ and $\gamma(t)$ is a
  leftmost minimum $(\CT_t,\CT_u)$-separation. As
  $\dagle$ is a subset of $\dagle^{(0)}$, we also have
  $t\not\dagle u$.

  To prove \ref{li:dtd1}, let $t,u\in V(T)$ such that $u\not\dagle t$.
  If $t\dagsle u$ then $t\dagsle^{(0)}u$, and thus by Claim~2, there is a
  $Y\supseteq\gamma(u)$ that is a minimum
  $(\CT_u,\CT_t)$-separation. In the following, we
  assume that $t\not\dagle u$. 
  Then
  $\gamma(t)\cap\gamma(u)=\emptyset$, because $(T,\gamma)$ is a
  directed tree decomposition. Moreover, $\gamma(t)\in\CT_t$
  and $\gamma(u)\in\CT_u$ by \ref{li:dtd2}. 

  Let $Y$ be a minimum
  $(\CT_u,\CT_t)$-separation. Suppose for contradiction that
  $\kappa(Y\cup\gamma(u))<\kappa(Y)$. Then
  $Y\cup\gamma(u)\in\CT_u$. Moreover, either $Y\cup\gamma(u)\in\CT_t$
  or $\bar{Y\cup\gamma(u)}\in\CT_t$. As
  $\gamma(u)\cap\gamma(t)=\emptyset$, we have
  \[
  \bar Y\cap \gamma(t)\cap (Y\cup\gamma(u))=\emptyset
  \]
  and thus $\bar{Y\cup\gamma(u)}\in\CT_t$. Hence $Y\cup\gamma(u)$ is a
  $(\CT_u,\CT_t)$-separation of order less than $\kappa(Y)$. This
  contradicts the minimality of $Y$.

  Hence  $\kappa(Y\cup\gamma(u))\ge\kappa(Y)$ and therefore, by
  submodularity, $\kappa(Y\cap\gamma(u))\le\kappa(\gamma(u))$. Thus
  $Y\cap\gamma(u)\in\CT_u$, and $Y\cap\gamma(u)$ is a
  $(\CT_u,\gamma_{\circ}(u))$-separation. By the minimality of $\gamma(u)$,
  it follows that $\gamma(u)\subseteq Y\cap\gamma(u)$ and thus
  $\gamma(u)\subseteq Y$.
\end{proof}

\section{Conclusions}
Our main contribution is to make tangles, an important tool in
structural graph theory, algorithmic.

The running time of all of our algorithms is $n^{O(k)}$,
where $k$ is the order of the tangles and $n$ the size of the ground
set. This running time is mainly caused by searching through all
potential bases for some separation. It is a very interesting open
question if we can do the same with fixed-parameter tractable
algorithms. This would even be interesting for specific connectivity
functions such as the connectivity function $\kappa_G$ and the
cut-rank function $\rho_G$ for graphs $G$ or the connectivity function
of representable matroids. Possibly, this can be achieved by arguments
building on Hlinen{\'{y}} and Oum's fpt algorithm for computing branch width
\cite{hlioum08}.

Even though the canonical decomposition theorem has useful
applications as it is, specifically in graph isomorphism testing, it
would be even more useful if one actually understood
it. However, except for a few simple cases of low order, the parts
$\beta(t)$ of the decompositions remain mysterious, for tangle nodes
$t$ and even more so for hub nodes $t$. For the connectivity function
$\kappa_G$ of graphs, Carmesin et al.~\cite{cardiehar+13b} obtained
first results clarifying the structure of the decompositions.

% \bibliographystyle{plain} 
% \bibliography{tangles}

\end{document}